\documentclass[11pt, reqno]{amsart}

\usepackage[margin = 1in]{geometry}

\usepackage{amsmath, amssymb, amsthm}
\usepackage[normalem]{ulem}
\usepackage{mathtools}
\usepackage{microtype}
\usepackage{graphicx}
\usepackage{bbm}
\usepackage{stmaryrd}

\usepackage{tikz}
\usetikzlibrary{arrows.meta,backgrounds,calc,fit}

\allowdisplaybreaks
\usepackage[dvipsnames]{xcolor}
\usepackage[colorlinks = true, linkcolor = MidnightBlue, citecolor = MidnightBlue, urlcolor = MidnightBlue]{hyperref}

\usepackage[backend = biber, style = alphabetic, maxnames = 5, url = false, doi = false, isbn = false]{biblatex}
\numberwithin{equation}{section}

\newcommand{\rank}{\mathrm{rank}}

\newcommand{\eps}{\varepsilon}
\newcommand{\Tr}{\mathrm{Tr}}
\newcommand{\E}{\mathbb{E}}

\newcommand{\Z}{\mathbb{Z}}

\newcommand{\R}{\mathbb{R}}
\newcommand{\C}{\mathbb{C}}
\newcommand{\cc}{{\rm C}}

\newcommand{\h}{{\mathcal H}}
\newcommand{\s}{{\rm S}}
\renewcommand{\r}{{\rm R}}
\renewcommand{\b}{{\mathcal B}}
\newcommand{\e}{{\rm E}}
\newcommand{\one}{{\mathbf 1}}
\renewcommand{\L}{{\mathcal L}}
\renewcommand{\P}{\mathcal{P}}
\renewcommand{\S}{\mathcal S}
\newcommand{\q}{\mathcal Q}
\newcommand{\hh}{{\mathfrak h}}
\newcommand{\btheta}{\boldsymbol{\theta}}
\newcommand{\bm}{\boldsymbol{m}}
\newcommand{\br}{\boldsymbol{r}}
\newcommand{\bs}{\boldsymbol{s}}
\newcommand{\bomega}{\boldsymbol{\omega}}
\newcommand{\ve}{{\varepsilon}}
\newcommand{\cw}{{C_{\rm w}}}

\definecolor{Collective}{RGB}{104, 78, 153} % Purple
\definecolor{Coarse}{RGB}{198, 116, 36}     % Amber
\definecolor{Local}{RGB}{26, 132, 126}      % Teal
\definecolor{BlueGrey}{RGB}{78, 108, 132}   % Blue-grey
\definecolor{Edge}{RGB}{145,149,153}
\definecolor{Ink}{RGB}{38, 42, 46}
\definecolor{Bad}{RGB}{184,72,72}

\tikzset{
    >=Latex,
    every node/.style={text=Ink},
    vertex/.style={circle,draw=Ink,line width=.68pt,fill=white,minimum size=4.7mm,inner sep=0pt},
    sint/.style={draw=Edge,line width=.67pt},
    cedge/.style={draw=Collective,line width=.82pt,opacity=.88},
    medge/.style={draw=Coarse,line width=.82pt,opacity=.88},
    ledge/.style={draw=Local,line width=.82pt,opacity=.90},
    paneltitle/.style={font=\bfseries\large,align=center},
    panelsub/.style={font=\footnotesize,align=center},
    regionlabel/.style={font=\bfseries\scriptsize},
    rlabel/.style={font=\scriptsize,text=Ink!100},
    row/.style={font=\small,align=center,inner sep=0pt,outer sep=0pt},
    staterow/.style={font=\footnotesize,align=center,inner sep=0pt,outer sep=0pt},
    refarrow/.style={{Latex[length=2.2mm,width=1.55mm]}-{Latex[length=2.2mm,width=1.55mm]},draw=Ink!45,line width=.75pt}
}

\tikzset{
    sedge/.style={draw=Local,line width=.82pt,opacity=.90},
    cNedge/.style={draw=Collective,line width=.82pt,opacity=.88},
    redge/.style={draw=Coarse,line width=.82pt,opacity=.88},
    flowarrow/.style={-{Latex[length=2.2mm,width=1.55mm]},draw=Ink!45,line width=.75pt},
     formula/.style={font=\small,align=center,inner sep=0pt,outer sep=0pt},
    tinylabel/.style={font=\scriptsize},
    sub/.style={font=\footnotesize}
}

\newcommand{\Lrow}[1]{%
  \node[row,anchor=center] at (0,2.72) {#1};%
}
\newcommand{\StateMain}[1]{%
  \node[font=\scriptsize,align=center,inner sep=0pt,outer sep=0pt,anchor=center]
    at (0,1.72) {#1};%
}

\newcommand{\systemblob}{%
  \path[fill=BlueGrey,fill opacity=.16,draw=BlueGrey,draw opacity=.32,
        line width=.55pt]
    (-2.12,-.08)
    .. controls (-2.03,.77) and (-1.38,1.02) .. (-.64,.79)
    .. controls (-.03,1.10) and (.82,.96) .. (1.33,.72)
    .. controls (1.92,.83) and (2.22,.27) .. (2.02,-.34)
    .. controls (1.83,-.91) and (.97,-.91) .. (.29,-.72)
    .. controls (-.44,-.94) and (-1.39,-.87) .. (-1.88,-.54)
    .. controls (-2.08,-.41) and (-2.19,-.24) .. cycle;
}

\newcommand{\systemgraph}[1]{%
  \node[vertex] (#1s1) at (-1.50,.08) {};
  \node[vertex] (#1s2) at (-.73,-.36) {};
  \node[vertex] (#1s3) at (.05,.22) {};
  \node[vertex] (#1s4) at (.88,-.27) {};
  \node[vertex] (#1s5) at (1.51,.35) {};
  \draw[sint] (#1s1)--(#1s2);
  \draw[sint] (#1s1)--(#1s3);
  \draw[sint] (#1s2)--(#1s3);
  \draw[sint] (#1s2)--(#1s4);
  \draw[sint] (#1s3)--(#1s4);
  \draw[sint] (#1s3)--(#1s5);
  \draw[sint] (#1s4)--(#1s5);
}

\newcommand{\Sgraph}[1]{%
    \begin{scope}[on background layer]
        \path[fill=BlueGrey,fill opacity=.16,draw=BlueGrey,draw opacity=.32,line width=.50pt]
        (-.82,-.45)..controls(-.90,.08)and(-.55,.52)..(0,.55)
        ..controls(.57,.57)and(.88,.18)..(.80,-.39)
        ..controls(.38,-.67)and(-.37,-.67)..cycle;
    \end{scope}
 
    \node[vertex](#1a)at(-.43,.08){};\node[vertex](#1b)at(.02,-.18){};\node[vertex](#1c)at(.43,.18){};
    \draw[draw=Ink!34,line width=.55pt](#1a)--(#1b)--(#1c)--(#1a);
}

\newcommand{\snap}[6]{%
    \begin{scope}[shift={(#1,#2)}]
    \Sgraph{#6}
        \foreach \i/\px/\py in {1/-1.10/.90,2/-.72/1.32,3/-.28/1.62,4/-1.35/1.52,5/-.85/1.83}{%
            \ifnum\i>#3\relax\else
                \node[vertex,draw=Collective] (#6p\i) at (\px,\py) {};
                \draw[draw=Collective,line width=.72pt] (#6p\i)--(#6a);
        \fi}
 
        \foreach \i/\qx/\qy in {1/1.10/.90,2/.72/1.32,3/.28/1.62,4/1.35/1.52,5/.85/1.83}{%
            \ifnum\i>#4\relax\else
                \node[vertex,draw=Coarse] (#6q\i) at (\qx,\qy) {};
                \draw[draw=Coarse,line width=.72pt] (#6q\i)--(#6c);
        \fi}
 
        \node[tinylabel,anchor=north] at (0,-.82) {$(\nu_1,\nu_2,\varepsilon_K)=(#3,#4,#5)$};
    \end{scope}
}

\usepackage[capitalize, noabbrev, nameinlink]{cleveref}
\usepackage{thmtools}
\crefname{proposition}{Proposition}{Propositions}
\crefname{definition}{Definition}{Definitions}
\crefname{corollary}{Corollary}{Corollaries}
\crefname{theorem}{Theorem}{Theorems}
\crefname{example}{Example}{Examples}
\crefname{remark}{Remark}{Remarks}
\crefname{lemma}{Lemma}{Lemmas}
\crefname{equation}{eq.}{eqs.}
\crefname{discussion}{Discussion}{Discussions}

\theoremstyle{plain}
\newtheorem{theorem}{Theorem}[section]
\newtheorem{proposition}[theorem]{Proposition}
\newtheorem{corollary}[theorem]{Corollary}
\newtheorem{lemma}[theorem]{Lemma}

\theoremstyle{definition}
\newtheorem{definition}[theorem]{Definition}
\newtheorem{example}[theorem]{Example}
\newtheorem{remark}[theorem]{Remark}
\newtheorem{discussion}[theorem]{Discussion}

\newcommand{\sss}{{\zeta}}

\title{De Finetti structures \\
in networks of quantum systems}

\author{Gavin Hull}
\address{Department of Mathematics and Statistics, Memorial University of Newfoundland, St.~John's, NL, Canada}
\curraddr{David R. Cheriton School of Computer Science, University of Waterloo, Waterloo, Ontario, Canada}
\email{g2hull@uwaterloo.ca}

\author{Marco Merkli}
\address{Department of Mathematics and Statistics, Memorial University of Newfoundland, St.~John's, NL, Canada}
\email{merkli@mun.ca}
\thanks{Both authors were supported by a Discovery Grant from NSERC, the Natural Sciences and Engineering Research Council of Canada.}

\date{}

\begin{document}

\begin{abstract}
Consider $N$ identical quantum systems at the vertices of a graph, interacting via pairwise edge couplings. We designate $K$ vertices as a system of interest $\s$ and the remaining $N-K$ as a reservoir $\r$. Under generic connectivity assumptions between $\s$ and $\r$, we show that for $N\to\infty$, the reduced state of $\s$ converges to a convex combination of symmetric product states of $K$ single-vertex density matrices |  called a de Finetti representation. Unlike in standard derivations for symmetric (exchangeable) systems, we establish de Finetti representations for quantum networks, which are inherently non-exchangeable. We construct the representations explicitly and quantify the convergence rate.  We show that  Erd\H os--R\'enyi random graphs exhibit the de Finetti structure:  as the reservoir grows, the reduced state of any fixed finite subgraph converges almost surely to an explicit deterministic product state. As applications, we establish a general area law for the entanglement entropy of $\s$, analyze mean-field and fluctuation scalings of the $\s$-$\r$ interaction, and illustrate our results on qubit networks. We also derive quantitative convergence estimates for information-theoretic quantities, including von Neumann entropy, mutual information, and entanglement within $\s$ and between $\s$ and $\r$.
\end{abstract}

\maketitle
%\tableofcontents

\section{Introduction}
\medskip

The Hilbert space of a complex consisting of $N$ identical quantum systems, each described by a Hilbert space $\hh$, is given by 
\begin{equation}
\label{m1'}
\h_N = \bigotimes _{n=1}^N \hh.
\end{equation} 
A state of the complex is a density matrix $\rho_N$, that is, a non-negative trace-class operator on $\h_N$ normalized as $\Tr(\rho_N)=1$. Classical mixtures of product states of the form
\begin{align}
\label{in3}
\int_{\Omega} \sigma_1(\alpha)\otimes\cdots\otimes\sigma_N(\alpha) d\mu(\alpha),
\end{align}
where $\mu$ is a probability measure on a space $\Omega$ and each $\sigma_j(\alpha)$ is a state on $\hh$, are called separable states. States which cannot be written as such superpositions are called entangled. If $\mu$ is a Dirac point mass then the corresponding state \eqref{in3} is a product $\sigma_1\otimes\cdots\otimes\sigma_N$, representing a physical situation where the subsystems are entirely uncorrelated. More general probability measures $\mu$ may introduce classical correlations in the state. From a dynamical viewpoint, even if a product initial state is prepared, interactions between the subsystems generate entanglement among them.

The reduced state, or marginal, of a group of $K$ among the $N$ subsystems is described by a  density matrix acting on $K$ of the factors of $\h_N$; after relabelling, these may be taken as the first $K$ factors in \eqref{m1'}. The reduced state is obtained by taking the partial trace over the remaining factors,
\begin{align}
\label{in4}
\rho_N^{(K)}  =\Tr_{K+1,\ldots,N} \big(\rho_N\big).  
\end{align}
Suppose now that the size $N$ of the total complex increases,
while we observe a fixed subsystem of size $K$. For each
$N> K$, let $\rho_N$ be a density matrix on $\h_N$ and
let $\rho_N^{(K)}=\Tr_{K+1,\ldots,N}(\rho_N)$ denote its
$K$-site marginal. We consider what happens to $\rho_N^{(K)}$ as $N\rightarrow\infty$. In other words: {\it what is the structure of a state of a fixed subsystem of a very large total complex?}
\smallskip

We answer this question for a network of $N$ identical quantum systems labeled by vertices of a graph, starting out in a state $\rho^{\otimes N}=\rho\otimes\cdots\otimes\rho$, where each subsystem is in the same single-vertex state $\rho$. The subsystems subsequently communicate with each other via pairwise interaction in accordance with the edges of the graph (see Figure \ref{fig:setup}). Every edge between two vertices $i,j$ describes an interaction implemented by an operator $A_i A_j = \mathbf 1\otimes \cdots \otimes\mathbf 1\otimes A\otimes \mathbf 1\otimes\cdots\otimes\mathbf 1\otimes A\otimes \mathbf 1\otimes \cdots\otimes \mathbf 1$, where the fixed single-vertex operator $A$ sits on the factors $i$ and $j$ in the product. This interaction gives rise to a Hamiltonian $H_N$, after which the state becomes 
\begin{align}
\label{in9}
\rho_N =e^{-i TH_N} (\rho^{\otimes N})e^{iTH_N},\qquad \textstyle H_N=\sum_{{\rm edges\, }(i,j) }A_i A_j,
\end{align}
where $T$ is the interaction time. Our focus is  the structure of the reduced state 
\begin{align}
\rho_N^{(K)}=\Tr_{K+1,\ldots, N}\big(e^{-i TH_N} (\rho^{\otimes N})e^{iTH_N}\big),
\end{align}
for large $N$ and fixed (but arbitrary) $K$, $T$. The model can be thought of as an open quantum system $\s$, where the $N-K$ sites represent a reservoir $\r$ to which the $K$ sites of $\s$ are coupled, and where intra-system and intra-reservoir couplings are permitted (see Figure \ref{fig:setup}). 

Put simply, our main finding is that under suitable connectivity conditions on the graph, we have
\begin{align}
\label{in5}
\lim_{N\rightarrow \infty}\rho_N^{(K)} = \int_{\mathcal S_1} \sigma^{\otimes K} d\mu(\sigma), 
\end{align}
where $\mu$ is a probability measure on $\mathcal S_1$, the set of states (density matrices) of a single-vertex system $\hh$. We construct the measure $\mu$  explicitly and give
bounds on the rates of convergence.

Expansions of reduced states in terms of a convex combination of symmetric product states, as given in \eqref{in5}, are a trademark of the Quantum de Finetti Theorem. This theorem deals with a sequence of compatible density matrices $\rho_N$, which are in addition {\it symmetric}, or {\it exchangeable}: for any permutation $\pi$ of $\{1,\ldots,N\}$ one has
\begin{align}
\label{01.3}
W_\pi \rho_N W_\pi^* = \rho_N,    
\end{align}
where $W_\pi$ is the unitary on $\h_N$ determined by $W_\pi(\psi_1\otimes \cdots \otimes \psi_N) = \psi_{\pi(1)}\otimes\cdots \otimes\psi_{\pi(N)}$, $\psi_j\in\hh$. The Quantum de Finetti Theorem asserts that in the limit $N\rightarrow\infty$, the state of the fixed subsystem is a classical superposition of product states with identical factors, expressed as \eqref{in5}. Related finite size Quantum de Finetti Theorems approximate
marginals of symmetric finite systems by such mixtures,
with errors controlled in terms of the ambient size. The precise meaning of \eqref{in5} is analyzed in the literature with various degrees of generality. It is treated in a $C^*$-algebraic setting in \cite{stormer1969symmetric, hudson1976locally}. For finite dimensional systems ($\dim\hh<\infty$) it is proved in \cite{Caves_2002, Lewin, Chiri}, including bounds on the convergence speed. A discussion in connection with quantum information theory is provided in \cite{Brandao,Chri, Chri2}.

While symmetric states play a central role in the quantum theory of bosonic many-body systems, where permutation invariance is intrinsic, they do not arise naturally in the context of quantum networks. Indeed, the state $\rho_N$, \eqref{in9} of the entire network after interaction, is not exchangeable, except for very special graphs and interactions (see also Example \ref{example 2.1}). Showing that de Finetti representations emerge in systems lacking exchangeability is one novelty of the present work. A further novel contribution is that we construct the de Finetti representations {\it explicitly}. This contrasts with existing proofs of the Quantum de Finetti Theorem, which are predominantly based on abstract arguments. The generality of our method and results is well illustrated by Erd\H os--R\'enyi random graphs, in which edges are placed between pairs of vertices randomly and independently, with arbitrary edge probability $0<p<1$. We show that the reduced state of any subsystem of $K$ vertices converges almost surely, as $N \rightarrow \infty$, to the $K$-fold tensor product of a deterministic, single-vertex density matrix $\tau$.

\begin{figure}[t]
    \centering

    \resizebox{0.5\textwidth}{!}{%
        \begin{tikzpicture}[x=1cm,y=1cm]
            %\node[paneltitle] at (0,4.78) {System--Reservoir network};
            %\node[panelsub,text=Ink!60] at (0,4.40) {fixed system, growing reservoir};

            \begin{scope}[on background layer]
                \path[fill=Coarse,fill opacity=.050,draw=Coarse,draw opacity=.31,line width=.58pt]
                (-4.05,-1.55)
                .. controls (-4.58,-.56) and (-4.16,1.25) .. (-2.98,2.02)
                .. controls (-1.55,2.95) and (.67,2.67) .. (1.94,2.03)
                .. controls (3.35,1.31) and (3.96,.14) .. (3.58,-1.10)
                .. controls (3.17,-2.43) and (1.37,-2.79) .. (-.20,-2.55)
                .. controls (-1.88,-2.30) and (-3.45,-2.58) .. (-4.05,-1.55) -- cycle;

                \path[draw=Coarse!64,dashed,line width=.72pt,fill=none]
                (-4.68,-1.96)
                .. controls (-5.28,-.64) and (-4.83,1.68) .. (-3.34,2.62)
                .. controls (-1.63,3.70) and (.92,3.37) .. (2.48,2.55)
                .. controls (4.11,1.68) and (4.82,.23) .. (4.38,-1.31)
                .. controls (3.90,-2.97) and (1.71,-3.48) .. (-.22,-3.17)
                .. controls (-2.19,-2.86) and (-4.03,-3.19) .. (-4.68,-1.96) -- cycle;
                
                \path[fill=BlueGrey,fill opacity=.18,draw=BlueGrey,draw opacity=.36,line width=.62pt]
                 (-1.83,-.62)
                 .. controls (-1.98,.10) and (-1.48,.88) .. (-.74,1.03)
                 .. controls (.06,1.20) and (.83,.80) .. (1.02,.18)
                 .. controls (1.22,-.51) and (.69,-1.10) .. (-.08,-1.21)
                 .. controls (-.87,-1.31) and (-1.68,-1.18) .. (-1.83,-.62) -- cycle;
            \end{scope}

            \node[vertex](s1)at(-1.20,.17){};
            \node[vertex](s2)at(-.58,.66){};
            \node[vertex](s3)at(-.50,-.47){};
            \node[vertex](s4)at(.26,.28){};
            \node[vertex](s5)at(.55,-.62){};
            \draw[sedge](s1)--(s2); 
            \draw[sedge](s1)--(s3); 
            \draw[sedge](s2)--(s3);
            \draw[sedge](s2)--(s4); 
            \draw[sedge](s3)--(s4); 
            \draw[sedge](s3)--(s5); 
            \draw[sedge](s4)--(s5);

            \node[vertex](r1)at(-3.08,1.24){};
            \node[vertex](r2)at(-1.63,1.74){};
            \node[vertex](r3)at(.16,1.83){};
            \node[vertex](r4)at(1.83,1.36){};
            \node[vertex](r5)at(2.83,.33){};
            \node[vertex](r6)at(2.41,-1.16){};
            \node[vertex](r7)at(.90,-1.82){};
            \node[vertex](r8)at(-.88,-1.76){};
            \node[vertex](r9)at(-2.59,-1.28){};
            \node[vertex](r10)at(-3.23,.02){};
            
            % \foreach \a/\b in {r1/r2,r2/r3,r3/r4,r4/r5,r5/r6,r6/r7,r7/r8,r8/r9,r9/r10,r10/r1,r1/r3,r3/r5,r5/r7,r7/r9,r9/r1}{\draw[redge](\a)--(\b);}
            % Representative pair interaction
            \draw[draw=Ink,line width=1.15pt] (r4)--(r5);
            
            \node[font=\small,anchor=south east]
                at ($(r4)+(.19,-.25)$) {$i$};
            
            \node[font=\small,anchor=south west]
                at ($(r5)+(-.2,-.27)$) {$j$};
            
            \node[
                font=\small,
                text=Ink,
                rotate=-45.9,
                inner sep=1pt
            ]
                at ($(r4)!0.7!(r5)+(0,.4)$)
                {$A_iA_j$};
            \foreach \a/\b in {s2/r2,s2/r3,s4/r4,s4/r5,s5/r6,s5/r7,s3/r8,s1/r10}{\draw[cedge](\a)--(\b);}

            \node[regionlabel,text=BlueGrey,anchor=east] at (-1.60,.72) {\large $\s$};
            \node[regionlabel,text=Coarse!92!black] at (2.36,-1.78) {\large $\r_N$};

            \draw[-{Latex[length=1.9mm,width=1.3mm]},draw=Coarse!82!black,line width=.65pt] (-3.46,2.83)--++(-.27,.35);
            \draw[-{Latex[length=1.9mm,width=1.3mm]},draw=Coarse!82!black,line width=.65pt] (-.72,3.48)--++(0,.44);
            \draw[-{Latex[length=1.9mm,width=1.3mm]},draw=Coarse!82!black,line width=.65pt] (2.58,2.74)--++(.28,.34);
            \draw[-{Latex[length=1.9mm,width=1.3mm]},draw=Coarse!82!black,line width=.65pt] (4.48,.35)--++(.44,0);
            \draw[-{Latex[length=1.9mm,width=1.3mm]},draw=Coarse!82!black,line width=.65pt] (3.75,-2.44)--++(.31,-.31);
            \draw[-{Latex[length=1.9mm,width=1.3mm]},draw=Coarse!82!black,line width=.65pt] (.02,-3.28)--++(0,-.44);
            \draw[-{Latex[length=1.9mm,width=1.3mm]},draw=Coarse!82!black,line width=.65pt] (-4.06,-2.49)--++(-.31,-.31);
            \draw[-{Latex[length=1.9mm,width=1.3mm]},draw=Coarse!82!black,line width=.65pt] (-4.79,.10)--++(-.44,0);
            \node[panelsub,text=Coarse!90!black] at (2,3.56) {\large $N\to\infty$};

         %   \node[formula] at (0,-4.8) {\large $E(N)={\color{Local}E_{\s}}\sqcup{\color{Collective}E_{{\rm C},N}}\sqcup{\color{Coarse}E_{\r,N}}$};
         %   \node[panelsub,text=Local] at (-2.72,-4.56) {within $\mathsf S$};
         %   \node[panelsub,text=Collective] at (0,-4.56) {\large between $\s$ and $\r_N$};
         %   \node[panelsub,text=Coarse] at (2.72,-4.56) {within $\r_N$};
        \end{tikzpicture}
    }

    \caption{{\it A network of quantum systems.} Each of the $N$ vertices of a graph hosts an identical quantum system (white discs). Every edge in the graph represents the interaction between the quantum systems located at the corresponding vertices. The interaction between vertices $i,j$ is of the form $A_i A_j$, where $A_i$ is the single-vertex operator $A$ translated to act on the vertex $i$. $K$ among the $N$ vertices are singled out to form an open system $\s$ which is connected to a reservoir $\r_N$ consisting of the remaining $N-K$ vertices. %The edge set splits into a disjoint union of edges within $\s$, connecting $\s$ and $\r$, and within $\r$. 
    As the graph size $N$ is increased, existing edges remain unchanged, while new ones may be added. We analyze the state of the fixed system $\s$ as $N\rightarrow\infty$.}
    \label{fig:setup}
\end{figure}

\subsection{Overview of the main results.} We present the basic results on the convergence of the reduced density matrix  in Section \ref{ch:conv}. In Theorem  \ref{thm2} we construct a completely positive trace preserving (CPTP) map $\P$ acting on bounded operators on $\h_K$, and we estimate the trace distance between $\rho_N^{(K)}$ and  the state $ \mathcal P \big[\rho^{\otimes K}\big]$, where we recall that $\rho$ is the single vertex state before interaction. We construct an explicit Kraus form of the map $\P$, which encodes the connectivity of the system $\s$ with the reservoir $\r$. The action of $\P$ can be interpreted as a non-selective quantum measurement, or equivalently the decoherence, associated to a commuting set of operators $\mathcal L$, called the boundary (or link) operators. $\L$ is determined by the edges linking vertices of $\s$ and $\r$ ($\s$-$\r$ interaction). If this interaction is sufficiently effective (a notion we make precise in \eqref{wellconn}), then our Corollary \ref{cor:deco} asserts that 
\begin{align}
\lim_{N\rightarrow\infty}  \rho_N^{(K)} = e^{-iTH_\s}\mathcal P\big[\rho^{\otimes K}\big]e^{iTH_\s},
\end{align}
where $H_\s$ is the system Hamiltonian, generated by all interactions between vertices within $\s$. The structure of $\mathcal P$ links to  the de Finetti representation. For certain types of $\s$-$\r$ connections, which we call `local', the action of $\P$ preserves the product structure and we obtain (Theorem \ref{thm3}),
\begin{align}
\label{in7}
\lim_{N\rightarrow\infty}\rho_N^{(K)} = \tau^{\otimes K},
\end{align}
where $\tau$ is the explicit single-vertex density matrix obtained by deleting in $\rho$ the coherences associated to the eigenspaces of the interaction operator $A$, which stands for the decoherence, or block diagonalization of $\rho$ in the  eigenbasis of $A$.  For other types of $\s$-$\r$ connections, called `collective', we show that (Theorem \ref{thm:definetti})
\begin{align}
\label{in8}
\lim_{N\rightarrow\infty}  \rho_N^{(K)}= \frac{1}{(2\pi)^{\sss}} \int_{[0,2\pi]^\sss}   \rho(\btheta)^{\otimes K}\, d\btheta,
\end{align}
where the integral is over a torus of dimension $\sss$, the cardinality of the spectrum of the single-vertex operator $A$ governing the interaction between sites, and where $\rho(\btheta)$ is a family of  explicit single-vertex density matrices. Both \eqref{in7} and \eqref{in8} are explicit de Finetti representations. Between the local and the collective types of $\s$-$\r$ interactions above, the interaction can be of `coarse' type leading to an associated interpolating de Finetti representation (Theorem \ref{thm:definetti}). 

In many-body quantum theory the interactions are often scaled with the size of the environment ($N$) to describe different physical regimes. We consider the limit of $\rho_N^{(K)}$ for general such scalings in Section \ref{ch:regimes}, Theorem \ref{thm3.4}. In particular, we find the limit for a scaling factor $1/N$ in front of the $\s$-$\r$ interaction; this makes the $\s$-$\r$ interaction energy of the order one in $N$ (the same as the system energy) and is known as the mean-field scaling. We show that $\rho_N^{(K)}$ converges to a product state of $K$ single-vertex density matrices, as in \eqref{in7} but with a different (still explicit) $\tau$, in which now some coherences are retained. In the so-called fluctuation scaling, the $\s$-$\r$ interaction is scaled by a prefactor $1/\sqrt N$. We prove that, modulo oscillations (denoted $\doteq$), the reduced system state converges to a Gaussian de Finetti superposition,
\begin{align}
\lim_{N\rightarrow\infty} \rho_N^{(K)}   \doteq \frac{1}{\sqrt\pi} \int_{\mathbb R} \sigma(x)^{\otimes K}e^{-x^2}\,dx,
\end{align}
where $\sigma(x)$ is again an explicit single-vertex density matrix. We thus independently recover the fact that Gaussian structures emerge from the fluctuation scaling \cite{FaMe}.

In Section \ref{sec:cano} we present an analysis of several concrete and physically relevant examples. We analyze networks in which every vertex is coupled to every other, as well as networks of spins (qubits), for which we give a detailed, explicit analysis. We then use Erd\H os--R\'enyi random graphs to illustrate how general the graph-theoretic conditions underlying our de Finetti results are. We prove that, for every fixed edge probability $0 < p < 1$,
\[
    \rho_N^{(K)} \longrightarrow \tau^{\otimes K}\qquad \text{almost surely,}
\]
for an explicit, deterministic $\tau$. For $p = 1/2$, every labeled graph on $N$ vertices is equally likely | in this sense our results show that the proportion of all graph geometries for which the de Finetti structure may not emerge, vanishes exponentially quickly as $N\rightarrow\infty$ (see Theorem \ref{thm:erdoes-renyi}).

Section \ref{sec:convinfo} is devoted to the study of the area law for entanglement entropy in networks of quantum systems. We show in Theorem \ref{thm:arealaw} that as $N\rightarrow\infty$, the von Neumann entropy generated by the interaction of $\s$ with $\r$ grows proportionally to the size of the boundary of $\s$, which is the number of vertices in $\s$ which are connected to any vertex in $\r$. When the full network
state is pure, this yields an entanglement entropy area law. This result supports the universality of the area law, which has been verified independently for models in quantum information science, quantum many-body physics and
black hole physics.

In Section \ref{sec:infotheo} we find explicit upper bounds on the convergence speed as $N\rightarrow\infty$, of several information-theoretic quantities associated with $\rho_N^{(K)}$: the conditional von Neumann entropy, mutual information, $\s$-$\r$ entanglement (generalized concurrence), as well as intra-system Wootters concurrence in qubit models. 
\bigskip
\medskip

\section{The model}
\label{sec:model}
\medskip

We consider a network of an increasing number $N$ of components, each a $d$-dimensional quantum system. For fixed $N$ the network is described by the total Hilbert space of pure states, 
\begin{equation}
\label{m1}
\h_N = \bigotimes _{n=1}^N \C^d.
\end{equation} 
The network can be viewed as a {\it graph} with $N$ vertices labeled from $1$ to $N$ in accordance with the factorization \eqref{m1}. Edges in the graph represent pairwise interactions between two vertices (components). We consider {\it simple} graphs, meaning there are no loops and no repeated edges. If two vertices $i$ and $j$ interact, they do so via the product interaction operator
$$
A_i A_j
$$
where $A\in\b(\C^d)$ is a fixed self-adjoint operator (Hermitian matrix) acting on the single-vertex Hilbert space $\C^d$ and  
\begin{align}
\label{ai}
A_i = \one\otimes\cdots\otimes A\otimes\cdots \otimes\one
\end{align}
with $A$ acting on the $i$-th factor of $\h_N$. We denote the graph as $\mathcal{G}_N = (\llbracket 1, N \rrbracket, E(N))$, where we use the notation $\llbracket m,n\rrbracket=\{m,m+1,\ldots,n\}$ for two natural numbers $m\le n$, and where the edge set consists of unordered vertex pairs, represented uniquely by tuples $(i, j)$ with $i < j$:
\begin{equation}
\label{edgeset}
    E(N) \subseteq \left\{(i, j): 1 \le i < j \le N\right\}.
\end{equation}
The sequence of graphs is monotone in the sense that adding a new vertex $N+1$ does not alter any existing edge,
\begin{align}
\label{noch}
    E(N + 1) \cap \big\{(i, j): 1 \le i < j \le N\big\} = E(N).
\end{align}
When the vertex $N + 1$ is added, it may be connected to any subset of the vertices $1, \ldots, N$, while all previous edges remain unchanged. The interaction Hamiltonian is given by
\begin{equation}
\label{eq:hamilt}
H_N = \sum_{(i,j)\in E(N)} A_i A_j.
\end{equation}
Let $\rho$ be a single-vertex density matrix | a positive semi-definite operator on $\C^d$ with trace equal to one | and consider the symmetric product state of the whole network given by the density matrix on $\h_N$, 
$$
\rho^{\otimes N} = \rho\otimes\cdots\otimes\rho.
$$
According to the principles of quantum theory, after an interaction of duration $T > 0$ the network state becomes
\begin{equation}
\label{m2}
\rho_N = U(N) \big( \rho^{\otimes N}\big) U(N)^*,
\end{equation}
where the unitary propagator is given by
\begin{equation}
\label{m7}
U(N)= e^{-iTH_N}.
\end{equation}
The interaction time $T > 0$ is fixed and we do not denote it explicitly in our notation | our main object of interest is not the dynamics {\it per se}, but rather the structure of entanglement created within a subsystem due to the interaction $H_N$ at fixed (generic) $T>0$. 

\begin{example}[$\rho_N$ is generically not exchangeable]
\label{example 2.1}
Let $W_\pi$ be the unitary defined after \eqref{01.3}. From \eqref{m2}, \eqref{m7}, we have 
\begin{align}
W_\pi \rho_N W_\pi^* = e^{-iT \sum_{(i,j)\in E(N)} A_{\pi(i)} A_{\pi(j)}} \big(\rho^{\otimes N}\big) e^{iT \sum_{(i,j)\in E(N)} A_{\pi(i)} A_{\pi(j)}}.
\end{align}
The right-hand side can be interpreted as the state obtained by applying the interactions between vertices of a different graph with permuted edges. Typically, it does not equal $\rho_N$, so this state is not exchangeable.
\end{example}

We consider the first $K$ vertices as a `system of interest' $\s$ while the remaining $N-K$ sites are considered a `reservoir' $\r$; a terminology borrowed from the theory of open quantum systems. The labelling of vertices in our graph is arbitrary, and thus the selection of the first $K$ vertices is done without loss of generality. According to this bipartition we have 
\begin{align}
\label{hs}
\h_N=\h_\s\otimes\h_\r,\qquad \h_\s=\bigotimes_{k=1}^K\C^d,\quad \h_\r=\bigotimes_{n=K+1}^N\C^d.
\end{align}
The edge set \eqref{edgeset} is the disjoint union of edges purely inside $\s$, edges purely inside $\r$ and edges linking $\s$ and $\r$,
\begin{equation}
\label{decompo}
E(N) = E_\s \sqcup E_{\r,N}\sqcup E_{\cc,N},
\end{equation}
where
\begin{equation}
\label{decompo1}
E_\s = \{(i,j)\in E(N) : i,j\in\llbracket 1,K\rrbracket\},\quad  E_{\r,N} = \{(i,j)\in E(N) : i,j\in\llbracket K+1,N\rrbracket\}
\end{equation}
and
$E_{\cc,N}=E(N)\setminus (E_\s\cup E_{\r,N})$ is the set of connecting edges between $\s$ and $\r$. The set $E_\s$ is independent of $N$ because increasing $N$ does not alter the edges within $\s$, \eqref{noch}. The decomposition \eqref{decompo} results in a splitting of the Hamiltonian \eqref{eq:hamilt}, 
\begin{equation}
\label{m10}
H_N = H_\s + H_{\r,N} + H_{\cc,N}
\end{equation}
where
\begin{equation}
\label{eq:ch2-H}
H_\s = \sum_{(i,j)\in E_\s}A_iA_j, \quad H_{\r,N} = \sum_{(i,j)\in E_{\r,N}} A_iA_j, \quad H_{\cc,N} = \sum_{(i,j)\in E_{\cc,N}} A_iA_j.
\end{equation}
Our main object of interest is the reduced state of $\s$, given by the density matrix $\rho^{(K)}_N$ on $\h_\s$ obtained as the `marginal' of the full state $\rho_N$ \eqref{m2},
\begin{equation}
\label{m5}
\rho_N^{(K)} = \Tr_\r(\rho_N) \equiv \Tr_{K+1,\ldots,N}(\rho_N),
\end{equation}
where  $\Tr_\r\equiv \Tr_{K+1,\ldots,N}$ 
denotes the partial trace over the Hilbert space $\h_\r$.

\subsection{Stabilization of the boundary of $\s$ and of its boundary operators.} 
\label{sec:stab}
The {\it boundary} of $\s$ is the collection of the vertices in $\s$ which are connected to any vertex in $\r$. As $\s$ has a fixed number $K$ of vertices there is a smallest natural number $N_*$ such that for all $N\ge N_*$ the boundary is independent of $N$. We denote it as
\begin{equation}
\label{deltaS}
\partial \s = \big\{i\in \llbracket 1,K\rrbracket \, :\, (i,j)\in E(N) \text{\ for some $j\in \llbracket K+1,N\rrbracket$}\big\}.
\end{equation}
We define the {\it boundary operator} $L(j)$ for $j\in \llbracket K+1,N\rrbracket$ by
\begin{equation}
\label{eq:bndryop}
L(j) = \sum_{\substack{1 \le i \le K:\\ (i,j)\in E(N)}} A_i.\qquad 
\end{equation}
This operator is independent of $N$ due to \eqref{noch}. If some reservoir vertex $j\in\llbracket K+1,N\rrbracket$ is not connected to any system vertex $i\in\llbracket 1,K\rrbracket$  then the associated boundary operator $L(j)$ is set equal to zero.
The collection of all boundary operators,
\begin{equation}
\label{LN}
\L_N=\{ L(j) : j\in \llbracket K + 1, N \rrbracket \text{\ with $(i,j)\in E_{\cc,N}$ for some $i \in \llbracket 1, K \rrbracket$} \}
\end{equation}
is a set of commuting operators acting nontrivially on $\h_\s$ only (the first $K$ factors of $\h_N$). The sequence is increasing, $\L_N\subseteq\L_{N+1}$, and every element in $\L_N$ is of the form $\sum_{i=1}^K\varepsilon_i A_i$ for $\varepsilon_i\in\{0,1\}$, so we have $|\L_N|\le 2^K$ for any $N$. Therefore, like the boundary of $\s$, the sequence $\L_N$ stabilizes: there exists a smallest natural number  $N'_*\in\mathbb N$ such that 
\begin{equation}
\label{N*}
\L_N=\L_{N'_*}\quad \text{for all $N\ge N'_*$.}
\end{equation}
Henceforth, we assume that $N\ge \max\{N_*, N_*'\}$. The stabilized boundary set is denoted by $\L$ and its elements are enumerated as 
\begin{equation}
\label{Lstar}
\L = \{ B_\ell : \ell =1,\ldots,|\L|\}. 
\end{equation}
The case $\L=\varnothing$ corresponds to the trivial situation where $\s$ and $\r$ do not interact (no edges linking $\s$ and $\r$). Each element of $\L$ has the form
\begin{equation}
\label{m18}
B_\ell = \sum_{i\in J_\ell} A_i
\end{equation}
for some subset $J_\ell\subset\llbracket  1, K\rrbracket$. As all elements of $\L$ commute, the can be simultaneously diagonalized. There is a unique family of minimal orthogonal projections $P_\lambda$ on $\h_\s$, labelled by $\lambda$ belonging to an index set $\Lambda$, such that 
\begin{equation}
\label{Beval}
B_\ell P_\lambda = b_\ell(\lambda)P_\lambda,
\end{equation}
where the $b_\ell(\lambda)$ are the eigenvalues of $B_\ell$ and the projections satisfy\footnote{The $P_\lambda$ can be constructed as follows. For each $\ell$ take the spectral decomposition $B_\ell=\sum_{\beta\in{\rm spec}(B_\ell)} \beta Q_\ell(\beta)$. For every $\beta_1\in{\rm spec}(B_1),\ldots,\beta_{|\L|}\in{\rm spec}(B_{|\L|})$ set $P(\beta_1,\ldots,\beta_{|\L|}) = Q_1(\beta_1)\cdots Q_{|\L|}(\beta_{|\L|})$ (commuting factors). $P(\beta_1,\ldots,\beta_{|\L|})$ is the orthogonal projection onto the intersection of the corresponding eigenspaces.  Among the list of orthogonal projections $P(\beta_1,\ldots,\beta_{|\L|})$ select all nonzero ones, corresponding to projections onto eigenspaces common to all $B_\ell$, and denote them by $P_\lambda$, $\lambda\in\Lambda$. }
\begin{equation}
\label{resid}
P_\lambda P_\mu=\delta_{\lambda,\mu}P_\lambda,\qquad \sum_{\lambda\in\Lambda}P_\lambda =\one.
\end{equation}
Moreover, the minimal projections separate the operators in the sense that  
\begin{equation}
\label{sep}
\big\{\lambda,\mu\in\Lambda \text{\ and $\lambda\neq\mu$}\big\} \ \Longrightarrow \ \big\{\exists \ell=1,\ldots,|\L|\ \text{\  s.t.\ \,} b_\ell(\lambda)\neq b_\ell(\mu) \big\}.
\end{equation}
We define the set of {\it Bohr energies} of $\L$ as
\begin{equation}
\label{Bohr}
\mathfrak B = \big\{b_\ell(\lambda)-b_\ell(\mu) : \lambda,\mu\in\Lambda\,, \ell=1,\ldots,|\L|\big\}.
\end{equation}
If $|\Lambda|=1$ then we have $P_1= \mathbf 1$, so by \eqref{Beval} $B_1\propto \mathbf 1$ and by \eqref{m18} $A\propto \mathbf 1$, which is a trivial case (the subsystems are not coupled). We therefore consider only  $|\Lambda|\ge 2$, which implies that  $\mathfrak B\neq \{0\}$.
\bigskip
\medskip

\section{ Convergence of the system density matrix}
\label{ch:conv}
\medskip

We analyze the density matrix $\rho_N^{(K)}$, \eqref{m5} for large values of $N$. 
We will show that it has a limit provided $\s$ and $\r$ are ``well coupled''. In what follows, the convergence $\lim_{N\to\infty}\rho_N^{(K)}=\rho$ is understood in the norm of operators, with the choice of norm being immaterial since they are all equivalent in finite dimensions. However, estimates of differences $\rho_N^{(K)}-\rho$ depend on the norm utilized, and it is natural to use the trace norm, 
\begin{align}
\label{trn}
\|O\|_1=\Tr \sqrt{O^* O}.
\end{align}

\begin{definition} \ \\[-3ex]
\label{def:connection number}
\begin{itemize}
\item[$\circ$] The {\it incidence number} of $B_\ell\in\mathcal L$ is defined as
\begin{equation}
\label{incidence number}
\nu(\ell,N) = \big|\{ j=K+1,\ldots,N : L(j) = B_\ell\}\big|.
\end{equation}
It counts the number of vertices in the reservoir $\r$ with boundary operator $B_\ell$. Trivially, $0\le \nu(\ell,N)\le N-K$ for all $\ell$. \\[-.4ex]

\item[$\circ$] The {\it effective  connection number} of the network is defined as
\begin{equation}
\label{connection number}
\ve_K(N)  = \min_{\lambda\neq \mu} \sum_{\ell: \, b_\ell(\lambda)\neq b_\ell(\mu)}\nu(\ell,N).
\end{equation}

\item[$\circ$] The system $\s$ (the sites or vertices  $k=1,\ldots K$) is called {\it well-connected} if 
\begin{equation}
\label{wellconn}
\lim_{N\rightarrow\infty} \ve_K(N) = \infty.
\end{equation}
\end{itemize}
\end{definition}

\medskip

\begin{figure}[t]
    \centering

    \resizebox{0.75\textwidth}{!}{%
        \begin{tikzpicture}[x=1cm,y=1cm]
           % \node[paneltitle] at (0,4.05) {Effective connectivity along a graph sequence};
          %  \node[sub,text=Ink!60] at (0,3.68) {the system is fixed while reservoir incidence classes accumulate};

            \node[sub,anchor=east] at (-.22,-5.3) {Incidence classes (elements of $\L$):};
            \draw[Collective,line width=1.8pt] (-.05,-5.3)--(.25,-5.3);\node[sub,anchor=west,text=Collective] at (.32,-5.3) {$B_1$};
            \draw[Coarse,line width=1.8pt] (.92,-5.3)--(1.22,-5.3);\node[sub,anchor=west,text=Coarse] at (1.29,-5.3) {$B_2$};
        
            \foreach \x/\lab in {-3.6/{\mathcal G_{N_1}},0/{\mathcal G_{N_2}},3.6/{\mathcal G_{N_3}}}{\node[sub,font=\bfseries\footnotesize] at (\x,2.76) {$\lab$};}

            % ----- Well-connected sequence -----
            \node[sub,font=\footnotesize,text=Local,anchor=west] at (-5.08,2.0) {well-connected};
            % \draw[Local!28,line width=.75pt] (-5.08,1.61)--(-4.00,1.61);
            \snap{-3.6}{.40}{1}{1}{1}{g1}
            \snap{0}{.40}{2}{2}{2}{g2}
            \snap{3.6}{.40}{3}{3}{3}{g3}
            \draw[-{Latex[length=1.8mm,width=1.25mm]},draw=Ink!30,line width=.60pt] (-2.15,.76)--(-1.45,.76);
            \draw[-{Latex[length=1.8mm,width=1.25mm]},draw=Ink!30,line width=.60pt] (1.45,.76)--(2.15,.76);
        
            \draw[Ink!11,line width=.55pt] (-5.1,-1.12)--(5.1,-1.12);

            % ----- Not well-connected sequence -----
            \begin{scope}[yshift=-3mm]
                \node[sub,font=\footnotesize,text=Bad,anchor=west] at (-5.08,-1.3) {Not well-connected};
                % \draw[Bad!28,line width=.75pt] (-5.08,-1.80)--(-3.72,-1.80);
                \snap{-3.6}{-3.10}{1}{1}{1}{b1}
                \snap{0}{-3.10}{3}{1}{1}{b2}
                \snap{3.6}{-3.10}{5}{1}{1}{b3}
                \draw[-{Latex[length=1.8mm,width=1.25mm]},draw=Ink!30,line width=.60pt] (-2.15,-2.74)--(-1.45,-2.74);
                \draw[-{Latex[length=1.8mm,width=1.25mm]},draw=Ink!30,line width=.60pt] (1.45,-2.74)--(2.15,-2.74);
            \end{scope}            
        \end{tikzpicture}
    }

    \caption{{\it Effective connectivity along a graph sequence.} $\s$ consisting of $K=3$ vertices connected to an $\r$ of a growing number $N-K=2,4,6$ of vertices. $\L$, \eqref{Lstar} consists of two elements $B_1, B_2$ (each one of them is one of the $A_i$). The system is fixed while reservoir incidence classes accumulate. As the graph grows, the incidence numbers $\nu_1,\nu_2$ associated to $B_1,B_2$, \eqref{incidence number} are growing in the first row, while one of them remains bounded (equal to one) in the bottom row. Accordingly, the effective connection number $\ve_K(N)$,  \eqref{connection number} grows in $N$ or stays bounded. Growing $\ve_K(N)\rightarrow\infty$ gives a well-connected system $\s$.}
    \label{fig:incidence}
\end{figure}

If $\s$ is well-connected, then by \eqref{connection number} the sum grows indefinitely, so $\forall\lambda\neq\mu$, $\exists\ell$ such that $b_\ell(\lambda)\neq b_\ell(\mu)$ and $\nu(\ell,N)\rightarrow\infty$. Conversely, suppose that $\forall\lambda\neq\mu$, $\exists\ell$ such that $b_\ell(\lambda)\neq b_\ell(\mu)$ and $\nu(\ell,N)\rightarrow\infty$. Then $\forall\lambda\neq\mu$, the sum in \eqref{connection number} grows to $\infty$ and so $\s$ is well-connected. This shows the equivalence 
\begin{equation}
\label{equivalence}
\text{$\s$ is well-connected} \quad \Longleftrightarrow \quad \text{$\forall\lambda,\mu\in \Lambda,\ \lambda\neq\mu$, $\exists\ell$ {s.t.}~$b_\ell(\lambda)\neq b_\ell(\mu)$ and $\lim_{N\rightarrow\infty}\nu(\ell,N)=\infty$.}
\end{equation}

\begin{example}  \ \\[-3ex]
\begin{itemize}
\item[$\circ$]  If none of the $\nu(\ell,N)$ grow to $\infty$ with $N\rightarrow\infty$ then $\s$ is not well-connected (see also Figure \ref{fig:incidence}). A necessary condition for well-connectedness is that the number of connections (edges) between the system $\s$ and the reservoir $\r$ satisfies $|E_{\cc,N}|\rightarrow\infty$ as $N\rightarrow\infty$. \\[-.6ex]

\item[$\circ$] If each $B_\ell\in\L$ has an incidence number growing to infinity with $N\rightarrow\infty$, then the system is well-connected (see also Figure \ref{fig:incidence}).
\end{itemize}
\end{example}

Denote the spectral decomposition of the single vertex interaction operator $A$ (see \eqref{ai}) as 
\begin{equation}
\label{m4}
A=\sum_{s=1}^\sss a_s\Pi_s,
\end{equation}
where $\sss$ is the number of distinct eigenvalues $a_s\in\R$  and $\Pi_s$ are the corresponding orthogonal eigenprojections with $\rank\, \Pi_s\ge 1$. The projections are a complete, disjoint family,
\begin{equation}
\label{eq:specres}
\Pi_s\Pi_{s'}=\delta_{s,s'}\Pi_s,\qquad \sum_{s=1}^\sss \Pi_s=\one.
\end{equation}

\begin{example} 
If $\L=\{A_1,\ldots,A_k\}$ for some $1\le k\le K$ (see \eqref{ai}) then $\Lambda=\{1,\ldots,\sss\}^k$ and for $\lambda=(s_1,\ldots,s_k)\in\Lambda$ we have $P_\lambda = \Pi_{s_1}\otimes \cdots \otimes \Pi_{s_k}\otimes \one^{\otimes (K - k)}$. 
\end{example}

The probability that a quantum (von Neumann projective) measurement of the observable $A$ in the single-vertex state $\rho$ (see \eqref{m2}) yields the output $a_s$ is given by 
\begin{equation}
\label{eq:ps}
p_s=\Tr(\Pi_s\rho).
\end{equation}
We have $0\le p_s\le 1$ and $p_1+\cdots+p_\sss=1$. Let $\xi$ be a random variable taking the values $a_1,\ldots,a_\sss$ with probabilities $p_1,\ldots,p_\sss$: $\Pr(\xi=a_s)=p_s$ and denote its characteristic function by 
\begin{equation}
\label{charfunct}
\psi(\Delta) = \E\big[e^{-iT\xi\Delta}\big] = \sum_{s=1}^\sss p_s e^{-iT a_s\Delta},\qquad \Delta\in\R.
\end{equation}
We set ({\it cf.}~\eqref{Bohr})
\begin{equation}
\label{def q}
q= \max_{\Delta\in\mathfrak B\backslash\{0\}} |\psi(\Delta)|.
\end{equation}
A map $\mathcal P$ acting on operators $\mathcal B(\h_\s)$ is called CPTP (completely positive, trace preserving) if it preserves the trace and if for all $n\in\mathbb N$ the map $\mathcal P\otimes{\rm Id}_n$ is positivity preserving  on $\mathcal B(\h_\s)\otimes \mathcal B(\mathbb C^n)$ \cite{Stein,Kraus,MerkliNeira2026}. Recall the definition \eqref{trn} of the trace norm.

\begin{theorem}
\label{thm2}
We have 
\begin{equation}
\label{m23}
\big\| \rho_N^{(K)} - e^{-iTH_\s} \mathcal P \big[\rho^{\otimes K}\big] e^{iTH_\s}\big\|_1 \le |\Lambda| q^{\ve_K(N)},
\end{equation}
where the CPTP map $\mathcal P$ is defined by
\begin{equation}
\label{defP}
\mathcal P[X] =  \sum_{\lambda\in\Lambda} P_\lambda XP_\lambda,\qquad X\in\mathcal B(\h_\s).
\end{equation}
\end{theorem}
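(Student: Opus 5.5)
All terms of $H_N$ are built from commuting operators $A_iA_j$, so they commute. The plan is to exploit this to factor the propagator and integrate out the reservoir one vertex at a time. We write
\[
U(N)=e^{-iTH_\s}\,e^{-iTH_{\cc,N}}\,e^{-iTH_{\r,N}} .
\]
Since $H_{\r,N}$ acts only on $\h_\r$ and commutes with $H_{\cc,N}$, the cyclicity of the partial trace lets us remove the factor $e^{-iTH_{\r,N}}$ inside $\Tr_\r$. The factor $e^{-iTH_\s}$ acts on $\h_\s$ only and can be pulled outside. This leaves
\[
\rho_N^{(K)}=e^{-iTH_\s}\,\Tr_\r\!\big(e^{-iTH_{\cc,N}}\,\rho^{\otimes N}\,e^{iTH_{\cc,N}}\big)\,e^{iTH_\s}.
\]
Because unitary conjugation preserves the trace norm, it suffices to compare the middle operator with $\mathcal P[\rho^{\otimes K}]$.

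Next we factor the connecting interaction as $H_{\cc,N}=\sum_{j=K+1}^N L(j)\otimes A_j$, a sum of commuting terms. Inserting the resolution of the identity $\sum_\lambda P_\lambda=\one$ on both sides, and using that every $L(j)$ is either $0$ or some $B_\ell$ with $L(j)P_\lambda=b_{\ell}(\lambda)P_\lambda$, we obtain
\[
e^{-iTH_{\cc,N}}(P_\lambda\otimes\one)=P_\lambda\otimes\bigotimes_{j>K}e^{-iT\,\beta_j(\lambda)A_j},
\]
where $\beta_j(\lambda)=b_\ell(\lambda)$ if $L(j)=B_\ell$ and $\beta_j(\lambda)=0$ otherwise. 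The reservoir trace then factorizes over the vertices $j$. For each $j$,
\[
\Tr\big(e^{-iT\beta_j(\lambda)A}\,\rho\,e^{iT\beta_j(\mu)A}\big)=\sum_s p_s\,e^{-iTa_s(\beta_j(\lambda)-\beta_j(\mu))}=\psi\big(\beta_j(\lambda)-\beta_j(\mu)\big).
\]
This yields the exact formula
\[
\Tr_\r(\cdots)=\sum_{\lambda,\mu}\Big(\prod_{\ell}\psi\big(b_\ell(\lambda)-b_\ell(\mu)\big)^{\nu(\ell,N)}\Big)\,P_\lambda\,\rho^{\otimes K}P_\mu .
\]
The diagonal terms $\lambda=\mu$ have coefficient $\psi(0)=1$, so they reproduce $\mathcal P[\rho^{\otimes K}]$ exactly. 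For $\lambda\ne\mu$ we use $|\psi|\le 1$ on all factors and $|\psi(\Delta)|\le q$ whenever $\Delta\in\mathfrak B\setminus\{0\}$. The coefficient is then bounded by $q$ raised to $\sum_{\ell:\,b_\ell(\lambda)\ne b_\ell(\mu)}\nu(\ell,N)\ge\ve_K(N)$. Since $q\le1$, this gives the bound $q^{\ve_K(N)}$.

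The main obstacle is to turn this into the trace-norm bound with prefactor $|\Lambda|$ rather than $|\Lambda|^2$. The naive estimate $\|P_\lambda\rho^{\otimes K}P_\mu\|_1\le1$, summed over all pairs, gives $|\Lambda|(|\Lambda|-1)q^{\ve_K}$. To do better we write $X=\rho^{\otimes K}=R^*R$ with $R=\sqrt X$, and bound the off-diagonal part $\sum_{\lambda\ne\mu}c_{\lambda\mu}P_\lambda R^*RP_\mu$ through row sums. First, for each fixed $\lambda$ we have $\|P_\lambda R^*\,(\textstyle\sum_\mu c_{\lambda\mu}RP_\mu)\|_1\le\|RP_\lambda\|_2\,\|\sum_\mu c_{\lambda\mu}RP_\mu\|_2$. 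Second, the blocks $RP_\mu$ have orthogonal ranges on the right, so $\|\sum_\mu c_{\lambda\mu}RP_\mu\|_2^2=\sum_\mu|c_{\lambda\mu}|^2\|RP_\mu\|_2^2\le q^{2\ve_K}$. Summing over $\lambda$ and applying Cauchy--Schwarz gives $\sum_\lambda\|RP_\lambda\|_2\le\sqrt{|\Lambda|}$, hence a bound $\sqrt{|\Lambda|}\,q^{\ve_K}\le|\Lambda|q^{\ve_K}$. If this step turns out to be delicate, an alternative is to represent the off-diagonal part as a difference of averages of unitary conjugations $\sum_\lambda e^{i\theta_\lambda}P_\lambda$, which gives a bound of order $|\Lambda|$ directly.

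Finally, $\mathcal P$ is CPTP because its Kraus operators $P_\lambda$ satisfy $\sum_\lambda P_\lambda^*P_\lambda=\one$.
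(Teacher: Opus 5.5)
Your proof is correct and, in fact, gives a sharper constant than the paper's. The first half reproduces the paper's Theorem~\ref{thm1}: you factorize $U(N)$, drop $e^{-iTH_{\r,N}}$ inside $\Tr_\r$, and obtain the exact weights $\prod_\ell\psi(b_\ell(\lambda)-b_\ell(\mu))^{\nu(\ell,N)}$. The only cosmetic difference there is the order of diagonalization. The paper diagonalizes the reservoir factors $A_j$ first and packages the result as an expectation over i.i.d.\ variables $\xi_j$. You insert the system projections $P_\lambda$ first and factor the reservoir trace vertex by vertex.

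The genuine difference is the trace-norm estimate of the off-diagonal part. The paper pulls out $\max_{\lambda\neq\mu}|w_N(\lambda,\mu)|$ and then bounds every block separately via $\|P_\lambda XP_\mu\|_1\le\sqrt{\Tr(P_\lambda X)\Tr(P_\mu X)}$ and Cauchy--Schwarz, obtaining $\sum_{\lambda\neq\mu}\|P_\lambda XP_\mu\|_1\le|\Lambda|$. This blockwise triangle inequality is intrinsically of order $|\Lambda|$: the sum equals $|\Lambda|-1$ when $X$ is a pure state spread evenly over rank-one blocks.

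You apply the triangle inequality only over rows $\lambda$. Within each row you use that the operators $\sqrt X P_\mu$ are Hilbert--Schmidt orthogonal, since $\Tr(P_\mu XP_{\mu'})=\delta_{\mu\mu'}\Tr(P_\mu X)$. This gives $\|\sum_{\mu\neq\lambda}c_{\lambda\mu}\sqrt XP_\mu\|_2\le q^{\ve_K(N)}$, and together with $\sum_\lambda\|\sqrt XP_\lambda\|_2\le\sqrt{|\Lambda|}$ you arrive at $\sqrt{|\Lambda|}\,q^{\ve_K(N)}$. That implies \eqref{m23} and is strictly better for $|\Lambda|\ge 2$. The same row-wise argument would also tighten the constants wherever the paper reuses \eqref{m17'}, namely Theorem~\ref{thm3.1} and the estimates in Section~\ref{sec:infotheo}.

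Your fallback via averages of unitary conjugations is not needed.
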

The map $\mathcal P$ implements a partial diagonalization with respect to the direct sum (block) decomposition $\h_\s=\bigoplus_{\lambda\in\Lambda} P_\lambda \h_\s$. Its action on a density matrix $X$ represents physically the {\it decoherence}  or equivalently, the output of a {\it non-selective quantum measurement} relative to the family $\{P_\lambda\}$. $\mathcal P$ is also called a {\it twirl} in representation theory and in quantum information theory.

We have $|\Lambda|\le \sss^K$ 
where we recall that $\sss$ is the number of distinct eigenvalues of $A$. The upper bound of \eqref{m23} converges to zero as $N\rightarrow\infty$ if the graph is well-connected, \eqref{wellconn} and if $q<1$. For this inequality we note that the characteristic function \eqref{charfunct} satisfies $|\psi(\Delta)|\le 1$ and
$|\psi(\Delta)| = 1 \Longleftrightarrow |\sum_{s=1}^\sss p_s e^{-iTa_s\Delta}|=1$. The sum is a convex combination of points on the complex unit circle, so its value lies on the unit circle if and only if all the points with nonzero weight are the same,
\begin{align}
|\psi(\Delta)| = 1 
& \Longleftrightarrow (a_s-a_{s'}) \Delta \in \frac{2\pi}{T}\Z, \ \forall s,s'\text{\ s.t. $p_s,p_{s'}\neq 0$}.
\label{psi1}
\end{align}
The case $|\psi(\Delta)|=1$ for $\Delta\neq 0$ thus imposes a delicate lattice on the eigenvalues $a_s$ of $A$, which is generically not satisfied | and if it happens to be satisfied for one $T$ it won't be satisfied for nearby values of $T$. This motivates the following. 

\begin{definition}
The {\it generic regime} is characterized by
\begin{equation}
\label{gencond}
q= \max_{\Delta\in\mathfrak B\backslash\{0\}} |\psi(\Delta)| <1. 
\end{equation}
\end{definition}
The next result follows directly from the above discussion and Theorem \ref{thm2}.

\begin{corollary}[Decoherence]
\label{cor:deco}
Suppose the genericity condition \eqref{gencond} holds and that the system is well-connected, \eqref{wellconn}. Then 
\begin{equation}
\label{m25}
\lim_{N\rightarrow\infty}  \rho_N^{(K)} =  e^{-iTH_\s} \mathcal P\big[\rho^{\otimes K}\big]  e^{iTH_\s}.
\end{equation}
\end{corollary}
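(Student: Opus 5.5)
The corollary follows from Theorem~\ref{thm2} by showing that the right-hand side of \eqref{m23} tends to zero. Since everything lives on the finite-dimensional space $\h_\s$, convergence in trace norm implies convergence in any operator norm. So it suffices to prove
\[
|\Lambda|\, q^{\ve_K(N)}\longrightarrow 0 \qquad (N\to\infty).
\]

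The argument has three steps.

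\textbf{Step 1 ($|\Lambda|$ is bounded).} The factor $|\Lambda|$ does not depend on $N$: we have $|\Lambda|\le \sss^K$, and the stabilized set $\L$, hence $\Lambda$, is fixed for $N\ge \max\{N_*,N_*'\}$.

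\textbf{Step 2 ($0\le q<1$).} The exponent base $q$ is defined through the Bohr energies $\mathfrak B$ of the stabilized family $\L$, so it is also independent of $N$. Because $|\Lambda|\ge 2$, the set $\mathfrak B\setminus\{0\}$ is nonempty and finite, so $q$ is a maximum over finitely many values. By \eqref{gencond}, $0\le q<1$.

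\textbf{Step 3 (the exponent diverges).} Well-connectedness \eqref{wellconn} gives $\ve_K(N)\to\infty$. Hence $q^{\ve_K(N)}\to 0$. If $q=0$, the right-hand side is zero once $\ve_K(N)\ge 1$, which holds for all large $N$.

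Combining the three steps with \eqref{m23} yields \eqref{m25}.

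There is no real obstacle here; the only point needing care is that all the constants in \eqref{m23} are uniform in $N$. This holds because of the stabilization of $\L$ established in Section~\ref{sec:stab}.
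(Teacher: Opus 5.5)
Your proposal is correct and is essentially the paper's own argument: the corollary is read off from Theorem~\ref{thm2}, because $|\Lambda|$ and $q<1$ do not depend on $N$ once $\L$ has stabilized, while $\ve_K(N)\to\infty$, so $|\Lambda|\,q^{\ve_K(N)}\to 0$. Your remarks on uniformity in $N$ and on the case $q=0$ only make explicit what the paper leaves implicit.
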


\begin{example} In a network where every site is connected to every other site we have $E_{\cc,N} = \{(i,j): 1 \le i \le K < j \le N\}$ and $\L=\{ B=\sum_{k=1}^KA_k\}$ is a singleton. The incidence number of $B$ is $\nu(1,N)=N-K$ which makes the system  well-connected.  The Bohr energies are the eigenvalue differences of the operator $B$. As an example, take each vertex to host a qubit ($d=2$ in \eqref{m1}) and let $A=\sigma_x$ be the Pauli $x$-matrix with eigenvalues $\pm 1$. The spectrum of $B$ is ${\rm spec}(B) = \{k_+-k_-\ :\ k_\pm \in \Z_{\ge 0},\, k_++k_-=K\}$, so that $\mathfrak B\subset\mathbb Z$. The genericity assumption \eqref{gencond} holds if $p_+p_- > 0$ and, for all $s,s'$ with $s\neq s'$, we have  $a_s-a_{s'}\not\in \frac{2\pi}{T}\mathbb Q$.
\end{example}

Theorem \ref{thm2} and Corollary \ref{cor:deco} do not require any symmetry properties on the distribution of the edges in the graph. As a result, the state $\rho_N^{(K)}$, and its limit, need not be symmetric under permutations of subsystems (vertices) within $\s$. In order to obtain a representation of these states as superpositions of symmetric states, we should impose some uniformity in how each vertex of $\s$ is coupled to $\r$. For this we introduce the following notions.

\begin{definition}[Local, collective, coarse resolutions of $\L$]
\label{def:loccolres}
Recall the definition of the boundary operators $\L$, \eqref{LN}.
\begin{itemize}
\item[{\bf (i)}] We say $\L$ is {\it locally resolved} if the minimal projections $P_\lambda$ are given by
\begin{equation}
\label{locres}
P_\lambda=\Pi_{s_1}\otimes\cdots \otimes\Pi_{s_K},\quad \text{with \quad $\lambda=(s_1,\ldots,s_K)\in\Lambda= \{1,\ldots, \sss\}^K$}.
\end{equation} 

\item[{\bf (ii)}] We say $\L$ is {\it collectively resolved} if $\L$ is the singleton $\L=\{\sum_{k=1}^KA_k\}$. \medskip

\item[{\bf (iii)}] We say $\L$ is {\it coarsely resolved} if $\llbracket 1,K\rrbracket$ is partitioned into $r+1 \ge 2$ nonempty consecutive blocks, according to 
\begin{align}
\label{2.16}
0 = K_0 < K_1<\cdots<K_r < K_{r + 1} = K
\end{align}
and if \[
    \L=\{ B_1,\ldots,B_{r+1}\}, \qquad B_\ell = \sum_{K_{\ell-1}+1\le i\le K_\ell} A_i.
\] 
\end{itemize}
\end{definition}

\medskip

\begin{figure}[t]
    \centering

    \resizebox{\textwidth}{!}{%
        \begin{tikzpicture}[x=1cm,y=1cm]
            \node[font=\bfseries\large\rm] at (6.45,8.47) {\large };
            \node[panelsub,text=Ink!60,anchor=east] at (2.06,.8) {\large coarser};
            \draw[refarrow] (2.20,.8)--(10.70,.8);
            \node[panelsub,text=Ink!60,anchor=west] at (10.84,.8) {\large finer};
        
            \def\XC{0.00}
            \def\XM{6.45}
            \def\XL{12.90}
        
            % ----- Collective -----
            \begin{scope}[shift={(\XC,0)}]
                \node[paneltitle,text=Collective] at (0,7.36) {Collective};
                \node[panelsub,text=Collective!88!black] at (0,6.99) {\large One resolved block};
        
                \node[vertex] (Cr1) at (-1.26,6.02) {};
                \node[vertex] (Cr2) at (.10,6.28) {};
                \node[vertex] (Cr3) at (1.31,5.83) {};
                \node[rlabel,anchor=east] at (-1.64,6.17) {\large $\r$};
        
                \begin{scope}[shift={(0,4.55)}]
                    \begin{scope}[on background layer]
                        \systemblob
                        
                        \path[draw=Collective,line width=.90pt,fill=Collective,fill opacity=.045]
                        (-1.79,-.02)
                        .. controls (-1.8,.55) and (-1.21,.72) .. (-.62,.60)
                        .. controls (-.05,.80) and (.68,.70) .. (1.16,.70)
                        .. controls (1.7,.67) and (1.92,.60) .. (1.81,-.16)
                        .. controls (1.0,-.75) and (.88,-.56) .. (.29,-.46)
                        .. controls (-0.9,-0.9) and (-1.11,-.57) .. (-1.52,-.38)
                        .. controls (-1.69,-.30) and (-1.80,-.17) .. cycle;
                    \end{scope}
                    
                    \systemgraph{C}
                    
                    \node[regionlabel,text=BlueGrey,anchor=east] at (-2.02,.46) {\large $\s$};
                    \node[regionlabel,text=Collective,anchor=north] at (0,-.86) {\large $I_1=\{1,\ldots,K\}$};
                \end{scope}
            
                \begin{scope}[on background layer]
                    \foreach \r in {1,2,3}{\foreach \s in {1,2,3,4,5}{\draw[cedge] (Cr\r)--(Cs\s);}}
                \end{scope}
            
                \Lrow{\large $\mathcal L=\big\{\sum_{i=1}^{K}A_i\big\}$}
                \StateMain{\large $\mathcal P[\rho^{\otimes K}] =\int_{[0,2\pi]^\sss} \rho(\boldsymbol\theta)^{\otimes K}\,\frac{d\boldsymbol\theta}{(2\pi)^\sss}$}
            \end{scope}
        
            % ----- Coarse -----
            \begin{scope}[shift={(\XM,0)}]
                \node[paneltitle,text=Coarse] at (0,7.36) {Coarse};
                \node[panelsub,text=Coarse!88!black] at (0,6.99) {\large Several resolved blocks};
            
                \node[vertex] (Mr1) at (-1.45,6.10) {};
                \node[vertex] (Mr2) at (-.51,5.74) {};
                \node[vertex] (Mr3) at (.49,6.29) {};
                \node[vertex] (Mr4) at (1.40,5.84) {};
                \node[rlabel,anchor=east] at (-1.82,6.18) {\large $\r$};
            
                \begin{scope}[shift={(0,4.55)}]
                    \begin{scope}[on background layer]
                        \systemblob
        
                        \path[fill=Coarse,fill opacity=.055,draw=Coarse,line width=.82pt]
                        (-1.82,.15)
                        .. controls (-1.73,.50) and (-1.48,.66) .. (-1.18,.61)
                        .. controls (-.89,.56) and (-.63,.31) .. (-.52,.06)
                        .. controls (-.42,-.19) and (-.36,-.50) .. (-0.57,-.64)
                        .. controls (-.93,-.81) and (-1.32,-.72) .. (-1.56,-.55)
                        .. controls (-1.77,-.40) and (-1.88,-.12) .. cycle;
                  
                        \path[fill=Coarse,fill opacity=.055,draw=Coarse,line width=.82pt]
                        (-.24,.32)
                        .. controls (-.04,.62) and (.23,.71) .. (.52,.63)
                        .. controls (.76,.56) and (.98,.50) .. (1.23,.56)
                        .. controls (1.51,.63) and (1.69,.86) .. (1.88,.20)
                        .. controls (1.67,-.50) and (1.42,-.34) .. (1.17,-.5)
                        .. controls (.90,-.68) and (.69,-.56) .. (.43,-.49)
                        .. controls (.16,-.42) and (-.10,-.49) .. (-.25,-.28)
                        .. controls (-.39,-.08) and (-.40,.12) .. cycle;
                    \end{scope}
                
                    \systemgraph{M}
                    \node[regionlabel,text=BlueGrey,anchor=east] at (-2.02,.46) {\large $\s$};
                    \node[regionlabel,text=Coarse,anchor=north] at (0,-.86) {\large $I_1=\{1,2\},\quad I_2=\{3,4,5\}$};
                \end{scope}
            
                \begin{scope}[on background layer]
                    \foreach \r in {1,2}{\foreach \s in {1,2}{\draw[medge] (Mr\r)--(Ms\s);}}
                    \foreach \r in {3,4}{\foreach \s in {3,4,5}{\draw[medge] (Mr\r)--(Ms\s);}}
                \end{scope}
            
                \Lrow{\large $\mathcal L=\{B_1,\ldots,B_{|\L|}\},\  B_\ell=\sum_{i\in I_\ell}A_i$}
                \StateMain{\large $\mathcal P[\rho^{\otimes K}] =\bigotimes_{\ell=1}^{|\L|}\int_{[0,2\pi]^\sss} \rho(\boldsymbol\theta)^{\otimes |I_\ell|}\,\frac{d\boldsymbol\theta}{(2\pi)^\sss}$}
            \end{scope}
        
            % ----- Local -----
            \begin{scope}[shift={(\XL,0)}]
                \node[paneltitle,text=Local] at (0,7.36) {Local};
                \node[panelsub,text=Local!88!black] at (0,6.99) {\large Each vertex resolved separately};
            
                \node[vertex] (Lr1) at (-1.58,6.07) {};
                \node[vertex] (Lr2) at (-.84,5.68) {};
                \node[vertex] (Lr3) at (.02,6.26) {};
                \node[vertex] (Lr4) at (.91,5.80) {};
                \node[vertex] (Lr5) at (1.56,6.17) {};
                \node[rlabel,anchor=east] at (-1.96,6.18) {\large $\r$};
            
                \begin{scope}[shift={(0,4.55)}]
                    \begin{scope}[on background layer]
                        \systemblob
                        
                        \foreach \x/\y in {-1.50/.08,-.73/-.36,.05/.22,.88/-.27,1.51/.35}{
                            \filldraw[fill=Local,fill opacity=.055,draw=Local,line width=.78pt] (\x,\y) circle[radius=.34cm];
                        }
                    \end{scope}
                
                    \systemgraph{L}
                    
                    \node[regionlabel,text=BlueGrey,anchor=east] at (-2.02,.46) {\large $\s$};
                    \node[regionlabel,text=Local,anchor=north] at (0,-.86) {\large $I_1=\{1\},\;I_2=\{2\},\;\ldots,\;I_K=\{K\}$};
                \end{scope}
            
                \foreach \i in {1,2,3,4,5}{\draw[ledge] (Lr\i)--(Ls\i);}
            
                \Lrow{\large $\displaystyle \mathcal L=\{A_1,\ldots,A_K\}$}
                \StateMain{\large $\displaystyle \mathcal P[\rho^{\otimes K}] =\big(\sum_{s=1}^{\sss}\Pi_s\rho\Pi_s\big)^{\!\otimes K}$}
            \end{scope}
        
            \draw[Ink!10,line width=.50pt] (3.225,1.12)--(3.225,7.55);
            \draw[Ink!10,line width=.50pt] (9.675,1.12)--(9.675,7.55);
        \end{tikzpicture}
    }
\caption{{\it Different resolutions of the system by the reservoir.} Collective resolution: each vertex in $\r$ connected to $\s$ is linked to every vertex in $\s$; the set of boundary operators $\L$, \eqref{Lstar} is a singleton; the CPTP map $\P$, \eqref{defP} is an integral over a $\sss$-dimensional torus with factorized, symmetric integrand. Coarse resolution: $\s$ is subdivided into several subsystems (two in the example depicted); each vertex in $\r$ which connects to $\s$ is linked to all the vertices of exactly one subsystem; each subsystem contributes to $\L$ with one operator; $\P$ is the product of actions of the collective type on each subsystem. Local resolution: each vertex in $\r$ which is connected to $\s$ connects to exactly one vertex in $\s$, and all vertices in $\s$ are connected to $\r$. $\L$ consists of all individual $A_i$ and the action of $\P$ gives a symmetric product state in which each factor is decohered. In all cases: Vertices in $\r$ may be connected among each other and there may be vertices in $\r$ not linked to $\s$ (not shown in picture).}
    \label{fig:resolution}
\end{figure}

\begin{example} 
$\L$ is locally resolved  if $A_k\in\L$ for all $k=1,\ldots,K$, which in turn is the case if each vertex in the system $\s$ is linked to at least one vertex in the reservoir $\r$, while no vertex in the reservoir is linked to more than one vertex in the system. $k$. If on the other hand there exists a vertex in the reservoir which is connected to all vertices in the system, and all the vertices in the reservoir which are connected to $\s$ are connected in this way, then $\L$ is collectively resolved.  If a system $\s$ is subdivided into $r+1$ disjoint subsystems as per \eqref{2.16} and each subsystem is collectively resolved in its own right, then $\s$ is coarsely resolved.
\end{example}

For a coarsely resolved $\L$  the index set $\Lambda$ has a product structure $\Lambda = \Lambda_1\times \cdots \times \Lambda_{r+1}$ and the projections $P_\lambda$ are given by 
\begin{align} 
\label{coares}
P_\lambda= \bigotimes_{\ell=1}^{r+1} P^{[\ell]}_{\lambda_\ell},\qquad \lambda=(\lambda_1,\ldots,\lambda_{r+1})\in\Lambda_1\times \cdots \times \Lambda_{r+1},
\end{align}
where $P^{[\ell]}_j$ acts nontrivially only on the tensor factors of the subsystem with indices in $\llbracket K_{\ell-1}+1,K_\ell\rrbracket$.

\begin{example}
Suppose $r=1$, then $\s$ has two subsystems with vertices in $\llbracket 1,K_1\rrbracket$ and $\llbracket K_1+1,K\rrbracket$. The $P^{[1]}_{\lambda_1}$ are the eigenprojections of $B_1=A_1+\cdots+A_{K_1}$, $\lambda_1\in \Lambda_1$, acting nontrivially on the first $K_1$ tensor factors of $\h_\s$ and similarly for $P^{[2]}_{\lambda_2}$. 
\end{example}

\begin{lemma} \label{lem:P} Recall the definition \eqref{defP} of the CPTP map $\P$ acting on $\b(\h_\s)$.
\begin{itemize}
\item[{\rm \bf (1)}] If $\L$ is locally resolved then $\forall X_1,\ldots,X_K\in\b(\C^d)$,
\begin{equation}
\label{locresreas}
e^{-iTH_\s}\mathcal P\big[ X_1\otimes\cdots \otimes X_K\big]  e^{iTH_\s} =\Big(\sum_{s=1}^\sss \Pi_s X_1\Pi_s\Big)\otimes\cdots\otimes \Big(\sum_{s=1}^\sss \Pi_s X_K\Pi_s\Big).
\end{equation}

\item[{\rm \bf (2)}] If $\L$ is collectively resolved then $\forall X\in\b(\h_\s)$,
\begin{align}
\label{2.24}
 \P[X]  =\lim_{R\rightarrow\infty} \frac 1R \int_0^R e^{-it\sum_{k=1}^KA_k} X e^{it\sum_{k=1}^KA_k}dt.
\end{align}

\item[{\rm \bf (3)}] Let $\L$ be coarsely resolved according to the partition \eqref{2.16} and define the map $\P_R^{(\ell)}$ on $\b(\h_\s)$ by
\begin{align}
\P^{(\ell)}_R[X] = \frac 1R\int_0^R e^{-itB_\ell} X e^{itB_\ell} dt,\qquad R>0,\ \ell=1,\ldots,r+1.
\end{align}
Then we have for all $X\in\b(\h_\s)$, 
\begin{align}
\label{2.24.1}
 \P[X]  = \lim_{R\rightarrow\infty} \P_R^{(1)}\circ\cdots \circ \P_R^{(r+1)}[X].
\end{align}
\end{itemize}
\end{lemma}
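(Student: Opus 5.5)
For all three parts the plan is to expand everything in the joint spectral projections $P_\lambda$ of \eqref{Beval}–\eqref{resid} and reduce each claim to a statement about scalar phases.

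\textbf{Part (1).} Under local resolution \eqref{locres} each $P_\lambda=\Pi_{s_1}\otimes\cdots\otimes\Pi_{s_K}$ is an elementary tensor. Hence
\[
\mathcal P[X_1\otimes\cdots\otimes X_K]=\sum_{s_1,\ldots,s_K}\Pi_{s_1}X_1\Pi_{s_1}\otimes\cdots\otimes\Pi_{s_K}X_K\Pi_{s_K}.
\]
This sum factorizes into the tensor product of the single-vertex twirls. It then remains to show that conjugation by $e^{-iTH_\s}$ acts trivially on the range of $\mathcal P$. Each term $A_iA_j$ of $H_\s$ satisfies $A_iA_jP_\lambda=a_{s_i}a_{s_j}P_\lambda$, so $H_\s P_\lambda=h(\lambda)P_\lambda$ with the scalar $h(\lambda)=\sum_{(i,j)\in E_\s}a_{s_i}a_{s_j}$. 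Consequently, for any $Y$,
\[
e^{-iTH_\s}P_\lambda Y P_\lambda e^{iTH_\s}=e^{-iTh(\lambda)}e^{iTh(\lambda)}P_\lambda YP_\lambda=P_\lambda YP_\lambda.
\]
Summing over $\lambda$ gives \eqref{locresreas}.

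\textbf{Part (2).} Here $\L=\{B\}$ with $B=\sum_{k=1}^K A_k$. The $P_\lambda$ are then the spectral projections of $B$, with eigenvalues $b(\lambda)$ that are pairwise distinct by \eqref{sep}. I would write $e^{-itB}=\sum_\lambda e^{-itb(\lambda)}P_\lambda$, so that
\[
\frac1R\int_0^R e^{-itB}Xe^{itB}\,dt=\sum_{\lambda,\mu}\Big(\frac1R\int_0^R e^{-it(b(\lambda)-b(\mu))}dt\Big)P_\lambda XP_\mu .
\]
The scalar average equals $1$ if $\lambda=\mu$. Otherwise it is bounded by $2/(R|b(\lambda)-b(\mu)|)\to0$. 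Since the sum is finite, the limit is $\sum_\lambda P_\lambda XP_\lambda=\mathcal P[X]$.

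\textbf{Part (3).} The same computation applies. All $B_\ell$ commute and are diagonal in the joint resolution, with $B_\ell=\sum_\lambda b_\ell(\lambda)P_\lambda$. Therefore
\[
\P_R^{(1)}\circ\cdots\circ\P_R^{(r+1)}[X]=\sum_{\lambda,\mu}\prod_{\ell=1}^{r+1}\Big(\frac1R\int_0^Re^{-it(b_\ell(\lambda)-b_\ell(\mu))}dt\Big)P_\lambda XP_\mu .
\]
As $R\to\infty$ each factor tends to $\delta_{b_\ell(\lambda),b_\ell(\mu)}$, so the product tends to $\prod_\ell\delta_{b_\ell(\lambda),b_\ell(\mu)}$. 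By the separation property \eqref{sep}, this product equals $\delta_{\lambda,\mu}$, which yields \eqref{2.24.1}.

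\textbf{Main obstacle.} There is no serious one. The only point needing care is that the $P_\lambda$ appearing in the time averages are exactly the minimal joint projections of \eqref{defP}, so that \eqref{sep} is what kills the off-diagonal terms. In part (1) one must also check the bookkeeping that $H_\s$ is scalar on each $P_\lambda$. Everything else consists of finite sums of explicit scalar limits.
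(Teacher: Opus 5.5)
Your proposal is correct and follows the paper's proof in all three parts. In each case you expand in the joint projections $P_\lambda$: for (1) you use that $H_\s$ acts as a scalar on each $P_\lambda$ under local resolution, and for (2) and (3) you use that the off-diagonal time averages vanish as $R\to\infty$. The only difference is in part (3), where you kill the off-diagonal terms with the separation property \eqref{sep} rather than the block tensor structure \eqref{coares} that the paper uses. The two are equivalent here, and your version shows the formula holds for an arbitrary commuting family $\L$, with the composition running over all $B_\ell\in\L$.
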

The case \eqref{2.24} is the special case \(r=0\) of \eqref{2.24.1}, corresponding to a single collectively resolved subsystem. At the opposite extreme, when \(r=K-1\) and each subsystem consists of a single vertex, \eqref{2.24.1} reduces to the locally resolved case \eqref{locresreas}. Thus the coarsely resolved setting may be viewed as an interpolation between collective and local resolution, obtained by successively refining the partition of the system into independently resolved subsystems.
\medskip

{\bf Proof of Lemma \ref{lem:P}.} For locally resolved $\L$ the $P_\lambda$ are eigenprojections of $H_\s$, \eqref{eq:ch2-H}. Then  \eqref{locresreas} follows immediately. For a collectively resolved $\L$ the $P_\lambda$, $\lambda\in\Lambda$, are the eigenprojections of $B=\sum_{k=1}^KA_k$ with distinct eigenvalues $b(\lambda)$. By \eqref{eq:specres} we have $\forall X\in\b(\h_\s)$ and $R>0$,
\begin{align}
\label{-m1}
\frac 1R \int_0^R e^{-it\sum_{k=1}^KA_k} X e^{it\sum_{k=1}^KA_k}dt &= \sum_{\lambda\in\Lambda} P_\lambda XP_\lambda + \sum_{\substack{\lambda, \mu \in \Lambda\\\lambda \ne \mu}} P_\lambda XP_\mu \ \frac1R\int_0^R e^{-it(b(\lambda)-b(\mu))}dt.
\end{align}
The first summand on the right-hand side is just $\P[X]$. For $\lambda\neq \mu$ we have $b(\lambda)\neq b(\mu)$, so that $\lim_{R\rightarrow\infty}\frac1R\int_0^R e^{-it(b(\lambda)-b(\mu))}dt=0$. Thus we obtain \eqref{2.24}. 
For a coarsely resolved $\L$ we write $e^{-it\sum_{k=1}^K A_k} = e^{-it (B_1+\cdots+B_{r+1})} = \otimes_{\ell=1}^{r+1} e^{-it B_\ell}$ (see (iii) in Definition \ref{def:loccolres}). Then for $X\in\b(\h_\s)$,
\begin{align} 
e^{-i\sum_{\ell=1}^{r+1} t_\ell B_\ell} &Xe^{i\sum_{\ell=1}^{r+1}t_\ell B_\ell} = \big(\bigotimes_{\ell=1}^{r+1} e^{-it_\ell B_\ell}\big)
X\big(\bigotimes_{\ell=1}^{r+1} e^{it_\ell B_{\ell}}\big)\nonumber\\
&= \sum_{\substack{\lambda=(\lambda_1,\ldots,\lambda_{r+1})\in \Lambda \\ \mu=(\mu_1,\ldots,\mu_{r+1})\in \Lambda}} \big( \bigotimes_{\ell=1}^{r+1} P^{[\ell]}_{\lambda_\ell} e^{-it_\ell B_\ell}\big)  X\big( \bigotimes_{\ell=1}^{r+1} P^{[\ell]}_{\mu_{\ell}} e^{it_\ell B_{\ell}}\big)\nonumber\\
&= \sum_{\substack{\lambda=(\lambda_1,\ldots,\lambda_{r+1})\in \Lambda \\ \mu=(\mu_1,\ldots,\mu_{r+1})\in \Lambda}}  e^{-i \sum_{\ell=1}^{r+1}t_\ell (b_\ell(\lambda_\ell)-b_\ell(\mu_\ell))}
 \big( \bigotimes_{\ell=1}^{r+1} P^{[\ell]}_{\lambda_\ell} \big)  X\big( \bigotimes_{\ell=1}^{r+1}P^{[\ell]}_{\mu_{\ell}}\big)
 \label{2.22}
\end{align}
where we inserted $\mathbf 1 = \sum_{\lambda\in \Lambda}P_\lambda$ and took into account \eqref{coares}, and where $b_\ell(j)$ is the eigenvalue of $B_\ell$ associated to the eigenprojection $P^{[\ell]}_j$. For $\lambda\neq\mu$ there is an $\ell$ such that $\lambda_\ell\neq \mu_\ell$ and so
\begin{align}
\lim_{R\rightarrow\infty} \frac{1}{R^{r+1}} \int_0^R\cdots\int_0^R &e^{-i \sum_{\ell=1}^{r+1}t_\ell (b_\ell(\lambda_\ell)-b_\ell(\mu_\ell))} dt_1\cdots dt_{r+1}\nonumber\\
&=\lim_{R\rightarrow\infty} \prod_{\ell=1}^{r+1} \frac 1R \int_0^Re^{-i t (b_\ell(\lambda_\ell)-b_\ell(\mu_\ell))} dt=0.
\end{align}
Therefore only the terms $\lambda=\mu$ in \eqref{2.22} survive the integration and limit,
\begin{align*} 
\lim_{R\rightarrow\infty} \frac{1}{R^{r+1}} &\int_0^R\cdots\int_0^R  e^{-it_1 B_1}\cdots e^{-it_{r+1}B_{r+1}} Xe^{it_1 B_1}\cdots e^{it_{r+1}B_{r+1}} dt_1\cdots dt_{r+1}\nonumber\\
&= \sum_{\lambda=(\lambda_1,\ldots,\lambda_{r+1})\in \Lambda}
 \big( \bigotimes_{\ell=1}^{r+1} P^{[\ell]}_{\lambda_\ell} \big)  X\big( \bigotimes_{\ell=1}^{r+1}P^{[\ell]}_{\lambda_{\ell}}\big)=\sum_{\lambda\in\Lambda} P_\lambda X P_\lambda.
\end{align*}
This together with \eqref{defP} shows \eqref{2.24.1}. \hfill\qed

\medskip

Combining \eqref{locresreas} with Corollary \ref{cor:deco} gives the following result. 

\begin{theorem}[Product state for locally resolved $\L$]
\label{thm3} 
Suppose the genericity condition \eqref{gencond} holds, that the system is well-connected \eqref{wellconn}, and that the boundary operators are locally resolved, \eqref{locres}. Then
\begin{equation}
\label{DeFinettifinalstate}
\lim_{N\rightarrow\infty}\rho_N^{(K)} = \tau^{\otimes K},\qquad \tau = \sum_{s=1}^\sss \Pi_s\rho\Pi_s.
\end{equation}
\end{theorem}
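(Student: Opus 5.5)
The plan is to combine Corollary \ref{cor:deco} with part (1) of Lemma \ref{lem:P}, applied to $X_1=\cdots=X_K=\rho$.

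First, the genericity condition \eqref{gencond} and well-connectedness \eqref{wellconn} are exactly the hypotheses of Corollary \ref{cor:deco}. It therefore gives
\begin{equation*}
\lim_{N\rightarrow\infty}\rho_N^{(K)} = e^{-iTH_\s}\mathcal P\big[\rho^{\otimes K}\big]e^{iTH_\s}.
\end{equation*}
If one wants a rate, Theorem \ref{thm2} gives the quantitative bound $|\Lambda|q^{\ve_K(N)}$, where $|\Lambda|=\sss^K$ in the locally resolved case.

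Second, I would evaluate the right-hand side using the local resolution \eqref{locres}. I plan to re-derive the claim of Lemma \ref{lem:P}(1) rather than just cite it, since its proof there is only one sentence.

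Since $P_\lambda=\Pi_{s_1}\otimes\cdots\otimes\Pi_{s_K}$, the map $\mathcal P$ factorizes on product operators:
\begin{equation*}
\mathcal P[X_1\otimes\cdots\otimes X_K]=\bigotimes_{k=1}^K\Big(\sum_{s}\Pi_sX_k\Pi_s\Big).
\end{equation*}
Next, $H_\s=\sum_{(i,j)\in E_\s}A_iA_j$ is diagonal in the joint spectral decomposition. On $\operatorname{ran}P_\lambda$ it acts as the scalar $h(\lambda)=\sum_{(i,j)\in E_\s}a_{s_i}a_{s_j}$. Hence $e^{-iTH_\s}P_\lambda XP_\lambda e^{iTH_\s}=e^{-iTh(\lambda)}e^{iTh(\lambda)}P_\lambda XP_\lambda=P_\lambda XP_\lambda$. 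In other words, conjugation by $e^{-iTH_\s}$ acts trivially on the range of $\mathcal P$.

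Setting every $X_k=\rho$ then gives $\tau^{\otimes K}$ with $\tau=\sum_s\Pi_s\rho\Pi_s$, which proves \eqref{DeFinettifinalstate}.

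The only step needing care is that $H_\s$ commutes with each $P_\lambda$, i.e.\ that each $P_\lambda$ is a spectral projection of $H_\s$. This holds because $H_\s$ is a polynomial in the $A_i$, and each $A_i$ acts as the scalar $a_{s_i}$ on $\operatorname{ran}P_\lambda$.
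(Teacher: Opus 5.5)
Your proposal is correct and matches the paper's proof: the paper obtains Theorem~\ref{thm3} by combining Corollary~\ref{cor:deco} with Lemma~\ref{lem:P}(1), whose one-line proof rests on the same two facts you use. These are that $H_\s P_\lambda = h(\lambda)P_\lambda$, so conjugation by $e^{-iTH_\s}$ acts trivially on the range of $\mathcal P$, and that $\mathcal P$ factorizes over the product projections $\Pi_{s_1}\otimes\cdots\otimes\Pi_{s_K}$. Strictly speaking, each $P_\lambda$ is a subprojection of an eigenprojection of $H_\s$ rather than a spectral projection of $H_\s$, since distinct $\lambda$ can share the same $h(\lambda)$; but the identity $H_\s P_\lambda=h(\lambda)P_\lambda$ is all your argument needs.
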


For a collectively resolved $\L$ we can represent the time ergodic average \eqref{2.24} as the space average over a torus. 
We say that the eigenvalues $a_1,\ldots,a_\sss$ of $A$ (see \eqref{m4}) are rationally independent modulo constants if for $m_1, \ldots, m_\sss \in \Z$, the relations 
\[
    m_1a_1 + \cdots + m_\sss a_\sss = 0
\text{\qquad and\qquad}  m_1 + \cdots + m_\sss = 0
\]  
imply that $m_1 = \cdots = m_\sss = 0$. The question of rational (in)dependence emerges naturally when expressing \eqref{2.24} as an integral over a torus via the ergodic theorem. We then obtain the following result, in which appear the single-vertex unitary operators
\begin{equation}
\label{Utheta}
 U(\btheta) = \sum_{s=1}^\sss e^{-i\theta_s}\Pi_s,\qquad \btheta = (\theta_1,\ldots,\theta_\sss)\in [0,2\pi]^\sss.
\end{equation}

\begin{theorem}[De Finetti representation for coarsely or collectively resolved $\L$]
\label{thm:definetti}
Suppose the generic case \eqref{gencond}, that the system is well-connected, \eqref{wellconn}, and that the eigenvalues of $A$ are rationally independent modulo constants. \ \\[-1.8ex]
\begin{itemize}
\item[{\rm (1)}] If the boundary operators are coarsely  resolved into $r+1$ subsystems as in \eqref{2.16}, then  
\begin{equation}
\label{m25.2}
\lim_{N\rightarrow\infty}  \rho_N^{(K)}=  e^{-iTH_\s} \big( \Gamma^{(1)}\otimes \cdots \otimes \Gamma^{(r+1)}\big) e^{iTH_\s},
\end{equation}
with
\begin{align}
\label{2.28-1}
\Gamma^{(\ell)} = \frac{1}{(2\pi)^{\sss}} \int_{[0,2\pi]^\sss}   \rho(\btheta)^{\otimes (K_\ell-K_{\ell-1})}\, d\btheta.
\end{align}
The integrand is the $(K_\ell-K_{\ell-1})$-fold tensor product of the single-vertex density matrix $\rho(\btheta) = U(\btheta)\rho U(\btheta)^*$.\ \\[-1ex]

\item[{\rm (2)}]   If the boundary operators are coarsely resolved into $r+1$ subsystems as in \eqref{2.16} and if within each subsystem either all possible edges are present or none are, while between each pair of distinct subsystems either all possible edges are present or none are, then
\begin{align}
\label{m25.3}
\lim_{N\rightarrow\infty}  \rho_N^{(K)} =\Gamma^{(1)}\otimes \cdots \otimes \Gamma^{(r+1)}.
\end{align}

\item[{\rm (3)}]  If the boundary operators are collectively resolved then \eqref{m25.2} holds with a single factor, 
\[
\lim_{N \to \infty}\rho_N^{(K)} = e^{-iTH_\s}\Gamma e^{iTH_\s},\qquad    \Gamma=\frac1{(2\pi)^\sss}
\int_{[0,2\pi]^\sss}\rho(\btheta)^{\otimes K}\,d\btheta.
\]
If further all pairs of vertices in $\s$ are linked by an edge, or none are, then $e^{-iTH_\s}\Gamma e^{iTH_\s} = \Gamma$. 
\end{itemize}
\end{theorem}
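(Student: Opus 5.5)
The plan is to start from Corollary \ref{cor:deco}, which under genericity and well-connectedness gives $\lim_N\rho_N^{(K)}=e^{-iTH_\s}\P[\rho^{\otimes K}]e^{iTH_\s}$. So everything reduces to computing $\P[\rho^{\otimes K}]$ for coarsely (or collectively) resolved $\L$. For that I use Lemma \ref{lem:P}(3). Since $B_\ell$ acts only on the block $\llbracket K_{\ell-1}+1,K_\ell\rrbracket$ and $\rho^{\otimes K}=\bigotimes_\ell \rho^{\otimes (K_\ell-K_{\ell-1})}$, the composition $\P_R^{(1)}\circ\cdots\circ\P_R^{(r+1)}$ acts factorwise. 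Hence $\P[\rho^{\otimes K}]=\bigotimes_\ell \Gamma^{(\ell)}$, where
\[
\Gamma^{(\ell)}=\lim_{R\to\infty}\frac1R\int_0^R \big(e^{-itA}\rho e^{itA}\big)^{\otimes (K_\ell-K_{\ell-1})}dt .
\]
Here I used $e^{-itB_\ell}=(e^{-itA})^{\otimes(K_\ell-K_{\ell-1})}$ on that block. This yields (1) once the time average is identified with the torus average; (3) is the case $r=0$.

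The key step is that identification. Note $e^{-itA}=U(t\boldsymbol a)$ with $\boldsymbol a=(a_1,\dots,a_\sss)$, by \eqref{Utheta}. Expanding $\rho(\btheta)^{\otimes n}$ in the blocks $\Pi_{s_1}\otimes\cdots\otimes\Pi_{s_n}$ on the left and $\Pi_{s'_1}\otimes\cdots\otimes\Pi_{s'_n}$ on the right gives a trigonometric polynomial in $\btheta$ with frequencies $\boldsymbol m\in\Z^\sss$. The component $m_s$ equals the number of $s_j=s$ minus the number of $s'_j=s$, so $\sum_s m_s=0$ automatically. The time average of $e^{-it\,\boldsymbol m\cdot\boldsymbol a}$ equals $\delta_{\boldsymbol m\cdot\boldsymbol a,0}$, and the torus average of $e^{-i\boldsymbol m\cdot\btheta}$ equals $\delta_{\boldsymbol m,0}$. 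Rational independence modulo constants says exactly that $\boldsymbol m\cdot\boldsymbol a=0$ together with $\sum m_s=0$ forces $\boldsymbol m=0$, so the two averages agree term by term. This is the only delicate point: one must see that only frequencies with zero sum occur. That is why the weaker "modulo constants" hypothesis suffices, which is consistent with $U(\btheta)$ and $e^{i c}U(\btheta)$ giving the same conjugation.

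For (2) and the last claim of (3), I show that $e^{-iTH_\s}$ commutes with $\bigotimes_\ell\Gamma^{(\ell)}$, so the conjugation drops. Under the hypothesis, $H_\s$ is a combination, with coefficients in $\{0,1\}$, of the terms $\sum_{i<j\in I_\ell}A_iA_j=\tfrac12(B_\ell^2-\sum_{i\in I_\ell}A_i^2)$ and $B_\ell B_{\ell'}$. Now $\rho(\btheta)^{\otimes n}$ is conjugated by the product unitary $U(\btheta)^{\otimes n}$, which commutes with each $A_i$. Averaging over $\btheta$ makes $\Gamma^{(\ell)}$ commute with every operator of the form $\sum_s f(a_s)\Pi_s$ applied on all sites, and more precisely with every $U(\btheta)^{\otimes n}$.

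The remaining check is that $\Gamma^{(\ell)}$ commutes with $B_\ell$, $B_\ell^2$ and with $\sum_{i\in I_\ell}A_i^2$. For $B_\ell$ this holds because $\Gamma^{(\ell)}$ is a $t$-average under conjugation by $e^{-itB_\ell}$, i.e. $\Gamma^{(\ell)}=\P^{(\ell)}[\cdot]$ of a block-diagonal form, so it is block-diagonal in the spectral projections of $B_\ell$; the same then holds for $B_\ell^2$. For $\sum_i A_i^2$, use $e^{-it\sum_iA_i^2}=U(t\boldsymbol a^2)^{\otimes n}$, with $\boldsymbol a^2=(a_1^2,\dots,a_\sss^2)$, together with torus invariance. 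Hence each term of $H_\s$ commutes with the tensor product $\bigotimes_\ell\Gamma^{(\ell)}$, and so does $e^{-iTH_\s}$. I expect the frequency bookkeeping in the ergodic identification to be the main obstacle; the commutation step is routine.
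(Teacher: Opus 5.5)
Your proposal is correct and follows essentially the same route as the paper. You reduce via Corollary \ref{cor:deco} and Lemma \ref{lem:P}(3) to blockwise time averages, identify these term by term with torus averages using that every frequency vector $\boldsymbol m$ has zero component sum together with rational independence modulo constants, and for (2) and the last claim of (3) use that $\P[\rho^{\otimes K}]$ commutes with each $B_\ell$ while invariance under torus translation absorbs the $\sum_{i\in I_\ell}A_i^2$ terms of $H_\s$.
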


\begin{discussion}
\label{dis:312}
The local, coarse and collective resolutions introduced in Definition \ref{def:loccolres} lead to different structures in the  representation of $\P$. In the {\it purely} local case $\L = \{A_1, \ldots, A_K\}$, $\P$ acts independently on each vertex, so $\P[\rho^{\otimes K}]$ is a product state. In the coarse resolution, the $\P$ acts collectively within each subsystem but independently between subsystems, inducing a product of de Finetti mixtures. Collective resolution corresponds to a single action shared by all vertices.
\end{discussion}

\subsection{Proof of Theorem \ref{thm:definetti}} (1) Let $\bm=(m_1,\ldots,m_\sss) \in\mathbb Z^\sss$ and $\boldsymbol\omega=(\omega_1,\ldots,\omega_\sss)\in\mathbb R^\sss$ be given. By explicit integration,
\begin{equation}
\label{u1}
\lim_{R\rightarrow\infty} \frac1R\int_0^R e^{it \bm\cdot \bomega}dt = \left\{
\begin{array}{ll}
1 & \mbox{if $\bm\cdot \bomega =0$}\\
0 & \mbox{if $\bm\cdot \bomega\neq 0$}.
\end{array}
\right.
\end{equation}
On the other hand, with $\btheta = (\theta_1,\ldots,\theta_\sss)$ and $d\btheta=d\theta_1\cdots d\theta_\sss$ we have by periodicity,
\begin{equation}
\label{u2}
(2\pi)^{-\sss} \int_{[0,2\pi]^\sss}  \ e^{i \bm\cdot\btheta} d\btheta=
\left\{
\begin{array}{ll}
1 & \mbox{if $\bm=\boldsymbol 0$}\\
0 & \mbox{if $\bm\neq \boldsymbol 0$}.
\end{array}
\right.
\end{equation}
Suppose that the $\omega_j$ satisfy $\bm\cdot\bomega=0\Rightarrow \bm = \boldsymbol 0$. Then \eqref{u1}, \eqref{u2} give, for every $\bm \in \Z^\sss$,
\begin{equation}
\label{u3}
\lim_{R\rightarrow\infty} \frac1R\int_0^R e^{it \bm\cdot\bomega}dt = (2\pi)^{-\sss} \int_{[0,2\pi]^\sss} e^{i \bm\cdot\btheta} d\btheta.
\end{equation}
As the operators $\P^{(\ell)}$, \eqref{2.24.1} act nontrivially on different tensor factors for varying $\ell$ we only need to consider one of them. With $B_\ell=\sum_{K_{\ell-1}+1\le i\le K_\ell} A_i$ and setting $\kappa =K_\ell-K_{\ell-1}$ we have,
\begin{equation}
\label{u7}
e^{-itB_\ell} X e^{itB_\ell} = \sum_{\bs,\br\in \llbracket 1,\sss\rrbracket^{\kappa}} \frac{e^{-it(a_{r_1}+\cdots +a_{r_\kappa})}}{e^{-it(a_{s_1}+\cdots +a_{s_\kappa})}} \Pi_{\br} X\Pi_{\bs}
\end{equation}
where $\br=(r_1,\ldots,r_\kappa)$ and $\Pi_{\br}=\Pi_{r_1}\otimes\cdots\otimes \Pi_{r_\kappa}$ acts nontrivially on the tensor factors $K_{\ell-1}+1,
\ldots, K_\ell$. The notation for $\bs$ is analogous. 
For $\bs,\br$ fixed, the ratio in \eqref{u7} is a function of the eigenvalues $\boldsymbol a=(a_1,\ldots,a_\sss)$, of the form 
\begin{equation}
\label{u8}
\frac{e^{-it(a_{r_1}+\cdots +a_{r_\kappa})}}{e^{-it(a_{s_1}+\cdots +a_{s_\kappa})}} =e^{-it\bm(\br,\bs) \cdot \boldsymbol a},
\end{equation}
% for some $\bm(\br,\bs)\in \mathbb Z^\sss$. By \eqref{u3} we have for rationally independent $a_1,\ldots,a_\sss\in\mathbb R$, 
where $\bm(\br,\bs)=(m_1,\ldots,m_\sss)\in\mathbb Z^\sss$
is given by
\[
m_j=
|\{i:r_i=j\}|-|\{i:s_i=j\}|,
\qquad j=1,\ldots,\sss.
\]
Since $\br$ and $\bs$ both have $\kappa$ components, $\sum_{j=1}^\sss m_j=\kappa-\kappa=0$. Consequently, rational independence modulo constants
implies that
\[
\bm(\br,\bs)\cdot\boldsymbol a=0
\quad\Longleftrightarrow\quad
\bm(\br,\bs)=\boldsymbol 0.
\]
We may therefore apply \eqref{u3} to obtain
\begin{equation}
\label{u9}
\lim_{R\rightarrow\infty}\frac1R\int_0^R e^{-it\bm(\br,\bs) \cdot\boldsymbol a} dt =  (2\pi)^{-\sss} \int_{[0,2\pi]^\sss}   e^{-i \bm(\br,\bs) \cdot\btheta} d\btheta.
\end{equation}
Combining \eqref{u7} and \eqref{u9} yields,
\begin{align}
\lim_{R\rightarrow\infty} \P_R^{(\ell)}[\rho^{\otimes K}]&=  \sum_{\bs,\br\in \llbracket 1,\sss\rrbracket^\kappa} (2\pi)^{-\sss} \int_{[0,2\pi]^\sss}   e^{-i \bm(\br,\bs) \cdot\btheta} \, \Pi_{\br} \rho^{\otimes K}\Pi_{\bs}\, d\btheta\nonumber\\
&=\rho^{\otimes K_{\ell - 1}}\otimes\Big[ (2\pi)^{-\sss} \int_{[0,2\pi]^\sss}  \bigotimes_{j=K_{\ell-1}+1}^{K_\ell}  [\rho(\btheta)]_j\,  d\btheta\Big] \otimes \rho^{\otimes (K-K_\ell)} 
\label{u10}
\end{align}
where $[\rho(\btheta)]_j$ is the single site density operator $\rho(\btheta)=U(\btheta)\rho U(\btheta)^*$ ({\em cf.} \eqref{Utheta}) acting nontrivially on the $j$th tensor factor of $\h_\s$. We used \eqref{u7} and \eqref{u8} in the last step of \eqref{u10}. Combining \eqref{u10} with Corollary \ref{cor:deco} proves point (1) of Theorem \ref{thm:definetti}.
\smallskip

(2) Let
$I_\ell=\llbracket K_{\ell-1}+1,K_\ell\rrbracket$
and $B_\ell=\sum_{i\in I_\ell}A_i$.
For each $\ell$, let $c_\ell\in\{0,1\}$
indicate whether all possible edges within
$I_\ell$ are present. For $\ell<m$, let
$d_{\ell, m}\in\{0,1\}$ indicate whether all
possible edges between $I_\ell$ and $I_m$
are present. Then
\begin{equation}
\label{2.37}
H_\s=
\frac12\sum_{\ell=1}^{r+1}c_\ell
\Big(
B_\ell^2-\sum_{i\in I_\ell}A_i^2
\Big)
+
\sum_{\ell<m}d_{\ell, m}B_\ell B_m.
\end{equation}
The term $\frac12(B_\ell^2-\sum_{i\in I_\ell}A_i^2)$ is equal to $\sum_{(i,j)\in I_\ell, i<j} A_iA_j$, which is the contribution to $H_\s$ coming from the edges within the block $\ell$ of partition into subsystems. 
Since $\P[X]$ commutes with every $B_\ell$,
and all the operators $A_i$ commute, we obtain
\begin{align}
\label{3.39}
e^{-iTH_\s}\P[X]e^{iTH_\s}
=
e^{\frac{iT}{2}\sum_{\ell=1}^{r+1}
c_\ell\sum_{i\in I_\ell}A_i^2}
\P[X]
e^{-\frac{iT}{2}\sum_{\ell=1}^{r+1}
c_\ell\sum_{i\in I_\ell}A_i^2}.
\end{align}
Next we note that by \eqref{2.28-1}, each
$\Gamma^{(\ell)}$ is invariant under
conjugation by
$e^{it\sum_{i\in I_\ell}A_i^2}$ (any $t\in\mathbb R$).
Indeed, since
$A^2=\sum_{s=1}^\sss a_s^2\Pi_s$,
this conjugation amounts to translating
the torus variables according to
$\theta_s\mapsto\theta_s-ta_s^2$,
which leaves the torus integral
unchanged. Therefore \eqref{3.39} gives $e^{-iTH_\s}\P[\rho^{\otimes K}]
e^{iTH_\s}
=\P[\rho^{\otimes K}]$ and  \eqref{m25.3} follows from
\eqref{m25.2}.

The assertions of (3) follows by the same argument as (1), (2), now with a single block. This completes the proof of Theorem \ref{thm:definetti}. \hfill\qed

\subsection{Proof of Theorem \ref{thm2}}

We first state a preliminary result.

\begin{theorem}
\label{thm1}
The reduced state \eqref{m5} is given by
\begin{equation}
\label{2.32}
\rho_N^{(K)} = e^{-iTH_\s} \ \Phi_N(\rho^{\otimes K})\ e^{iTH_\s},
\end{equation}
with $H_\s$ as in \eqref{eq:ch2-H} and where $\Phi_N$ is the CPTP map on $\b(\h_\s)$ defined as
\begin{equation}
\label{m17.1}
\Phi_N(X) = \sum_{\lambda,\mu\in\Lambda} w_N(\lambda,\mu) P_\lambda XP_\mu,
\end{equation}
with
\begin{equation}
\label{w}
w_N(\lambda,\mu) = \prod_{\ell=1}^{|\L|} \Big[\psi\big(b_\ell(\lambda)-b_\ell(\mu)\big)\Big]^{\nu(\ell,N)}.
\end{equation}
\end{theorem}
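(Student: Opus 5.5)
The argument rests on the fact that all the operators $A_i$, and hence all terms of $H_N$, commute. We therefore write
\[
U(N)=e^{-iTH_\s}\,e^{-iTH_{\cc,N}}\,e^{-iTH_{\r,N}}.
\]
The factor $e^{-iTH_\s}$ acts only on $\h_\s$, so it can be pulled out of the partial trace. This reduces the claim to showing that $\Tr_\r\big(e^{-iT(H_{\cc,N}+H_{\r,N})}\rho^{\otimes N}e^{iT(H_{\cc,N}+H_{\r,N})}\big)=\Phi_N(\rho^{\otimes K})$.

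\textbf{Step 1: resolve the connecting Hamiltonian over $\Lambda$.} We write $H_{\cc,N}=\sum_{j=K+1}^N L(j)\otimes A_j$, with the convention that $L(j)=0$ for reservoir vertices not connected to $\s$. Every $L(j)$ is some $B_\ell$ or zero, so inserting $\sum_\lambda P_\lambda=\one$ on both sides gives
\[
e^{-iTH_{\cc,N}}=\sum_{\lambda}P_\lambda\otimes e^{-iT\sum_j \beta_j(\lambda)A_j},
\]
where $\beta_j(\lambda)=b_\ell(\lambda)$ if $L(j)=B_\ell$ and $\beta_j(\lambda)=0$ otherwise. The state then becomes $\sum_{\lambda,\mu}P_\lambda\rho^{\otimes K}P_\mu\cdot c_N(\lambda,\mu)$, where
\[
c_N(\lambda,\mu)=\Tr\Big(V_\lambda\, e^{-iTH_{\r,N}}\rho^{\otimes(N-K)}e^{iTH_{\r,N}}\,V_\mu^*\Big)
\]
and $V_\lambda=e^{-iT\sum_j\beta_j(\lambda)A_j}$ acts on $\h_\r$.

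\textbf{Step 2: eliminate $H_{\r,N}$ and factorize.} The operators $H_{\r,N}$ and $V_\lambda$ commute, since both are functions of the commuting $A_j$. By cyclicity of the trace, the conjugation by $e^{-iTH_{\r,N}}$ therefore drops out, leaving
\[
c_N(\lambda,\mu)=\Tr\big(V_\mu^*V_\lambda\rho^{\otimes(N-K)}\big)=\prod_{j}\Tr\big(e^{-iT(\beta_j(\lambda)-\beta_j(\mu))A}\rho\big).
\]
Using $A=\sum_s a_s\Pi_s$ and \eqref{eq:ps}, each factor equals $\psi(\beta_j(\lambda)-\beta_j(\mu))$. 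Vertices with $L(j)=0$ contribute $\psi(0)=1$. Grouping the remaining $j$ by the value $L(j)=B_\ell$, each class contributes $\nu(\ell,N)$ equal factors, which yields exactly $w_N(\lambda,\mu)$ as in \eqref{w}. This completes the proof of \eqref{2.32}.

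\textbf{Step 3: CPTP property.} We check that $\Phi_N$ is CPTP, which is the only point requiring a small argument beyond bookkeeping. The cleanest route is to exhibit $\Phi_N$ as $X\mapsto\Tr_\r\big(W(X\otimes\rho^{\otimes(N-K)})W^*\big)$ with the unitary $W=e^{-iT(H_{\cc,N}+H_{\r,N})}$; this is a Stinespring form, hence completely positive and trace preserving. Steps 1–2 show this map coincides with \eqref{m17.1} for every $X$, since the computation never used any property of $\rho^{\otimes K}$. Alternatively, $w_N(\lambda,\mu)$ is a positive semidefinite Schur multiplier, being a product of matrices $[\psi(b(\lambda)-b(\mu))]_{\lambda,\mu}$ which are positive definite by Bochner. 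No step is a real obstacle; the main care needed is the bookkeeping of reservoir vertices not linked to $\s$ and of the identification of the $\beta_j$ with the $b_\ell$.
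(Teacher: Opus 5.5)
Your proof is correct and uses the same ingredients as the paper: the commuting split of $U(N)$, the decomposition $H_{\cc,N}=\sum_j L(j)A_j$, spectral resolutions on the system and reservoir sides, and factorization over the product reservoir state. The difference is the order of operations.

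The paper places $e^{-iTH_{\r,N}}$ outermost, so it drops out at once by cyclicity of the partial trace over $\h_\r$. It then diagonalizes the reservoir operators $A_j$ first. This turns $\Phi_N$ into the mixed-unitary form \eqref{m17}, $\Phi_N(X)=\E\big[e^{-iT\sum_j\xi_jL(j)}Xe^{iT\sum_j\xi_jL(j)}\big]$ with i.i.d.\ $\xi_j$. Only then does it insert $\sum_\lambda P_\lambda$ and use independence.

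You resolve the system side first and obtain $w_N(\lambda,\mu)$ directly as the reservoir quantity $\Tr\big(V_\mu^*V_\lambda\rho^{\otimes(N-K)}\big)$. Because you put $e^{-iTH_{\r,N}}$ next to the state, you need the extra (correct) observation that $V_\mu^*V_\lambda$ commutes with $H_{\r,N}$.

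What each buys:
\begin{itemize}
\item The paper's route gives a manifestly CPTP and unital representation, namely a convex combination of unitary conjugations. It is reused in the area-law proof \eqref{2.25} and gives a probabilistic reading of the decoherence factors.
\item Your route computes the coefficients more transparently. Your Schur-multiplier argument is a useful alternative explanation of complete positivity: $w_N(\lambda,\lambda)=1$; the matrix $[\psi(b(\lambda)-b(\mu))]_{\lambda,\mu}$ is positive \emph{semi}definite (not definite) because $\psi$ is a characteristic function; and the entrywise products and powers defining $w_N$ stay positive semidefinite by the Schur product theorem. This argument carries over verbatim to the rescaled weights of Theorem~\ref{thm1'}.
\end{itemize}
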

We prove Theorem \ref{thm1} below and for now use it to show Theorem \ref{thm2}. For the `diagonal' term $\lambda=\mu$ in the double sum of \eqref{m17.1} we have $w_N(\lambda,\lambda)=1$, so those terms give the contribution $\sum_{\lambda\in\Lambda} P_\lambda XP_\lambda$. We estimate the off-diagonal as 
\begin{equation}
\label{m22}
\Big\| \sum_{\lambda,\mu\in\Lambda:\,  \lambda\neq \mu} w_N(\lambda,\mu) P_\lambda XP_\mu\Big\|_1\le \max_{\lambda,\mu: \lambda\neq \mu} |w_N(\lambda,\mu)| \sum_{\lambda,\mu\in\Lambda:\,  \lambda\neq \mu} \|P_\lambda XP_\mu\|_1.
\end{equation}
The H\"older inequality for Schatten norms gives for $X\ge0$,
$$
\|P_\lambda XP_\mu\|_1 \le \|P_\lambda \sqrt X\|_2 \|\sqrt X P_\mu\|_2 = \sqrt{\Tr(P_\lambda X)\Tr(P_\mu X)}.
$$
Then by the Cauchy-Schwarz inequality for sums,
\begin{align}
\sum_{\lambda,\mu\in\Lambda:\,  \lambda\neq \mu} \|P_\lambda XP_\mu\|_1&\le |\Lambda|\Big(\sum_{\lambda,\mu\in\Lambda:\,  \lambda\neq \mu} \|P_\lambda XP_\mu\|_1^2\Big)^{1/2}\nonumber\\
&\le |\Lambda| \Big(\sum_{\lambda,\mu\in\Lambda:\,  \lambda\neq \mu} \Tr(P_\lambda X)\Tr(P_\mu X)\Big)^{1/2}\le |\Lambda| \Tr(X).
\label{m17'}
\end{align}
Combining \eqref{m17'} with \eqref{m22} and Theorem \ref{thm1} we obtain
\begin{equation}
\label{m23.1}
\big\| \rho_N^{(K)} - e^{-iTH_\s} \big(\sum_{\lambda\in\Lambda}  P_\lambda (\rho^{\otimes K})P_\lambda\big) e^{iTH_\s}\big\|_1 \le |\Lambda| \max_{\lambda,\mu: \lambda\neq \mu} |w_N(\lambda,\mu)|.
\end{equation}
As  $\psi(0)=1$ we have for $\lambda,\mu\in\Lambda$, 
\begin{align}
|w_N(\lambda,\mu)| &= \prod_{\ell=1}^{|\L|} \big|\psi(b_\ell(\lambda)-b_\ell(\mu))\big|^{\nu(\ell,N)}  = \prod_{\ell : b_\ell(\lambda)\neq b_\ell(\mu) } \big|\psi(b_\ell(\lambda)-b_\ell(\mu))\big|^{\nu(\ell,N)}\le q^{\ve_K(N)}.
\label{m24}
\end{align}
Combining \eqref{m23.1} and \eqref{m24} yields \eqref{m23}. 
\medskip

{\bf Proof of Theorem \ref{thm1}.} The operators $A_i$ and $A_j$ commute for all $i,j$ and so according to \eqref{m10} the propagator \eqref{m7} splits into the product of three commuting unitaries,
\begin{equation}
\label{m8}
U(N) =  e^{-iTH_\s} e^{-iT H_{\r,N}} e^{-iT H_{\cc,N}}. 
\end{equation}
In \eqref{m5} the conjugation with $e^{-i TH_\s}$ can be taken out of the partial trace and the conjugation with $e^{-iTH_{\r,N}}$  disappears due to the cyclicity of the partial trace for operators acting on $\h_\r$ only, and since $e^{-iTH_{\r,N}}$ is unitary. 
Thus we have 
\begin{equation}
\label{m13}
\rho^{(K)}_N = e^{-iTH_\s}\big[ \Tr_\r \big(e^{-iTH_{\cc,N}} (\rho^{\otimes N}) e^{iTH_{\cc,N}}\big)\big] e^{iTH_\s}.
\end{equation}
Let
\begin{equation}
\label{def:Phi}
\Phi_N(X) = \Tr_\r\big(e^{-iTH_{\cc,N}} (X\otimes \rho^{\otimes(N-K)}) e^{iTH_{\cc,N}}\big), \qquad X\in\b(\h_\s).
\end{equation}
We decompose,
\begin{equation}
\label{m14}
H_{\cc,N} = \sum_{(i,j)\in E_{\cc,N}}A_iA_j = \sum_{j=K+1}^N \Big(\sum_{i: (i,j)\in E_{\cc,N}}A_i\Big) A_j = \sum_{j=K+1}^N  L(j) A_j,
\end{equation}
where the boundary operator $L(j)$ is given in \eqref{eq:bndryop}. The representation \eqref{m14} is suitable for the partial trace over $\h_\r$ (factors $j=K+1,\ldots,N$ of \eqref{m1}) because $L(j)$ acts nontrivially only on $\h_\s$ (factors $1\le j\le K$) and $A_j$ acts on the factor $j\ge K+1$. As all $L(j)$ and all $A_j$ commute, it follows from \eqref{def:Phi} that 
\begin{equation}
\label{m15}
\Phi_N(X) = \Tr_\r \left[\left(\prod_{j
=K+1}^N e^{-iTL(j)A_j}\right) \big(X\otimes\rho^{\otimes(N-K)}\big) \left(\prod_{j'
=K+1}^Ne^{iTL(j')A_{j'}}\right)\right].
\end{equation}
To further reduce this expression we diagonalize the $A_j$ by using \eqref{m4}. 
Inserting the spectral resolution \eqref{eq:specres} into \eqref{m15} yields,
\begin{align}
\Phi_N(X) =& \sum_{s_{K+1},\ldots,s_N=1}^\sss\ \sum_{s'_{K+1},\ldots,s'_N=1}^\sss\nonumber\\
&\times \Tr_\r \left[\left(\prod_{j
=K+1}^N e^{-iTL(j)a_{s_j}}\right) \big(X\otimes\Pi_{s_{K+1}}\rho\Pi_{s'_{K+1}}\otimes \cdots \otimes \Pi_{s_N}\rho\Pi_{s'_N}\big) \left(\prod_{j'
=K+1}^N e^{iTL(j')a_{s'_{j'}}}\right)\right]\nonumber\\
=& \sum_{s_{K+1},\ldots,s_N=1}^\sss p_{s_{K+1}}\cdots p_{s_N}  e^{-iT\sum_{j=K+1}^N a_{s_j}L(j)}  X e^{iT\sum_{j'=K+1}^N a_{s_{j'}}L(j')},
\label{m16}
\end{align}
where we used that $\Tr\big(\Pi_s \rho\Pi_{s'}\big) = \delta_{s,s'} \Tr(\Pi_s\rho) \equiv p_s\, \delta_{s,s'}$. Let $\{\xi_j\}_{j=K+1}^N$ be $N-K$ i.i.d. random variables defined after \eqref{eq:ps}. Then \eqref{m16} is the expectation,
\begin{equation}
\label{m17}
\Phi_N(X) = \E\big[ e^{-iT \sum_{j=K+1}^N \xi_j L(j)}Xe^{iT \sum_{j=K+1}^N \xi_j L(j)}\big].
\end{equation}
Each $L(j)$ belongs to $\L$ and is one of the $B_\ell$ (see \eqref{Lstar}). Hence, we have
$$
\sum_{j=K+1}^N \xi_j L(j) = \sum_{\ell=1}^{|\L|} B_\ell \sum_{\substack{j=K+1,\ldots,N:\\ L(j)=B_\ell}} \xi_j
$$
and inserting the resolution of the identity \eqref{resid} into \eqref{m17} yields,
\begin{equation}
\label{m19}
\Phi_N(X) = \sum_{\lambda,\mu\in\Lambda} \E\Big[ \exp\big({-iT \sum_{\ell=1}^{|\L|} \big(b_\ell(\lambda)-b_\ell(\mu)\big) \sum_{\substack{j=K+1,\ldots,N:\\ L(j)=B_\ell}} \xi_j }\big)\Big] P_\lambda XP_\mu.
\end{equation}
All the $\xi_j$ are i.i.d. random variables and so the expectation splits into a product,
\begin{align}
\E\Big[ \exp\big({-iT \sum_{\ell=1}^{|\L|} \big(b_\ell(\lambda)-b_\ell(\mu)\big) \sum_{\substack{j=K+1,\ldots,N:\\ L(j)=B_\ell}} \xi_j }\big)\Big] =\prod_{\ell=1}^{|\L|} \Big(\E\big[e^{-iT(b_\ell(\lambda)-b_\ell(\mu))\xi}\big]\Big)^{\nu(\ell,N)},
\end{align}
where $\nu(\ell,N)$ is the number of indices $j=K+1,\ldots,N$ such that $L(j)=B_\ell$. This completes the proof of Theorem \ref{thm1}.\hfill \qed

\bigskip
\medskip

\section{Scaling regimes}
\label{ch:regimes}
\medskip

In many-body quantum theory one often considers scaled interactions where the Hamiltonian \eqref{m10} is replaced by a rescaled version, in order to reflect certain physical properties. The Hamiltonians we now consider are of the form 
\begin{equation}
\label{s1}
H_N= \epsilon_\s(N) H_\s+\epsilon_\r(N)H_{\r,N}+\epsilon(N) H_{\cc,N},
\end{equation}
where $H_\s$, $H_{\r,N}$ and $H_{\cc,N}$ are given in \eqref{eq:ch2-H} and where $\epsilon_\s(N),\epsilon_\r(N),\epsilon(N)\ge 0$ are generally $N$-dependent scaling factors. 

\begin{example} 
In mean-field models, the `interaction energy' $H_{\cc,N}$ is scaled so that it is of the same size as the system energy $H_\s$ for $N\rightarrow\infty$. This is achieved by choosing $\epsilon_\s(N)=1=\epsilon_\r(N)$ and $\epsilon(N)\propto |E_{\cc,N}|^{-1}$, provided $|E_{\cc, N}| > 0$. 
\end{example}

Taking $H_N$ as in \eqref{s1} yields a unitary propagator $U(N)$, \eqref{m7} a corresponding density matrix $\rho_N$, \eqref{m2} and a reduced $K$-vertex density matrix $\rho^{(K)}_N$, \eqref{m5}. Define the unitary maps on $\h_\s$,
\begin{align}
U_0(N) &= e^{-iT \epsilon_\s(N) H_\s}\label{U_0} \\
U_1(N) &= e^{-iT\,\E[\xi]\epsilon(N)\sum_{\ell=1}^{|\L|} \nu(\ell,N)B_\ell} 
\label{3.24-1}
\end{align}
and the CPTP map on $\b(\h_\s)$,
\begin{align}
\label{3.241}
\mathcal M_N[X] = \sum_{\lambda,\mu\in\Lambda} u_N(\lambda,\mu)P_\lambda X P_\mu,\qquad X\in\b(\h_\s),
\end{align}
where
\begin{equation}
u_N(\lambda,\mu) = \exp\Big[-\tfrac12 T^2 {\rm Var}[\xi]\epsilon(N)^2 \sum_{\ell=1}^{|\L|}  \nu(\ell,N)\big(b_\ell(\lambda)-b_\ell(\mu)\big)^2 \Big]
\label{s15.1}
\end{equation}
and ${\rm Var}[\xi] = \E[\xi^2] - \E[\xi]^2\ge0$ is the variance of $\xi$. The fact that $\mathcal M_N$ is CPTP follows from (is equivalent to) the positive-semidefiniteness of the matrix $u_N$ with elements $(u_N)_{\lambda,\mu} = u_N(\lambda,\mu)$, {\it cf.} \cite[Proposition~1]{MarcantoniMerkli}. We obtain the following analogue of Theorem \ref{thm2}. 

\begin{theorem}
\label{thm3.1}
Suppose that $\epsilon(N) K T \|A\|^2\le 3/20$. Then 
\begin{align}
\label{sss}
\Big\|&\rho_N^{(K)} -  U_0(N)U_1(N)\, \mathcal M_N \big[\rho^{\otimes K}\big] \, U_1(N)^*U_0(N)^*\Big\|_1 \le |\Lambda|\max_{\lambda,\mu\in\Lambda}\big|e^{\epsilon(N)^3R_N(\lambda,\mu)}-1\big|,
\end{align}
where the remainder satisfies the bound 
\begin{equation}
\label{Rem}
|R_N(\lambda,\mu)|\le 16\, \big(KT\|A\|^2\big)^3 (N - K).
\end{equation}
\end{theorem}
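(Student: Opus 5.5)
The plan is to repeat the proof of Theorem \ref{thm1} with the scaled Hamiltonian \eqref{s1}, and then expand the resulting characteristic functions to second order. Since all $A_i$ commute, $U(N)$ still factorizes into three commuting unitaries. The factor $e^{-iT\epsilon_\r(N)H_{\r,N}}$ drops out under the partial trace, and $e^{-iT\epsilon_\s(N)H_\s}$ factors out as $U_0(N)$. The computation \eqref{m15}--\eqref{m19} then goes through verbatim with $T$ replaced by $T\epsilon(N)$ in the $\s$-$\r$ part. This gives
\begin{equation*}
\rho_N^{(K)} = U_0(N)\Big(\sum_{\lambda,\mu\in\Lambda} \widetilde w_N(\lambda,\mu) P_\lambda \rho^{\otimes K}P_\mu\Big)U_0(N)^*,
\qquad
\widetilde w_N(\lambda,\mu)=\prod_{\ell=1}^{|\L|}\E\big[e^{-iT\epsilon(N)\Delta_\ell\xi}\big]^{\nu(\ell,N)},
\end{equation*}
where $\Delta_\ell=b_\ell(\lambda)-b_\ell(\mu)$.

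Next I compare $\widetilde w_N$ with the weights of the approximating state. Since $U_1(N)$ is diagonal in the $P_\lambda$, the operator $U_1\mathcal M_N[X]U_1^*$ equals $\sum_{\lambda,\mu} v_N(\lambda,\mu)P_\lambda XP_\mu$ with
\begin{equation*}
v_N(\lambda,\mu)=u_N(\lambda,\mu)\, e^{-iT\E[\xi]\epsilon(N)\sum_\ell\nu(\ell,N)\Delta_\ell}.
\end{equation*}
This is exactly the exponential of the first two cumulant terms of $\log\widetilde w_N$. Writing the cumulant expansion
\begin{equation*}
\log \E\big[e^{-ix\xi}\big] = -ix\E[\xi]-\tfrac12 x^2{\rm Var}[\xi]+r(x),
\end{equation*}
I get $\widetilde w_N(\lambda,\mu)=v_N(\lambda,\mu)\exp\big(\sum_\ell\nu(\ell,N) r(T\epsilon(N)\Delta_\ell)\big)$. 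I then define $R_N(\lambda,\mu)=\epsilon(N)^{-3}\sum_\ell\nu(\ell,N)r(T\epsilon(N)\Delta_\ell)$.

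Given this, the trace-norm estimate is the argument of \eqref{m22}--\eqref{m17'}. The difference has coefficients
\begin{equation*}
v_N(\lambda,\mu)\big(e^{\epsilon(N)^3R_N(\lambda,\mu)}-1\big),
\end{equation*}
and $|v_N|\le1$ because $u_N\in(0,1]$. One point needs care: the diagonal terms now also carry coefficients, but $\Delta_\ell=0$ when $\lambda=\mu$, so $R_N(\lambda,\lambda)=0$ and they cancel. I must also handle the conjugation by $U_0$, which preserves the trace norm. The Hölder/Cauchy--Schwarz bound then gives $|\Lambda|\max|e^{\epsilon^3R_N}-1|$, using $\Tr\rho^{\otimes K}=1$.

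The main obstacle is the bound \eqref{Rem} on $r$. This requires controlling the logarithm of the characteristic function, which needs $\psi$ to stay away from zero. This is where the hypothesis $\epsilon KT\|A\|^2\le 3/20$ enters. Since $|\Delta_\ell|\le 2K\|A\|$ and $|\xi|\le\|A\|$, the argument satisfies $|x\xi|\le 2\epsilon KT\|A\|^2\le 3/10$.

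I would first centre the variable, writing $\xi=\E\xi+\eta$, so that $r(x)=\log\E[e^{-ix\eta}]+\tfrac12x^2{\rm Var}\,\xi$. Then I set $z=\E[e^{-ix\eta}]-1$. A Taylor bound gives $|z|\le \tfrac12 x^2\E\eta^2\le 2x^2\|A\|^2$, and $|z+\tfrac12x^2\E\eta^2|\le\tfrac16|x|^3\E|\eta|^3$. Finally I use $|\log(1+z)-z|\le |z|^2/(2(1-|z|))$ for $|z|<1$.

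With $|x|\,\|\eta\|\le 2\cdot 2\epsilon KT\|A\|^2\le 3/5$, these combine to a bound of the form $|r(x)|\le C(|x|\|A\|)^3$. A crude constant check should give $|r|\le 2(2\epsilon KT\|A\|^2)^3=16\epsilon^3(KT\|A\|^2)^3$. Summing over $\ell$ with $\sum_\ell\nu(\ell,N)\le N-K$ then yields \eqref{Rem}. I expect to have to tune the constants in this step, possibly replacing $\E|\eta|^3$ by $(2\|A\|)^3$ or using the uncentred expansion instead.
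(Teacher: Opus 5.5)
Your proposal is correct. Its architecture matches the paper's:

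\begin{itemize}
\item the scaled version of Theorem~\ref{thm1} (the paper's Theorem~\ref{thm1'});
\item recognizing $U_1(N)\mathcal M_N[\cdot]U_1(N)^*$ as a block multiplier whose weights are the exponential of the first two cumulant terms;
\item the estimate \eqref{m17'} together with $|u_N|\le 1$.
\end{itemize}

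The difference lies in how you bound the remainder in $\log\psi$.

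\textbf{The paper's route.} The paper expands around $0$ without centering. It compares $\log\psi$ with $\log\alpha$, where $\alpha=1-\beta$ is the uncentered second-order Taylor polynomial. This requires both $\psi$ and $\alpha$ to lie in the disk $|z-1|\le\frac12$, hence first-order smallness of $|\delta|T\|A\|$. It then expands $\log(1-\beta)$ to second order and regroups $\beta+\frac12\beta^2$ into $iT\delta\,\E[\xi]+\frac12T^2\delta^2{\rm Var}[\xi]$. This produces three remainders $R_2,R_3,R_4$, the last containing cross terms such as $\E[\xi]\E[\xi^2]$, whose constants sum to at most $16$.

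\textbf{Your route.} Centering $\xi=\E[\xi]+\eta$ makes the drift phase $e^{-ix\E[\xi]}$ exact. Only one logarithm $\log(1+z)$ with $z=O(x^2)$ has to be expanded, so there is no branch ambiguity and there are no cross terms. The smallness hypothesis is then used only to keep $|z|$ away from $1$ and to absorb the quartic term $|z|^2$ into a cubic one. This is a cleaner argument than the paper's.

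\textbf{A caution on your constant check.} Write $w=2\epsilon(N)KT\|A\|^2\le 3/10$. With the crude moment bounds $\E[\eta^2]\le 4\|A\|^2$ and $\E|\eta|^3\le 8\|A\|^3$, you get about $2.07\,w^3$, i.e.\ roughly $16.5\,\epsilon(N)^3(KT\|A\|^2)^3$. That is marginally above the claimed $16$.

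Use instead $\E[\eta^2]={\rm Var}[\xi]\le\|A\|^2$ and $\E|\eta|^3\le 2\|A\|\,{\rm Var}[\xi]\le 2\|A\|^3$. Then $|z|\le w^2/2\le 0.045$, and
$|r(x)|\le(\tfrac13+0.04)\,w^3<3\,\epsilon(N)^3(KT\|A\|^2)^3$.
This beats the paper's constant. Since $r(0)=0$, summing only over $\ell$ with $b_\ell(\lambda)\neq b_\ell(\mu)$ also yields the sharper form \eqref{Rem-sharp}.
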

\begin{remark} Our proof shows a sharper bound than  \eqref{Rem}. For $\lambda,\mu\in\Lambda$, $\lambda\neq \mu$, we define
 \begin{equation}
 \label{s30}
 J(\lambda,\mu) = \big\{\ell : b_\ell(\lambda)\neq b_\ell(\mu)\big\},
 \end{equation}
 which is not the empty set (see \eqref{sep}). Our proof gives the bound 
 \begin{equation}\label{Rem-sharp}
 |R_N(\lambda,\mu)|\le 16\, \big(KT\|A\|^2\big)^3 \sum_{\ell\in J(\lambda,\mu)}\nu(\ell,N).
 \end{equation}
\end{remark}

\begin{discussion}
\leavevmode\par
\begin{itemize}
    \item $U_0(N)$ and $U_1(N)$ are unitary. Those operators give an oscillatory contribution, while the CPTP map $\mathcal M_N$ drives decoherence (relative to the blocks defined by the $P_\lambda$). The operations of conjugating by $U_0(N)$, $U_1(N)$ and the action of $\mathcal M_N$ commute.
    \item The phase of the oscillatory part induced by the environment interaction, $U_1(N)$, is proportional to the average $\E[\xi]\in\mathbb R$. The decoherence operator $\mathcal M_N$ has an exponent proportional to the variance ${\rm Var}[\xi]\ge 0$ (see \eqref{s15.1}). We have ({\it cf.}~\eqref{charfunct}, \eqref{gencond})
    $$
    {\rm Var}[\xi]=0 \Leftrightarrow \big\{a_s=a={\rm const.}\  \forall s \ {\rm s.t.}\  p_s\neq 0\big\} \Rightarrow \psi(\Delta) = e^{-iT\Delta a} \Rightarrow q=1,
    $$
    in which case we are not in the generic regime. In fact, ${\rm Var}[\xi]=0\Rightarrow \mathcal M_N[X]=X$ in which case there is no decoherence.
\end{itemize}
\end{discussion}

\begin{lemma}[Integral representation of $\mathcal M_N$]
\label{lem:intrep}
We have for all $X\in\mathcal B(\h_\s)$,
\begin{align}
\label{l0}
\mathcal M_N[X] = \pi^{-|\L|/2}\int_{{\mathbb R}^{|\L|}} U_N(x_1,\ldots,x_{|\L|}) \, X  \, U_N^*(x_1,\ldots,x_{|\L|})  e^{-(x^2_1+\cdots +x^2_{|\L|})}dx_1\cdots dx_{|\L|},
\end{align}
where 
\begin{align}
\label{10.01}
U_N(x_1,\ldots,x_{|\L|}) =e^{iT \sqrt{2{\rm Var}[\xi]}\, \sum_{\ell=1}^{|\L|} \epsilon(N) \sqrt{\nu(\ell,N)} B_\ell x_\ell},\qquad (x_1,\ldots,x_{|\L|})\in\mathbb R^{|\L|}.
\end{align}
\end{lemma}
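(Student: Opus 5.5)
The plan is to check the identity block by block, using the spectral decomposition \eqref{resid} of $\h_\s$. Since the $B_\ell$ commute and $B_\ell P_\lambda = b_\ell(\lambda)P_\lambda$ by \eqref{Beval}, we have
\[
U_N(x)P_\lambda = e^{iT\sqrt{2{\rm Var}[\xi]}\,\epsilon(N)\sum_\ell \sqrt{\nu(\ell,N)}\, b_\ell(\lambda)x_\ell}\,P_\lambda ,
\]
and likewise $P_\mu U_N^*(x)$ is $P_\mu$ times the conjugate phase with $\mu$ in place of $\lambda$. Writing $X=\sum_{\lambda,\mu}P_\lambda XP_\mu$ and inserting this into the right-hand side of \eqref{l0}, the integrand becomes
\[
\sum_{\lambda,\mu\in\Lambda}\exp\Big(i\sum_{\ell=1}^{|\L|} c_\ell(\lambda,\mu)\,x_\ell\Big)\,P_\lambda XP_\mu\, e^{-|x|^2},\qquad c_\ell(\lambda,\mu)=T\sqrt{2{\rm Var}[\xi]}\,\epsilon(N)\sqrt{\nu(\ell,N)}\,\big(b_\ell(\lambda)-b_\ell(\mu)\big).
\]

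Next, I will pull the finite sum and the fixed operators $P_\lambda XP_\mu$ outside the integral; this is legitimate since everything is finite-dimensional and absolutely integrable. The remaining scalar integral factorizes over $\ell$ into one-dimensional Gaussian integrals, for which
\[
\pi^{-1/2}\int_\R e^{icx}e^{-x^2}\,dx = e^{-c^2/4}.
\]
This is the characteristic function of $N(0,1/2)$, and it can be obtained by completing the square.

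It then remains to match exponents. We have
\[
c_\ell^2/4 = \tfrac12\, T^2\,{\rm Var}[\xi]\,\epsilon(N)^2\,\nu(\ell,N)\,\big(b_\ell(\lambda)-b_\ell(\mu)\big)^2,
\]
so the product over $\ell$ is exactly $u_N(\lambda,\mu)$ from \eqref{s15.1}. Comparing with \eqref{3.241} then gives \eqref{l0}.

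There is no real obstacle here; the only points needing care are the sign and normalization conventions (the $\sqrt2$ and the $\pi^{-|\L|/2}$) and the fact that the phases depend only on the differences $b_\ell(\lambda)-b_\ell(\mu)$. As a by-product, the representation exhibits $\mathcal M_N$ as an average of unitary conjugations, which re-proves that it is CPTP.
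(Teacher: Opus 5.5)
Your proof is correct and takes essentially the same route as the paper. Both arguments rest on the Gaussian Fourier identity $\pi^{-1/2}\int_{\mathbb R} e^{icx}e^{-x^2}\,dx=e^{-c^2/4}$ and on $B_\ell P_\lambda=b_\ell(\lambda)P_\lambda$, and your constants check out, since $c_\ell^2/4$ reproduces the exponent in \eqref{s15.1}; the only difference is direction, as the paper starts from $u_N(\lambda,\mu)$, writes it as a Gaussian integral and reassembles the phases into conjugation by $U_N$, while you expand the right-hand side of \eqref{l0} and integrate.
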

\medskip
\begin{discussion}
\label{dis4.6}
The local, coarse and collective resolutions introduced in Definition \ref{def:loccolres} lead to different structures in the Gaussian representation of $\mathcal M_N$ given in Lemma \ref{lem:intrep}. In the local case $\L = \{A_1, \ldots, A_K\}$, the Gaussian averages act independently on each vertex (each $B_\ell$ in \eqref{10.01} is one of the $A_i$), so $\mathcal{M}_N[\rho^{\otimes K}]$ is a product state. In the coarse resolution, the averages act collectively within each subsystem but independently between subsystems, inducing a product of Gaussian de Finetti mixtures. Collective resolution corresponds to a single Gaussian average shared by all vertices. 
\end{discussion}

{\bf Proof of Lemma \ref{lem:intrep}.}  Set for short 
 $\theta_\ell(N) = \frac12 T^2{\rm Var}[\xi]\epsilon(N)^2\nu(\ell,N)$. Then \eqref{s15.1} gives, 
\begin{align}
\label{l1}
u_N(\lambda,\mu) = \exp\big[ -\sum_{\ell=1}^{|\L|} \theta_\ell(N) (b_\ell(\lambda)-b_\ell(\mu))^2\big].
\end{align}
For $\delta\in\mathbb R$, $\theta\ge 0$, we have $e^{-\theta \delta^2}
=
\frac{1}{\sqrt{\pi}}
\int_{\mathbb R}e^{2i\sqrt{\theta}\delta x}e^{-x^2}\,dx$. 
It follows from \eqref{l1} that 
\begin{align}
\label{l2}
u_N(\lambda,\mu) = \pi^{-|\L|/2}\int_{{\mathbb R}^{|\L|}} e^{2i\sum_{\ell=1}^{|\L|} \sqrt{\theta_\ell(N)}\{b_\ell(\lambda)-b_\ell(\mu)\} x_\ell} e^{-(x^2_1+\cdots +x^2_{|\L|})} dx_1\cdots dx_{|\L|}.
\end{align}
Next we use that $B_\ell P_\lambda=b_\ell(\lambda)P_\lambda$,
$$
e^{2i\sum_{\ell=1}^{|\L|} \sqrt{\theta_\ell(N)}\{b_\ell(\lambda)-b_\ell(\mu)\} x_\ell} P_\lambda X P_\mu = e^{2i\sum_{\ell=1}^{|\L|} \sqrt{\theta_\ell(N)} B_\ell x_\ell} \big( P_\lambda X P_\mu \big) e^{-2i\sum_{\ell=1}^{|\L|} \sqrt{\theta_\ell(N)} B_\ell x_\ell}.
$$
Summing the latter expression over $\lambda,\mu\in\Lambda$ gives $e^{2i\sum_{\ell=1}^{|\L|} \sqrt{\theta_\ell(N)} B_\ell x_\ell} \, X \,  e^{-2i\sum_{\ell=1}^{|\L|} \sqrt{\theta_\ell(N)} B_\ell x_\ell}$. 
Thus,
\begin{align}
\mathcal M_N[X] = \pi^{-|\L|/2}\int_{{\mathbb R}^{|\L|}} e^{2i\sum_{\ell=1}^{|\L|} \sqrt{\theta_\ell(N)} B_\ell x_\ell} \, X  \, e^{-2i\sum_{\ell=1}^{|\L|} \sqrt{\theta_\ell(N)} B_\ell x_\ell}  e^{-(x^2_1+\cdots +x^2_{|\L|})}dx_1\cdots dx_{|\L|},
\nonumber
\end{align}
which is \eqref{l0}. This completes the proof of Lemma \ref{lem:intrep}.\hfill \qed
\bigskip

{\bf Proof of Theorem \ref{thm3.1}.} We first state a result analogous to Theorem \ref{thm1}. 
\begin{theorem}
\label{thm1'}
For $H_N$ given in \eqref{s1} the reduced state \eqref{m5} is given by
\begin{equation}
\label{2.32'}
\rho_N^{(K)} = e^{-iT \epsilon_\s(N) H_\s} \ \widetilde\Phi_N(\rho^{\otimes K})\ e^{iT \epsilon_\s(N) H_\s},
\end{equation}
with $H_\s$ as in \eqref{eq:ch2-H} and where $\widetilde\Phi_N$ is the CPTP map on $\b(\h_\s)$ defined as
\begin{equation}
\label{m17.1'}
\widetilde\Phi_N(X) = \sum_{\lambda,\mu\in\Lambda} \widetilde w_N(\lambda,\mu) P_\lambda XP_\mu,\qquad 
\widetilde w_N(\lambda,\mu) = \prod_{\ell=1}^{|\L|} \Big[\psi\Big(\epsilon(N)\big( b_\ell(\lambda)-b_\ell(\mu)\big)\Big)\Big]^{\nu(\ell,N)}
\end{equation}
and where $\psi$ is given in \eqref{charfunct}. 
\end{theorem}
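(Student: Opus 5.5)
The plan is to repeat the proof of Theorem \ref{thm1} verbatim, keeping track of the scaling factors. All the operators $A_i$ commute, so the three summands of \eqref{s1} commute. The propagator therefore factorizes as
\[
U(N)=e^{-iT\epsilon_\s(N)H_\s}\,e^{-iT\epsilon_\r(N)H_{\r,N}}\,e^{-iT\epsilon(N)H_{\cc,N}}.
\]
The first factor acts only on $\h_\s$, so it can be pulled out of the partial trace, which produces the outer conjugation in \eqref{2.32'}. The second factor acts only on $\h_\r$ and is unitary. It commutes with the third factor, so it can be moved next to the partial trace, where it disappears by cyclicity of $\Tr_\r$ for operators on $\h_\r$. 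What remains is the map
\[
\widetilde\Phi_N(X)=\Tr_\r\big(e^{-iT\epsilon(N)H_{\cc,N}}(X\otimes\rho^{\otimes(N-K)})e^{iT\epsilon(N)H_{\cc,N}}\big),
\]
which is CPTP because it is a unitary dilation followed by a partial trace.

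Next I would use the decomposition \eqref{m14}, $H_{\cc,N}=\sum_{j=K+1}^N L(j)A_j$. This writes $e^{-iT\epsilon(N)H_{\cc,N}}$ as a product of the commuting factors $e^{-iT\epsilon(N)L(j)A_j}$. I then insert the spectral resolution \eqref{m4} of each reservoir $A_j$, exactly as in \eqref{m16}. Since $\Tr(\Pi_s\rho\Pi_{s'})=\delta_{s,s'}p_s$, only the diagonal terms survive the partial trace. This gives the expectation representation
\[
\widetilde\Phi_N(X)=\E\big[e^{-iT\epsilon(N)\sum_j\xi_jL(j)}\,X\,e^{iT\epsilon(N)\sum_j\xi_jL(j)}\big],
\]
with i.i.d.\ $\xi_j$ distributed as in \eqref{eq:ps}.

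Finally I group the reservoir vertices according to which $B_\ell$ their operator $L(j)$ equals, and insert $\one=\sum_\lambda P_\lambda$ on both sides using \eqref{Beval}. The coefficient of $P_\lambda XP_\mu$ is then the expectation of $\exp\big(-iT\epsilon(N)\sum_\ell(b_\ell(\lambda)-b_\ell(\mu))\sum_{j:L(j)=B_\ell}\xi_j\big)$. By independence this factorizes into $\prod_\ell\psi\big(\epsilon(N)(b_\ell(\lambda)-b_\ell(\mu))\big)^{\nu(\ell,N)}$, which is $\widetilde w_N(\lambda,\mu)$ in \eqref{m17.1'}.

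There is no genuine obstacle. The only point needing care is to check that the factor $\epsilon(N)$ passes through each step linearly. It does, because $\psi$ is evaluated at $\epsilon(N)\Delta$ rather than at $\Delta$, so every step of the proof of Theorem \ref{thm1} goes through with $\Delta$ replaced by $\epsilon(N)\Delta$.
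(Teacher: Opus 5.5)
Your proposal is correct and follows the paper's route. The paper only remarks that the proof of Theorem \ref{thm1} carries over because the propagator still splits into commuting system, reservoir and interaction unitaries, now with scaled generators, so that $\psi$ is evaluated at $\epsilon(N)\big(b_\ell(\lambda)-b_\ell(\mu)\big)$; you make each of those steps explicit.
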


The proof of Theorem \ref{thm1'} follows directly from the proof of Theorem \ref{thm1}. This is so because the unitary implementing the dynamics still splits into three parts (system, reservoir, interaction) as before, each one now scaled. Let us return to the proof proper of Theorem \ref{thm3.1}.

We have in mind that $\epsilon(N)\rightarrow 0$ for large $N$ and so we expand the characteristic function $\psi$, \eqref{charfunct} for small arguments $\delta$.  Iterating the equality $e^{-iT\delta \xi} = 1-iT\xi\int_0^\delta e^{-iTs\xi}ds$ three times, taking the expectation, and using the bound
\begin{equation}
\label{s17}
\big|\E[\xi^k]\big|\le \E\big[|\xi|^k\big]\le \|A\|^k,\qquad k\in\mathbb N,
\end{equation}
yields 
\begin{equation}
\label{charfunct'}
\Big| \psi(\delta)-\big(  1-iT\delta\,  \E[\xi]-\tfrac12 T^2\delta^2\E[\xi^2] \big)\Big|\le \frac{T^3 |\delta|^3}{6}  \E\big[|\xi|^3\big]\le  \frac{T^3 |\delta|^3}{6} \|A\|^3,\qquad \delta\in\R.
\end{equation}
For 
\begin{equation}
\label{bd1}
0\le|\delta| \,T \|A\| \le \frac{3}{10}
\end{equation}
both $\psi\equiv\psi(\delta)$ and 
\begin{equation}
\label{eq:tbd}
\alpha =  1-\beta,\qquad \text{where\ \  $\beta = iT\delta\,  \E[\xi]+\tfrac12 T^2\delta^2\E[\xi^2]$,}
\end{equation}
are in the closed disk of radius $\frac12$ centered at $z=1\in\C$. We expand the (complex) logarithm, $\log z=\log w
+(z-w)\int_0^1\frac{dt}{w+t(z-w)}$ to arrive at the bound (use $\min_{0\le t\le 1}|\psi+t(\alpha-\psi)|\ge \frac12$ and \eqref{charfunct'})
\begin{align}
\big| \log\psi -\log \alpha \big| \le 2|\psi-\alpha| \le\frac{T^3 |\delta|^3}{3} \|A\|^3.
\end{align}
Then for $\nu\in\mathbb N$,
\begin{align}
\label{3.7}
\psi^\nu = e^{\nu\log\psi}= e^{\nu(\log\alpha +|\delta|^3R_1)},\qquad |R_1|\le \frac13 (T \|A\|)^3.
\end{align}
Set 
\begin{equation}
\label{eps}
\delta_\ell\equiv\epsilon(N)\big(b_\ell(\lambda)-b_\ell(\mu)\big),\qquad \alpha_\ell=  1-\beta_\ell,\qquad \beta_\ell=iT\delta_\ell\,  \E[\xi]+\tfrac12 T^2\delta_\ell^2\E[\xi^2]
\end{equation}
We have for all $\lambda,\mu\in\Lambda$ and $\ell=1,\ldots,|\L|$,
\begin{align}
\label{bnd'}
|b_\ell(\lambda)-b_\ell(\mu)|\le 2K\|A\|,
\end{align}
because each $b_\ell(\lambda)$ is an eigenvalue of an operator $B=\sum_{j\in F}A_j\in\L$, with $
F\subseteq\llbracket1,K\rrbracket$, so that we have $|b_\ell(\lambda)|\le K\|A\|$.
We assume that 
\begin{align}
\label{assume}
\epsilon(N)KT\|A\|^2\le \frac{3}{20}.
\end{align}
This implies that \eqref{bd1} holds for all $\delta_\ell$. Recall the definition \eqref{s30}. We obtain from \eqref{m17.1'}, \eqref{3.7} and \eqref{eps},
\begin{equation}
\label{s14} 
\widetilde w_N(\lambda,\mu) = \prod_{\ell\in J(\lambda,\mu) } e^{\nu(\ell,N)\log(\psi(\delta_\ell)) }= \exp\Big[ \sum_{\ell\in J(\lambda,\mu)} \nu(\ell,N)\big(\log\big(\alpha_\ell)+\epsilon(N)^3R_2(\ell)\big) \Big]
\end{equation}
where (use \eqref{bnd'})
\begin{equation}
\label{R2}
|R_2(\ell)|\le\frac{T^3\|A\|^3}{3}\max_\ell |b_\ell(\lambda)-b_\ell(\mu)|^3\le  \frac83 \big(K T \|A\|^2\big)^3.
\end{equation}
Next we expand
\begin{align}
\log(\alpha_\ell)  = \log(1-\beta_\ell) &=-\beta_\ell-\frac12 \beta_\ell^2 -\beta_\ell^3\int_0^1\frac{(1-t)^2}{(1-t\beta_\ell)^3}\,dt\nonumber\\
&= -\beta_\ell-\frac12 \beta_\ell^2 +\epsilon(N)^3 R_3(\ell),\qquad |R_3(\ell)|\le \frac{1}{2}\epsilon(N)^{-3}|\beta_\ell|^3.
\label{s8}
\end{align}
Indeed, setting $x_\ell=|\delta_\ell|T\|A\|$,
we have by \eqref{assume}, \eqref{eps} and \eqref{bnd'}
\[ 
x_\ell\le 2\epsilon(N)KT\|A\|^2\le \frac3{10}.
\]
Since
\[
0\le\Re\beta_\ell
=\frac12T^2\delta_\ell^2\E[\xi^2]
\le\frac12x_\ell^2
\le\frac9{200},
\]
we obtain $|1-t\beta_\ell|\ge 1-t\Re\beta_\ell
\ge 191/200>9/10$ for $0\le t\le1$.
Consequently,
\[
\int_0^1
\frac{(1-t)^2}{|1-t\beta_\ell|^3}\,dt
\le\frac1{3(9/10)^3}<\frac12,
\]
which gives the remainder estimate in \eqref{s8}.
Furthermore, $|\beta_\ell|
\le x_\ell+\frac12x_\ell^2
\le\frac65x_\ell$, so
\begin{align}
\label{R3}
|R_3(\ell)|
&\le\frac12\epsilon(N)^{-3}
\left(\frac65|\delta_\ell|T\|A\|\right)^3 \le4\left(\frac65\right)^3
(KT\|A\|^2)^3
<7(KT\|A\|^2)^3.
\end{align}
Next we expand the main term on the right-hand side of \eqref{s8},
\begin{align}
\beta_\ell +\tfrac12 \beta_\ell^2 = iT\delta_\ell\,  \E[\xi] +\tfrac12 T^2\delta^2_\ell{\rm Var}[\xi] +\epsilon(N)^3 R_4(\ell),
\label{s16}
\end{align}
where $R_4(\ell) = \frac{i}{2}T^3(b_\ell(\lambda) - b_\ell(\mu))^3\E[\xi]\E[\xi^2] + \frac{1}{8}\epsilon(N)T^4(b_\ell(\lambda) - b_\ell(\mu))^4 (\E[\xi^2])^2$. As this is split into real and imaginary parts we have,
\[
|R_4(\ell)|^2 = \big(\tfrac{1}{2}T^3(b_\ell(\lambda) - b_\ell(\mu))^3\E[\xi]\E[\xi^2]\big)^2 + \big(\tfrac{1}{8}\epsilon(N)T^4(b_\ell(\lambda) - b_\ell(\mu))^4 (\E[\xi^2])^2\big)^2.
\] 
Using $|b_\ell(\lambda) - b_\ell(\mu)| \le 2K\|A\|$ and \eqref{s17}, the cubic term is at most $4(KT\|A\|^2)^3$. For the quartic term, $\frac{1}{8}\epsilon(N)T^4(b_\ell(\lambda) - b_\ell(\mu))^4 (\E[\xi^2])^2 \le 2\epsilon(N) (KT\|A\|^2)^4\le\frac{3}{10}(KT\|A\|^2)^3$, where we used \eqref{assume}. Combining these two terms we get 
\begin{align}
\label{R4}
|R_4(\ell)| \le \sqrt{16+\tfrac{9}{100}}\ (KT\|A\|^2)^3 .
\end{align}
Combining \eqref{s16}, \eqref{s8} and \eqref{s14} gives,
\begin{align}
\label{s15} 
\widetilde w_N(\lambda,\mu)
& = \exp\Big[ -iT \,\E[\xi]\epsilon(N)\sum_{\ell=1}^{|\L|}\nu(\ell,N) (b_\ell(\lambda)-b_\ell(\mu))\Big]\, u_N(\lambda,\mu) \, e^{\epsilon(N)^3 R_N(\lambda,\mu)},
\end{align}
where $u_N(\lambda,\mu)$ is given in \eqref{s15.1} 
and the remainder $R_N(\lambda,\mu) = \sum_{\ell\in J(\lambda,\mu)}\nu(\ell,N)[R_2+R_3-R_4]$ satisfies the bound \eqref{Rem}.  Since $b_\ell(\mu)P_\mu = B_\ell P_\mu$ (see \eqref{Beval}) we use the expression \eqref{s15} in the definition of $\widetilde\Phi_N$ \eqref{m17.1'}, to obtain
\begin{align}
\widetilde\Phi_N(X) &= U_1(N) \Big( \sum_{\lambda,\mu\in\Lambda}e^{\epsilon(N)^3 R_N(\lambda,\mu)} \, u_N(\lambda,\mu) P_\lambda XP_\mu\Big) U_1(N)^*\nonumber\\
&=U_1(N) \Big( \sum_{\lambda,\mu\in\Lambda} \, u_N(\lambda,\mu) P_\lambda XP_\mu\Big)U_1(N)^* + \eta_N[X],
\label{s18}
\end{align}
where $U_1(N)$ is defined in \eqref{3.24-1} 
and where $\eta_N$ is a linear operator on $\b(\h_\s)$ satisfying ($X\in \b(\h_\s)$)
\begin{align*}
\|\eta_N[X]\|_1 &=\|\sum_{\lambda,\mu\in\Lambda}(e^{\epsilon(N)^3R_N(\lambda,\mu)}-1) u_N(\lambda,\mu)P_\lambda XP_\mu\|_1\\
&\le \big[\max_{\lambda,\mu\in\Lambda}\big|e^{\epsilon(N)^3R_N(\lambda,\mu)}-1\big| \big]\sum_{\lambda,\mu\in\Lambda} \|P_\lambda XP_\mu\|_1\le |\Lambda|\max_{\lambda,\mu\in\Lambda}\big|e^{\epsilon(N)^3R_N(\lambda,\mu)}-1\big|,
\end{align*}
by \eqref{m17'} and for $X\ge 0$,\  $\Tr(X)=1$. Using Theorem \ref{thm1'} and collecting the bounds \eqref{R2}, \eqref{R3} and \eqref{R4} shows \eqref{sss} and \eqref{Rem}.  This concludes the proof of Theorem \ref{thm3.1}.\hfill\qed

\subsection{Scaling phases} There appear two scalings, $\epsilon(N)\nu(\ell,N)$ and $\epsilon(N)^2\nu(\ell,N)$ in the oscillatory phase term $U_1(N)$, \eqref{3.24-1} and the decoherence term $\mathcal M_N$, \eqref{3.241}, \eqref{s15.1}. The asymptotic form of $\rho_N^{(K)}$ depends on their behaviour as $N\rightarrow\infty$. We consider scalings such that
\begin{equation}
\label{3.28}
\lim_{N\rightarrow\infty}\epsilon(N)^3 \nu(\ell,N) = 0, \qquad \forall \ell =1,\ldots,|\L|.
\end{equation}
This implies that the remainder contribution in Theorem \ref{thm3.1} disappears for large $N$, since by \eqref{Rem-sharp} we have $\lim_{N\rightarrow\infty}\epsilon(N)^3R_N(\lambda,\mu)=0$, and consequently the right-hand side of \eqref{sss} tends to zero. The condition \eqref{3.28} implies that $\epsilon(N)\rightarrow 0$ and since we have $\nu(\ell,N)\le N - K$, \eqref{3.28} holds automatically if  $\epsilon(N)^3N\rightarrow 0$. Recall the definition \eqref{s30} of $J(\lambda,\mu)$.

\begin{theorem}
\label{thm3.4}
Assume that \eqref{3.28} holds and that $\rm Var[\xi]\neq 0$.

\begin{itemize}
\item[1.] Suppose that for all $\lambda,\mu\in \Lambda$ with  $\lambda\neq\mu$, $\forall \ell\in J(\lambda,\mu)$, the limit $\lim_{N\rightarrow\infty} \epsilon(N)^2\nu(\ell,N) =\nu_\ell\in[0,\infty)$ exists. Then 
\begin{align}
\label{aa}
\lim_{N\rightarrow\infty} U_1(N)^*U_0(N)^* \big(\rho_N^{(K)}\big) U_0(N)U_1(N)=  \P\big[\rho^{\otimes K}\big] +  \P_f\big[\rho^{\otimes K}\big],
\end{align}
where $\P$ is given in \eqref{defP} and where
\begin{align}
\label{3.31}
\P_f[X]=\!\!\sum_{\substack{\lambda,\mu\in\Lambda\\ \lambda\neq \mu}} \exp\Big[-\tfrac12 T^2 {\rm Var}[\xi]\,  \displaystyle \sum_{\ell=1}^{|\L|}  \nu_\ell\big(b_\ell(\lambda)-b_\ell(\mu)\big)^2 \Big] P_\lambda XP_\mu.
\end{align} 
\item[2.] Suppose for all $\lambda,\mu\in \Lambda$ with  $\lambda\neq\mu$, $\exists\ell\in J(\lambda,\mu)$ s.t.
$\lim_{N\rightarrow\infty}\epsilon(N)^2\nu(\ell,N)=\infty$, then \eqref{aa} holds with $\P_f=0$.
\end{itemize}
\end{theorem}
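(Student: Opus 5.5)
The plan is to reduce everything to Theorem \ref{thm3.1} and then take the limit entrywise in the finite-dimensional block decomposition $\{P_\lambda X P_\mu\}$.

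First, the condition \eqref{3.28} forces $\epsilon(N)\to0$. So for $N$ large the hypothesis $\epsilon(N)KT\|A\|^2\le 3/20$ of Theorem \ref{thm3.1} holds. By the sharper bound \eqref{Rem-sharp} we get
\[
\epsilon(N)^3|R_N(\lambda,\mu)|\le 16(KT\|A\|^2)^3\sum_{\ell\in J(\lambda,\mu)}\epsilon(N)^3\nu(\ell,N)\to0 .
\]
For $\lambda=\mu$ one simply has $R_N=0$, since $\widetilde w_N(\lambda,\lambda)=1=u_N(\lambda,\lambda)$. Hence the right side of \eqref{sss} tends to zero. Conjugation by the unitary $U_0(N)U_1(N)$ preserves the trace norm. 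It therefore suffices to show that $\mathcal M_N[\rho^{\otimes K}]$ converges to $\P[\rho^{\otimes K}]+\P_f[\rho^{\otimes K}]$. Here I use that $U_0(N)$, $U_1(N)$ and $\mathcal M_N$ commute with one another, as noted in the Discussion after Theorem \ref{thm3.1}. This lets me write
\[
U_1^*U_0^*\,\rho_N^{(K)}\,U_0U_1=\mathcal M_N[\rho^{\otimes K}]+o(1)
\]
in trace norm.

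Next, since $\Lambda$ is finite, convergence of $\mathcal M_N[X]=\sum_{\lambda,\mu}u_N(\lambda,\mu)P_\lambda XP_\mu$ reduces to convergence of each coefficient $u_N(\lambda,\mu)$. The diagonal coefficients equal $1$ and produce $\P[X]$. For $\lambda\ne\mu$, the exponent in \eqref{s15.1} only involves $\ell\in J(\lambda,\mu)$, since the other terms vanish.
\begin{itemize}
\item In case 1, each $\epsilon(N)^2\nu(\ell,N)$ with $\ell\in J(\lambda,\mu)$ converges to $\nu_\ell$. Continuity of the exponential then gives $u_N(\lambda,\mu)\to\exp[-\frac12T^2{\rm Var}[\xi]\sum_\ell\nu_\ell(b_\ell(\lambda)-b_\ell(\mu))^2]$. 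This is the coefficient in \eqref{3.31}; for $\ell\notin J$ the factor $(b_\ell(\lambda)-b_\ell(\mu))^2$ is zero, so the undefined $\nu_\ell$ are harmless.
\item In case 2, some $\ell\in J(\lambda,\mu)$ has $\epsilon(N)^2\nu(\ell,N)\to\infty$ with $(b_\ell(\lambda)-b_\ell(\mu))^2>0$. Because ${\rm Var}[\xi]>0$ and all terms in the exponent are nonnegative, the exponent tends to $-\infty$, so $u_N(\lambda,\mu)\to0$ and $\P_f=0$.
\end{itemize}

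The main point needing care is not analytic; it is the bookkeeping in the first step. One must verify the commutation of $U_1(N)$ with $\mathcal M_N$ and with $U_0(N)$, which holds because $U_1(N)$ is a function of the $B_\ell$ and hence diagonal in the $P_\lambda$ blocks. The commutation of $U_0(N)$ with $\mathcal M_N$ is claimed in the Discussion; if needed, I would rely instead only on the exact form \eqref{sss}, with $U_0U_1$ outermost, which is all the argument uses. Finally, I must check that the trace-norm error bound from \eqref{m17'} applies uniformly in $N$, since $\|\rho^{\otimes K}\|_1=1$.
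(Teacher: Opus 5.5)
Your proposal is correct and follows essentially the same route as the paper: use \eqref{3.28} together with \eqref{Rem-sharp} to make the remainder in Theorem~\ref{thm3.1} vanish, conjugate back by the unitary $U_0(N)U_1(N)$, and then pass to the limit coefficientwise in $u_N(\lambda,\mu)$, with the diagonal terms giving $\P$. Your extra checks are correct details that the paper leaves implicit: that $\epsilon(N)\to0$ makes the hypothesis of Theorem~\ref{thm3.1} hold for large $N$, that $R_N(\lambda,\lambda)=0$, and that no commutation of $U_0(N),U_1(N)$ with $\mathcal M_N$ is actually needed.
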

\medskip

{\bf Proof of Theorem \ref{thm3.4}.} 
\eqref{3.28} and \eqref{Rem-sharp} imply that $\lim_{N\rightarrow\infty}\epsilon(N)^3R_N(\lambda,\mu)=0$ and we obtain from \eqref{sss} that 
\begin{align}
\label{3.271}
\lim_{N\rightarrow\infty} \Big\|&\rho_N^{(K)} -  U_0(N)U_1(N)\, \mathcal M_N \big[\rho^{\otimes K}\big] \, U_1(N)^*U_0(N)^*\Big\|_1=0.
\end{align}
The terms with $\lambda=\mu$ in the sum \eqref{3.241} give $\P[X]$, as $u_N(\lambda,\lambda)=1$. Next let $\lambda\neq \mu$. Under condition 1.~of Theorem \ref{thm3.4} we have  $\lim_{N\rightarrow\infty} u_N(\lambda,\mu) = \exp\big[-\tfrac12 T^2 {\rm Var}[\xi]\,   \sum_{\ell=1}^{|\L|}  \nu_\ell\big(b_\ell(\lambda)-b_\ell(\mu)\big)^2 \big]$, which gives the term $\P_f[X]$ in \eqref{aa}. Under condition 2.~of Theorem \ref{thm3.4} there is an $\ell$ such that $(b_\ell(\lambda)-b_\ell(\mu))^2>0$ and $\epsilon(N)^2\nu(\ell,N)\rightarrow\infty$. Then $\lim_{N\rightarrow\infty}u_N(\lambda, \mu)=0$. This means that only the diagonal terms $\lambda=\mu$ survive in $\mathcal M_N$ in the limit $N\rightarrow\infty$. \hfill \qed

\begin{discussion}
\leavevmode\par
\begin{itemize}
    \item The decoherence effect due to the interaction with the reservoir is encoded in $\P$ and $\P_f$, while $U_1(N)$ describes the phase oscillations induced by the interaction with the reservoir. The term $U_0(N)$ is the system dynamics without the interaction with the reservoir (see \eqref{U_0}). 
    \item If $\s$ is not well-connected then the situation 2. in Theorem \ref{thm3.4} does not occur (see \eqref{equivalence}) and so for generic $\rho$ the term $\P_f[\rho^{\otimes K}]$ does not vanish. Coherences generically survive the limit $N\rightarrow\infty$. If $\s$ is well-connected, then for all $\lambda\neq\mu$ $\exists \ell\in J(\lambda,\mu)$ such that $\nu(\ell,N)\rightarrow\infty$ as $N\rightarrow\infty$. Whether or not coherences survive depends on the growth rate of $\nu(\ell,N)$ relative to $\epsilon(N)^2$.
    \item In the centered case $\E[\xi]=0$ we have $U_1(N)=\mathbf 1$ (see \eqref{3.24-1}). For $\E[\xi]\neq 0$, $U_1(N)$ might converge even if the exponent does not,  as oscillations in $N$ by multiples of $2\pi i$ are compensated by the periodicity of the exponential. We disregard this non-generic parameter regime. A sufficient condition for $U_1(N)$ to converge as $N\rightarrow\infty$ is that $\epsilon(N)\nu(\ell,N)$ has a limit in $\nu'_\ell\in[0,\infty)$ for each $\ell$. Under this sufficient condition, $\epsilon(N)^2\nu(\ell,N)\rightarrow 0$ for all $\ell$.  This means that $\mathcal M_N[X]\to X$ and \eqref{aa} implies,
    \begin{align}
    \label{a}
    \lim_{N\rightarrow\infty} \Big\|& U_0(N)^*\rho_N^{(K)} U_0(N) -e^{-iT\,\E[\xi]\sum_{\ell=1}^{|\L|}\nu'_\ell B_\ell} \big(\rho^{\otimes K}\big) e^{iT\,\E[\xi]\sum_{\ell=1}^{|\L|}\nu'_\ell B_\ell} \Big\|_1=0.
    \end{align}
\end{itemize}
\end{discussion}

\begin{example}
\begin{itemize}
\item[{\bf 1.}] Suppose that 
$\lim_{N\rightarrow\infty} \nu(\ell,N) N^{-\alpha}=\nu_\ell\in(0,\infty)$, $\epsilon(N)=N^{-\beta}$ for some exponents $\alpha>0$, $\beta>0$. Then (recall \eqref{defP})
\begin{align}
\lim_{N\rightarrow\infty} &\mathcal M_N[X]\nonumber\\
&=\left\{
\begin{array}{cl}
\P[X]=\displaystyle\sum_{\lambda\in\Lambda} P_\lambda X P_\lambda & \beta<\alpha/2\\[4ex]
\displaystyle\sum_{\lambda,\mu\in\Lambda} \exp\Big[-\tfrac12 T^2 {\rm Var}[\xi]\,  \displaystyle \sum_{\ell=1}^{|\L|}  \nu_\ell\big(b_\ell(\lambda)-b_\ell(\mu)\big)^2 \Big] P_\lambda XP_\mu \qquad  & \beta=\alpha/2\\[4ex]
X & \beta>\alpha/2
\end{array}
\right.
\end{align}
The case $\beta<\alpha/2$ can be interpreted as the regime of strong interaction. We obtain the same decoherence result as before (no scaling; $\beta=0$) (see Theorem \ref{thm2}). For $\beta>\alpha/2$ there is no decoherence effect (the coupling is too weak). The intermediate, balanced scaling regime $\beta=\alpha/2$ gives an explicit exponential suppression factor for each block $P_\lambda \cdot P_\mu$. The behaviour of the phases \eqref{3.24-1} for $\mathbb E[\xi]\neq 0$ is as follows. For $\beta<\alpha$ the phases continue oscillating (generically, see above), while
\begin{align}
\lim_{N\rightarrow\infty}U_1(N) 
&=\left\{
\begin{array}{cl}
\exp\Big[-iT\E[\xi]\displaystyle\sum_{\ell=1}^{|\L|}\nu_\ell B_\ell\Big] \qquad  & \beta=\alpha\\[4ex]
1 & \beta>\alpha
\end{array}
\right.
\end{align}

\item[\bf 2.] {\it Mean-field scaling and Hartree equation.}\  Take the scaling $\epsilon(N)=\frac{1}{\nu(1, N)}$ and suppose that $B_1=\sum_{k=1}^K A_k\in \L$ has incidence number $\nu(1,N)\propto N$, while all other incidence numbers satisfy $\lim_{N\rightarrow\infty} \nu(\ell,N)/N=0$ ($\ell\neq 1$). This means that asymptotically, almost all reservoir vertices connected to $\s$ are connected to every vertex of $\s$, with at most a subleading number of different connections ($B_\ell$, $\ell\neq 1$). With this choice of scaling, $\eps(N)\nu(1,N)=1$ for all sufficiently large $N$. Then 
\begin{align}
\label{sss-}
\lim_{N\rightarrow\infty} \Big\|\rho_N^{(K)} -  U_0(N) \big(\sigma^{\otimes K}\big) U_0(N)^*\Big\|_1 =0,\quad \sigma = \sigma(T)=e^{-iT\E[\xi]A} \rho\, e^{iT\E[\xi]A}.
\end{align}
This follows from Theorem \ref{thm3.1} since $\lim_{N\rightarrow\infty}\mathcal M_N[X]=X$  and $\lim_{N\rightarrow\infty}U_1(N)= \big(e^{-iT\E[\xi]A}\big)^{\otimes K}$ (see \eqref{3.241} and \eqref{3.24-1}). Furthermore, $\sigma(T)$ satisfies the non-linear Hartree equation
\begin{align}
\label{Hartree}
i\partial_T\sigma(T) = {\rm Tr}_2\big[A\otimes A, \sigma(T)\otimes \sigma(T)\big],\qquad \sigma(0)=\rho,
\end{align}
where the trace is taken over the second factor and $[\cdot,\cdot]$ is the commutator. 
This is a manifestation of the well-known phenomenon that mean-field limits generate non-linear evolution equations \cite{Spohn,MerBer, MerRaf}. \eqref{Hartree} is readily seen from $i\partial_T\sigma(T)= \E[\xi]\, [A,\sigma(T)]$ on the one hand and ${\rm Tr}_2[A\otimes A, \sigma(T)\otimes \sigma(T)\big] = [A,\sigma(T)]\, {\rm Tr}(\sigma(T)A)$ with ${\rm Tr}(\sigma(T)A)={\rm Tr}(\rho A)=\E[\xi]$ on the other hand.

\item[\bf 3.] {\it Fluctuation scaling and Gaussian de Finetti representation.}\ Take the scaling $\epsilon(N)\propto\frac{1}{\sqrt N}$, which is linked to fluctuation theory \cite{FaMe}. Suppose as in the previous point that $B_1=\sum_{k=1}^K A_k\in \L$ has incidence number $\nu(1,N)\propto N$ with $\lim_{N\rightarrow\infty} \nu(1,N)/N=1$, while all other incidence numbers satisfy $\lim_{N\rightarrow\infty} \nu(\ell,N)/N=0$ ($\ell\neq 1$). Then
\begin{align}
\label{sss--}
\lim_{N\rightarrow\infty} \Big\|\rho_N^{(K)} -  U_0(N) U_1(N)\frac{1}{\sqrt\pi}\Big(\int_{\mathbb R}\big(\sigma'(x)\big)^{\otimes K}e^{-x^2}dx\Big) U_1(N)^*U_0(N)^*\Big\|_1 =0,
\end{align}
where
\begin{align}
\sigma'(x) = e^{iT\sqrt{2{\rm Var}[\xi]} Ax} \rho\,  e^{-iT\sqrt{2{\rm Var}[\xi]}Ax}.
\end{align}
This follows from Theorem \ref{thm3.1} and  since $\lim_{N\rightarrow\infty}U_N(x_1,\ldots,x_{|\L|}) = e^{iT\sqrt{2{\rm Var}[\xi]}(\sum_{k=1}^K A_k) x_1}$ (see \eqref{10.01}) so that by Lemma \ref{lem:intrep},
\begin{align}
\lim_{N\rightarrow\infty}\mathcal M_N[\rho^{\otimes K}]=\frac{1}{\sqrt \pi} \int_{\mathbb R}\big(\sigma'(x)\big)^{\otimes K}e^{-x^2}dx.
\end{align}
In this parameter regime, $\epsilon(N)\nu(1,N)\propto\sqrt N\rightarrow\infty$. Consequently, if $\E[\xi]\neq0$, the unitary $U_1(N)$ \eqref{3.24-1} generally does not have a limit (except accidentally by periodicity of the exponential). 
If $\E[\xi]=0$, then $U_1(N)=1$, and the reduced system state $\rho_N^{(K)}$ converges, up to the system evolution $U_0(N)$, to the Gaussian de Finetti mixture 
\begin{align*}
    \frac{1}{\sqrt\pi} \int_{\mathbb R} \big(\sigma'(x)\big)^{\otimes K}e^{-x^2}\,dx.
\end{align*}
\end{itemize}
\end{example}

\bigskip
\medskip

\section{Canonical examples}
\label{sec:cano}
\medskip

\subsection{All-to-all connected systems}
\label{sec:all to all}

Consider the general setup \eqref{m1}-\eqref{m7} corresponding to a completely connected graph, 
$$
E(N)=\{ (i,j) :1 \le i < j \le N\}.
$$
The set of boundary operators \eqref{Lstar} is the singleton
$$
\L = \{B_1\},\qquad B_1=\sum_{\ell=1}^KA_\ell
$$
and the system is collectively resolved (Definition \ref{def:loccolres}). As the only incidence number \eqref{incidence number} is $\nu(1,N)=N-K$, we have $\ve_K(N) =N-K$ \eqref{connection number} and $\s$ is well-connected \eqref{wellconn}. The eigenvalues of $B_1$ are of the form 
\begin{equation}
\label{3.3}
b(n_1,\ldots,n_\sss)= n_1a_1+\cdots+n_\sss a_\sss \quad \text{where} \quad  n_j\in \mathbb N, \quad \sum_{s=1}^\sss n_s = K,   
\end{equation}
Let us assume that the eigenvalues $a_s$ of $A$ are rationally independent modulo constants. Then $b(n_1,\ldots,n_\sss)= b(n'_1,\ldots,n'_\sss)$ if and only if $n_1=n_1',\ldots,n_\sss=n_\sss'$. In this case the eigenvalues are  labelled uniquely by $n_1,\ldots,n_\sss$ and the eigenprojections are
\begin{equation}
P_{n_1,\ldots,n_\sss} = \sum_{\substack{(s_1,\ldots,s_K)\in\llbracket 1,\sss \rrbracket^K
\\ |\{j:s_j = r\}| = n_r \text{ for all } r}} \Pi_{s_1}\otimes \cdots\otimes \Pi_{s_K}.
\end{equation}
We obtain
\begin{equation}
\rank\, P_{n_1,\ldots,n_\sss} = {K\choose {n_1,\ldots,n_\sss}} \prod_{s=1}^\sss \big(\rank\,\Pi_s\big)^{n_s} = \frac{K!}{(n_1)!\cdots (n_\sss)!} \prod_{s=1}^\sss \big(\rank\,\Pi_s\big)^{n_s}.
\end{equation}
The cardinality of the index set $\Lambda$ is given by the number of solutions to the equation $n_1+\cdots +n_\sss=K$, for $n_s\in\mathbb N$ including zero, which by the `stars and bars' formula is,
\begin{equation}
|\Lambda| = {K+\sss-1 \choose K}.
\end{equation}
The set of Bohr energies \eqref{Bohr} is $\mathfrak B = \big\{(n_1-n'_1)a_1+\cdots +(n_\sss-n'_\sss)a_\sss \big\}$, where the $n_s$ and $n'_s$ satisfy the constraint of \eqref{3.3}. The genericity condition $q<1$ is equivalent to the following: for every $\Delta\in\mathfrak B\setminus\{0\}$ there exist $s,s'$ with $p_sp_{s'}>0$ such that
\[
(a_s-a_{s'})\Delta\notin\frac{2\pi}{T}\mathbb Z.
\]
This follows from \eqref{psi1}: $|\psi(\Delta)|=1$ precisely when all phases associated with nonzero spectral weights coincide.

According to Theorem \ref{thm:definetti}(3) we have
$\lim_{N\rightarrow\infty}  \rho_N^{(K)} = \frac{1}{(2\pi)^\sss}  \int_{[0,2\pi]^\sss}   \rho(\btheta)^{\otimes K} \, d\btheta$.

\subsection{Qubits}
\label{sec:qubits}

For a network of qubits (spins $1/2$) the single site Hilbert space is $\C^2$, $d=2$ in \eqref{m1} and we take the interaction operator to be
\begin{equation}
\label{A}
A =\begin{pmatrix}
\omega & 0\\
0 & 0
\end{pmatrix},\qquad \omega>0,\qquad a_1=\omega,\ a_2=0, \quad \Pi_1=|1\rangle\langle1|, \ \Pi_2= |2\rangle\langle 2|.
\end{equation}
Any self-adjoint operator $A\in\b(\C^2)$ can be diagonalized by a unitary transformation. We consider the particular choice \eqref{A}, noting that additive scalar shifts of $A$ generally change the pair-interaction Hamiltonian. The eigenvalues are rationally independent modulo constants for all $\omega>0$ ({\it cf.}~after \eqref{Utheta}).

Let us take for $\s$ a pair of qubits, $K=2$. Up to permutation of the qubits, there are five possible sets of boundary operators $\L$, \eqref{Lstar}.

\begin{table}[ht] 
\centering
\begin{tabular}{|l |c|c |c |c|}
\hline 
\raisebox{2.5ex}{}$\L$ & $|\L|$ & $|\Lambda|$ & $P_\lambda$ & $\mathfrak B$ \\ [0.5ex] 
\hline 
\hline
\raisebox{2.5ex}{}$\L_1 = \{A_1\}$ & $1$ & $2$ & $P_1= \Pi_1\otimes\mathbf 1$, $P_2=\Pi_2\otimes\mathbf 1$ & $\{-\omega,0,\omega\}$ \\ [0.5ex] 
\hline
 &  &  & \raisebox{-.5ex}{$P_1=\Pi_1\otimes \Pi_1$, $P_2=\Pi_1\otimes \Pi_2$} & \\
\raisebox{2.5ex}{$\L_2=\{A_1,A_2\}$} & \raisebox{2.5ex}{$2$} &\raisebox{2.5ex}{$4$} & \raisebox{1ex}{$P_3=\Pi_2\otimes \Pi_1$, $P_4=\Pi_2\otimes \Pi_2$} & \raisebox{2.5ex}{same as $\L_1$}\\
\hline
&  &  & \raisebox{-.5ex}{$P_1=\Pi_1\otimes \Pi_1$, $P_2=\Pi_2\otimes \Pi_2$} & \\
\raisebox{2.5ex}{$\L_3=\{A_1+A_2\}$} & \raisebox{2.5ex}{$1$} &\raisebox{2.5ex}{$3$} & \raisebox{1ex}{$P_3=\Pi_1\otimes \Pi_2 + \Pi_2\otimes \Pi_1$} &\raisebox{2.5ex}{$\{-2\omega,-\omega,0,\omega,2\omega\}$} \\
\hline
\raisebox{2.5ex}{}$\L_4=\L_1\cup\L_3$ & $2$ & $4$ & \text{same as $\L_2$} & same as $\L_3$ \\ [0.5ex] 
\hline
\raisebox{2.5ex}{}$\L_5=\L_2\cup\L_3$ & $3$ & 4 & \text{same as $\L_2$} & same as $\L_3$ \\ [1ex] % [1ex] adds vertical space
\hline %inserts single line
\end{tabular} \\ [1ex]
\caption{{\it Key quantities where each vertex hosts a qubit and where $\s$ consists of a pair of qubits.} The set of boundary operators $\L$ \eqref{Lstar}, its cardinality $|\L|$, the cardinality $|\Lambda|$ and the elements of the set of minimal projections $P_\lambda$ \eqref{Beval} and the set of Bohr energies $\mathfrak B$ \eqref{Bohr}. The families $\L_2, \L_4,\L_5$ are locally resolved, $\L_2$ is coarsely resolved (\eqref{2.16}, $K_1=1$) and $\L_3$ is collectively resolved.}
\label{table:guiding example 1} % is used to refer this table in the text
\end{table}

To find the effective connection number $\ve_2(N)$, \eqref{connection number}, we need to evaluate
\begin{equation}
\label{need}
S_{\lambda,\mu}=\sum_{\ell:\,b_\ell(\lambda)\neq b_\ell(\mu)}\nu(\ell,N),\qquad \lambda\neq \mu.
\end{equation}
By symmetry we only have to consider $\lambda <\mu$. If $\L$ consists of a single element $B_1$ then we have $b_1(\lambda)\neq b_1(\mu)$ for all $\lambda\neq \mu$ by the separating property \eqref{sep}. This implies that $\ve_2(N)=\nu(1,N)$ for any singleton $\L$, namely $\L_1,\L_3$ in Table \ref{table:guiding example 1}.

For $\L_2$ we have $B_{\ell =1}=A_1$ and $B_{\ell =2}=A_2$. The eigenvalues of the elements in $\L_2$ are obtained by applying the projections $P_\lambda$, as given in Table \ref{table:guiding example 1}. We have $b_1(1)=b_1(2)=\omega$, $b_1(3)=b_1(4)=0$ and  $b_2(1)=b_2(3)=\omega$, $b_2(2)=b_2(4)=0$. We then obtain from \eqref{need},
$S_{1,2}=S_{3,4}=\nu(2,N)$, $S_{1,3}=S_{2,4}=\nu(1,N)$, $S_{1,4}=S_{2,3}=\nu(1,N)+\nu(2,N)$. It follows that $\ve_2(N)=\min\{\nu(1,N),\nu(2,N)\}$.

For $\L_4$ we have $B_1=A_1$, $B_2=A_1+A_2$. Then $S_{1,2}=S_{3,4}=\nu(2,N)$, $S_{1,3}=S_{1,4}=S_{2,4}=\nu(1,N)+\nu(2,N)$, $S_{2,3}=\nu(1,N)$ and $\ve_2(N)=\min\{\nu(1,N),\nu(2,N)\}$. 

For $\L_5$ with $B_1=A_1$, $B_2=A_2$, $B_3=A_1+A_2$ we obtain $S_{1,2}=S_{3,4}=\nu(2,N)+\nu(3,N)$, $S_{1,3}=S_{2,4}=\nu(1,N)+\nu(3,N)$, $S_{1,4}=\nu(1,N)+\nu(2,N)+\nu(3,N)$, $S_{2,3}=\nu(1,N)+\nu(2,N)$, resulting in $\ve_2(N)=\min_{1\le\ell\neq\ell'\le 3}\big(\nu(\ell,N)+\nu(\ell',N)\big)$. 

We sum up these values of $\ve_2(N)$ in Table \ref{table:guiding example 2}.
\medskip

To obtain the value of $q$, \eqref{def q} we use that $a_1=\omega$, $a_2=0$ and setting $p\equiv p_1$, $p_2=1-p$, we evaluate \eqref{charfunct},
$$
\psi(\Delta) = p e^{-iT\omega \Delta} +1-p,\qquad |\psi(\Delta)|^2 = 1-4p(1-p)\sin^2\left(\frac{\omega T\Delta}{2}\right).
$$

\begin{table}[ht] 
\centering
\begin{tabular}{|l |c|c |}
\hline 
\raisebox{2.5ex}{}$\L$ & $\ve_2(N)$ &  $q^2$  \\ [0.5ex] 
\hline 
\hline
\raisebox{2.5ex}{}$\L_1 = \{A_1\}$ & $\nu(1,N)$ & $ 1-4p(1-p)\sin^2(\omega^2 T/2)$\\ [.5ex] 
\hline
\raisebox{2.5ex}{}$\L_2=\{A_1,A_2\}$ & $\min\{\nu(1,N),\nu(2,N)\}$ & same as $\L_1$ \\[.5ex]
\hline
\raisebox{2.5ex}{}$\L_3=\{A_1+A_2\}$ & $\nu(1,N)$ & $ 1-4p(1-p)\min_{\alpha=-1, 0}\sin^2(2^\alpha\omega^2 T)$  \\[.5ex]
\hline
\raisebox{2.5ex}{}$\L_4=\L_1\cup\L_3$ & $\min\{\nu(1,N),\nu(2,N)\}$ & same as $\L_3$\\ [.5ex] 
\hline
\raisebox{2.5ex}{}$\L_5=\L_2\cup\L_3$ & $\min_{1\le\ell\neq\ell'\le 3}\big(\nu(\ell,N)+\nu(\ell',N)\big)$ & same as $\L_3$ \\ [1ex] % [1ex] adds vertical space
\hline %inserts single line
\end{tabular} \\ [1ex]
\caption{{\it Key quantities where each vertex hosts a qubit and where $\s$ consists of a pair of qubits.} The effective connection number $\ve_2(N)$, \eqref{connection number} and the square of the parameter $q$, \eqref{def q}, for the five possible $\L$.
} 
\label{table:guiding example 2} % is used to refer this table in the text
\end{table}

We are in the generic regime \eqref{gencond} for $\L_1,\L_2$ provided that $\rho$ is not $|1\rangle\langle 1|$ nor $|2\rangle\langle 2|$ (not an eigenstate of $A$; $p\neq 0,1$) and provided that $\omega^2 T\not\in2\pi\mathbb Z$. In the cases $\L_3,\L_4,\L_5$ we are in the generic regime if $p\neq 0,1$ and $\omega^2 T\not\in\pi\mathbb Z$. 
In the case $\L_{1,2,3,4}$, the system is well-connected if and only if the incidence numbers $\lim_{N\rightarrow\infty}\nu(\ell,N)=\infty$ for all $\ell$. For $\L_5$ the system is well-connected if and only if at least two of the three $\nu(\ell,N)$, $\ell=1,2,3$, tend to $\infty$ for $N\rightarrow\infty$. 

In the case $\L_1$ the boundary operators are neither locally nor collectively resolved. By the definition \eqref{defP} we have $\P[\rho\otimes\rho] = \Pi_1\rho\Pi_1\otimes\rho + \Pi_2\rho\Pi_2\otimes\rho = \big(p_1 |1\rangle\langle 1| +p_2|2\rangle\langle 2|\big)\otimes\rho=\tau\otimes\rho$, 
where $\tau$ is the state $\rho$ in which the coherences (in the basis $|1\rangle, |2\rangle$) are eliminated (see \eqref{DeFinettifinalstate}). In the generic regime and if $\s$ is well-connected, Theorems \ref{thm2}, \ref{thm3} and \ref{thm:definetti} give, 
\begin{align}
\L_1&: \qquad \lim_{N\rightarrow\infty}\rho_N^{(2)}=e^{-iTH_\s} (\tau\otimes \rho) e^{iTH_\s} \label{rh1}\\
\L_{2,4,5}&: \qquad \lim_{N\rightarrow\infty}\rho_N^{(2)}=\tau\otimes\tau\label{lh2}\\
\L_3&:\qquad  \lim_{N\rightarrow\infty}\rho_N^{(2)} = 
 \frac1{(2\pi)^2}\int_{[0,2\pi]^2}   \rho(\btheta)^{\otimes 2} \, d\btheta.
\label{lh3}
\end{align}
Denoting the density matrix elements of $\rho$ in the basis $\{|1\rangle, |2\rangle\}$ by $\rho_{\pm}$ (see \eqref{A}), we obtain from \eqref{Utheta},
\begin{align}
\rho(\btheta) = \begin{pmatrix}
   \rho_{++} & e^{-i(\theta_1-\theta_2)}\rho_{+-}\\[1.5ex]
   e^{i(\theta_1-\theta_2)}\rho_{-+} & \rho_{--}
\end{pmatrix}.
\end{align}

\subsection{Erd\H os-R\'enyi random graphs}
\label{sec:Erdos Renyi}

In this model a graph is constructed by connecting the vertices randomly. Each edge is included in the graph with probability $p\in[0,1]$ and all edges are placed independently. In the setup of Section \ref{sec:model} we consider a family of increasing graphs (networks) labelled by $N$. We thus consider the Erd\H os-R\'enyi graph on the vertex set $\mathbb N$. For each realization of the infinite graph and each $N\in\mathbb N$ we obtain a subgraph on the vertex set $\{1,\ldots,N\}\subset \mathbb N$ by deleting all vertices $>N$ and all edges in the infinite graph which link to any vertex $> N$. Then we are in the setup of Section \ref{sec:model}. Each realization of the infinite graph yields a family of increasing graphs indexed by $N$ and we obtain an increasing family of random graphs by varying over all realizations. For $p=0$ the graph is entirely disconnected which is a trivial (non-interacting) model. For $p=1$  the graph is totally connected (and not random), the situation we analyzed in Section \ref{sec:all to all}. We assume from this point onward that
\begin{equation}
\label{er1}
0<p<1.
\end{equation}
The set of unordered pairs of distinct vertices \[
\{(k,\ell):k,\ell\in\mathbb N,\ k<\ell\}
\] is countable; enumerate its elements as $\{e_j\}_{j\ge1}$. For each $j$, let $X_j$ be the indicator that the edge $e_j$ is present. The variables $X_j$ are independent Bernoulli variables with $\Pr(X_j=1)=p$. The sample space $\Omega=\{0,1\}^{\mathbb N}$ carries a $\sigma$-algebra generated by the cylinder sets of the form $\{\omega \in\Omega : \omega_{j_1}=v_1,\ldots,\omega_{j_n}=v_n\}$, $n\in\mathbb N$, $1\le j_1<\cdots<j_n$, $v_1,\ldots,v_n\in \{0,1\}$. The probability measure $\mu$ on this $\sigma$-algebra is uniquely determined (Kolmogorov extension) by
$$
\mu\{\omega \in\Omega : \omega_{j_1}=v_1,\ldots,\omega_{j_n}=v_n\}=\Pr\{\omega_{j_1}=v_1,\ldots,\omega_{j_n}=v_n\} = \prod_{k=1}^n p^{v_k}(1-p)^{1-v_k}.
$$
The boundary operators $L(j)$, \eqref{eq:bndryop} and their collection $\L_N$, \eqref{LN} now become random variables. So do the incidence numbers $\nu(\ell,N)$, the effective connection number $\ve_K(N)$ and the reduced system density matrix $\rho^{(K)}_N$. Let $\varnothing \not =J\subseteq\llbracket 1,K\rrbracket$ and denote 
$$
B(J)=\sum_{i\in J}A_i.
$$
The maximal set of boundary operators is defined as
\begin{equation}
\label{maxL}
\L_{\rm max} =  \big\{B(J) : \varnothing \ne J\subseteq \llbracket 1,K\rrbracket\big\}. 
\end{equation}
For the following result we define
\begin{align}
\label{ttmin}
t_J = p^{|J|}(1-p)^{K-|J|} 
\text{\quad and\quad }
t_{\rm min} = \min_{\varnothing\not = J\subset \llbracket1,K\rrbracket} t_J = p \big(\tfrac12-|p-\tfrac12|\big)^{K-1}.
\end{align}

\begin{theorem}
\label{thm:erdoes-renyi}  \ \\
\vspace*{-.4cm}
\begin{itemize}
\setlength\itemsep{1em}
\item[(1)] There exists a positive constant $\beta_K(p)$ independent of $N$, such that for every $N > K$, 
\[
    \Pr\left\{\mathcal{L}_N \ne \mathcal{L}_{\rm max}\right\} \le 2^K e^{-(N - K)\beta_K}.
\] In particular, 
\begin{equation}
\label{er3.3}
\Pr\big\{\L_N=\L_{\rm max}\text{\ \rm for some $N\in\mathbb N$}\big\} =\Pr\big\{\exists N_0 \text{\ \rm s.t.\ }\L_N=\L_{\rm max} \ \forall N\ge N_0 \big\}=1.
\end{equation}

\item[(2)] Fix $\varnothing \ne J\subseteq \llbracket1,K\rrbracket$, let $\nu(J,N)=\big|\{ j=K+1,\ldots,N : L(j) = B(J)\}\big|$ be the incidence number of $B(J)$ (see \eqref{incidence number}). Then we have for all $N>K$ and all $0<\delta<t_J$,
\begin{align}
\label{chb}
 \Pr\Big\{\Big|\frac{\nu(J,N)}{N-K}-t_J\Big|\ge \delta\Big\}\le 2 e^{-\frac{\delta^2 (N-K)}{3 t_J}}.
\end{align}
It follows that 
\begin{equation}
\label{er5}
\Pr\Big\{ \lim_{N\rightarrow\infty} \nu(J,N) = \infty \Big\}=1 \text{\quad and\quad } \Pr\Big\{ \lim_{N\rightarrow\infty} \ve_K(N) = \infty \Big\}=1.
\end{equation}

\item[(3)] Suppose we are in the generic regime, that is, \eqref{gencond} holds with $\mathfrak B=\mathfrak B_{\rm max}$ the Bohr energies \eqref{Bohr} of $\L_{\rm max}$ \eqref{maxL}. Then, for every $N > K$, 
\begin{align}\label{er:finite-bound}
    \Pr\Big\{\|\rho_N^{(K)} - \tau^{\otimes K}\|_1 > \zeta^K q^{t_{\rm min}(N - K)/2}\Big\} \le 2^K e^{-t_{\rm min}(N - K)/8}.
\end{align}
In other words, the upper bound on the trace norm holds eventually in $N$, almost surely. This implies in particular that 
\begin{equation}
\label{3.13}
\Pr\Big\{ \lim_{N\rightarrow\infty}\big\|\rho^{(K)}_N - \tau^{\otimes K}\big\|_1=0\Big\}=1.
\end{equation}
\end{itemize}
\end{theorem}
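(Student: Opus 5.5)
The plan rests on one structural observation. For each reservoir vertex $j\ge K+1$, define the random set $J(j)=\{i\in\llbracket 1,K\rrbracket : (i,j)\in E(N)\}$. Then $L(j)=B(J(j))$, with $L(j)=0$ when $J(j)=\varnothing$. The sets $J(j)$, $j=K+1,K+2,\ldots$, are i.i.d. because they are determined by disjoint families of independent edges. For each fixed $J$ we have $\Pr\{J(j)=J\}=t_J$. Consequently $\nu(J,N)$ is a $\mathrm{Binomial}(N-K,t_J)$ random variable. By \eqref{noch}, $J(j)$ and hence $L(j)$ do not depend on $N\ge j$, so $\L_N$ is nondecreasing in $N$.

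\textbf{Part (1).} We have $\L_N\neq\L_{\max}$ exactly when some $B(J)$ is missing, i.e.\ when $\nu(J,N)=0$ for some $J\neq\varnothing$. The probability of this for a given $J$ is $(1-t_J)^{N-K}\le e^{-t_J(N-K)}$. A union bound over the $2^K-1$ nonempty $J$ gives the claim with $\beta_K=t_{\min}$. Since these bounds are summable in $N$, Borel--Cantelli shows that $\L_N=\L_{\max}$ eventually, almost surely. Monotonicity of $\L_N$ then shows that the two events in \eqref{er3.3} coincide and have probability one.

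\textbf{Part (2).} The estimate \eqref{chb} is the standard multiplicative Chernoff bound for a binomial. Writing $\delta=\eta t_J$ with $0<\eta<1$, the bound $\Pr\{|\nu-nt_J|\ge \eta n t_J\}\le 2e^{-\eta^2 n t_J/3}$ yields exactly $2e^{-\delta^2 n/(3t_J)}$. Summability in $N$ and Borel--Cantelli give $\nu(J,N)/(N-K)\to t_J>0$ almost surely, hence $\nu(J,N)\to\infty$. On the almost-sure event where $\L_N=\L_{\max}$ eventually, the separating property \eqref{sep} makes every sum in \eqref{connection number} nonempty. Each such sum is therefore at least $\min_J\nu(J,N)$, which tends to $\infty$. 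This proves \eqref{er5}.

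\textbf{Part (3).} Let $G_N$ be the event that $\nu(J,N)\ge t_J(N-K)/2$ for all nonempty $J$. On $G_N$ every $B(J)$ occurs, so $\L_N=\L_{\max}$. This set contains every $A_i$, and every $B(J)$ is a function of the $A_i$. Hence the minimal projections are $\Pi_{s_1}\otimes\cdots\otimes\Pi_{s_K}$, so $\L_N$ is locally resolved with $|\Lambda|=\sss^K$ and Bohr energies $\mathfrak B_{\max}$. Also on $G_N$, the same reasoning as in (2) gives $\ve_K(N)\ge\min_J \nu(J,N)\ge t_{\min}(N-K)/2$. Since $q<1$, Theorem \ref{thm2} combined with \eqref{locresreas} (Lemma \ref{lem:P}(1)) then gives
\[
\|\rho_N^{(K)}-\tau^{\otimes K}\|_1\le \sss^K q^{\ve_K(N)}\le \sss^K q^{t_{\min}(N-K)/2}
\quad\text{on } G_N .
\]
The complement of $G_N$ is controlled by the lower-tail Chernoff bound $\Pr\{\nu\le(1-\eta)nt_J\}\le e^{-\eta^2nt_J/2}$ with $\eta=1/2$. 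This gives $e^{-t_J(N-K)/8}\le e^{-t_{\min}(N-K)/8}$ for each $J$, and a union bound over $J$ gives at most $2^K$ times this. Summability and Borel--Cantelli then yield that the bound holds eventually almost surely, which implies \eqref{3.13}.

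The only delicate points are bookkeeping. First, one must check that the locally resolved projection structure and $\mathfrak B_{\max}$ are indeed those of $\L_{\max}$, so that the $q$ appearing in the bound is the stated one. Second, one must check that the Chernoff constants ($1/3$ for the two-sided bound, $1/8$ for the lower tail) match. Finally, one must verify the closed form of $t_{\min}$ by minimizing $p^{|J|}(1-p)^{K-|J|}$ over $1\le|J|\le K$.
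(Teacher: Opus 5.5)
Your proposal is correct and follows essentially the same route as the paper: binomial incidence numbers and a union bound for (1), and Chernoff bounds together with $\ve_K(N)\ge\min_J\nu(J,N)$ for (2). For (3) you use, as the paper does, the good event $\{\nu(J,N)\ge \tfrac12 t_J(N-K)\ \forall J\}$, controlled by the lower-tail Chernoff bound and combined with Theorem \ref{thm2}, Lemma \ref{lem:P}(1) and Borel--Cantelli; the differences are cosmetic, namely $\beta_K=t_{\rm min}$ via $(1-t)^n\le e^{-tn}$ instead of $-\ln(1-t_{\rm min})$, and Borel--Cantelli in (1)--(2) where the paper argues via monotonicity and continuity of the measure.
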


\begin{discussion}
The relation \eqref{er3.3} means that $\s$ is locally resolved eventually (in $N$), almost surely. The bound \eqref{chb} shows that $\nu(J,N)$ almost surely grows linearly in $N$; note that $t_{\rm min}>0$ as $p\not= 0,1$. The second equality in \eqref{er5} shows that the system is well-connected almost surely.  
\end{discussion}

{\bf Proof of Theorem \ref{thm:erdoes-renyi}.} (1) Fix a vertex $j\ge K+1$ and a set $J=\{i_1,\ldots,i_{|J|}\}$ of vertices in $\s$. The probability that $j$ is linked to the vertices in $J$ and to no other vertices in $\s$, is $p^{|J|}(1-p)^{K-|J|}$. Therefore the probability that the boundary operator $L(j)$, \eqref{eq:bndryop} equals $B(J)$, is
\begin{equation}
\label{er2}
\Pr\big\{L(j)=B(J)\big\} = p^{|J|}(1-p)^{K-|J|}.  
\end{equation}
By symmetry of the edge probabilities and as existing edges are not altered as the graph grows, this probability depends only on the cardinality of the set $J$, and it is independent of $N$. Let $J\subseteq\llbracket 1,K\rrbracket$ and denote $B(J)=\sum_{i\in J}A_i$. The probability that $B(J)$ belongs to $\L_N$ is
\begin{align}
\Pr\big\{B(J)\in\L_N\big\} &= \Pr\Big\{L(j)=B(J)\ \text{for some $j\in\llbracket K+1,N\rrbracket$}\Big\}\nonumber\\
&=1- \Pr\Big\{L(j)\neq B(J)\ \text{for all $j\in\llbracket K+1,N\rrbracket$}\Big\}\nonumber\\
&=1-(1-t_J)^{N-K},\qquad t_J \equiv p^{|J|}(1-p)^{K-|J|}.
\label{5.19}
\end{align}
We used \eqref{er2} in the last step, and that the $L(j)$ are independent for distinct $j$.  We have $\{\L_N\neq \L_{\rm max}\}= \cup_{\varnothing \ne J\subseteq\llbracket1,K\rrbracket}\{B(J)\not\in\L_N\}$ and 
\begin{align}
\Pr\big\{\L_N\neq \L_{\rm max} \big\} &\le \sum_{\substack{J\subseteq \llbracket 1,K\rrbracket\\ J \ne \varnothing}} \Pr\big\{B(J)\not\in\L_N\big\}=\sum_{\substack{J\subseteq \llbracket 1,K\rrbracket\\ J \ne \varnothing}} (1-t_{J})^{N-K}\nonumber\\
& =\sum_{|J|=1}^K {K\choose |J|} (1-t_J)^{N-K} = \sum_{|J|=1}^K {K\choose |J|} \big(1-p^{|J|}(1-p)^{K-|J|}\big)^{N-K}.
\label{5.23}
\end{align}
As $\min_{1\le j\le K} p^j(1-p)^{K-j} =p \big(\tfrac12-|p-\tfrac12|\big)^{K-1}$
we obtain from \eqref{5.23},
\begin{equation}
\label{3.8}
\Pr\big\{\L_N\neq \L_{\rm max} \big\}\le 2^K e^{-(N-K)\beta_K(p)},
\end{equation}
where 
\begin{equation}
\beta_K(p)=-\ln\Big[1-p \big(\tfrac12-|p-\tfrac12|\big)^{K-1}\Big]>0\qquad \text{($0<p<1$)}.
\end{equation}
This shows that $\Pr\{\L_N\neq \L_{\rm max} \}$ decays exponentially quickly in $N$. For small values of $N$ this probability may not be small: take for instance $K=1, N=2$, then  $\Pr\{\L_N\neq\L_{\rm max}\}=1-p$. 

Now we show \eqref{er3.3}. We have
$$
\Pr\big\{\L_N=\L_{\rm max}\text{\ for some $N\in\mathbb N$}\big\} = 1-\Pr\big\{\L_N\neq \L_{\rm max}\  \forall N\in\mathbb N \big\}
$$
and $\Pr\big\{\L_N\neq \L_{\rm max}\  \forall N\in\mathbb N \big\}=\Pr\big\{\cap_{N\in\mathbb N} \{\L_N\neq \L_{\rm max}\}\big\}\le \Pr\{\L_N\neq\L_{\rm  max}\}$, for any $N\in\mathbb N$. Thus by \eqref{3.8}, $\Pr\big\{\L_N\neq \L_{\rm max}\  \forall N\in\mathbb N \big\}=0$ and so $\Pr\big\{\L_N=\L_{\rm max}\text{\ for some $N\in\mathbb N$}\big\} = 1$. Hence with probability one $\exists N_*\in\mathbb N$ s.t. $\L_{N_*}=\L_{\rm max}$, that is, such that $\L_N=\L_{\rm max}$ for all $N\ge N_*$  (as $\L_N\subseteq\L_{N+1}$). In other words, $\L_N=\L_{\rm max}$ eventually with probability one. 
\medskip

(2)  For fixed $\varnothing\neq J\subseteq\llbracket1,K\rrbracket$ and
$j\ge K+1$, let $\chi_j(J)$ be the indicator of the event $\{L(j)=B(J)\}$. The $\{\chi_j(J)\}_{j=K+1}^N$ are independent Bernoulli random variables taking the value $1$ with probability $t_J$ and the value $0$ with probability $1-t_J$ (see \eqref{er2}, \eqref{5.19}). The incidence number is given by 
\begin{equation}
\label{er3}
\nu(J,N)=\sum_{j=K+1}^N\chi_j(J)
\end{equation}
and has the binomial distribution $\nu(J,N)\sim{\rm Bin}(N-K,t_J)$, $\E[\nu(J,N)]=t_J(N-K)$. The Chernoff bound (see for instance \cite{Tsun, Goemans}) gives \eqref{chb}. Next, 
Chernoff's lower tail bound for binomial random variables gives $\forall\  0<\delta<1$,
\begin{align}
\Pr\big\{\nu(J,N) \le  (1-\delta)t_J(N-K)\big\}\le e^{-\frac{\delta^2}{2}t_J(N-K)},\quad \forall N>K.
\label{xxy}
\end{align}
For every fixed realization, $\nu(J,N)$  is an increasing function of $N$, so that $\nu(J,N)$ converges to $\infty$ if and only if it is unbounded in $N$. Now 
\begin{align}
\Pr\big\{\nu(J,N) \text{\ is unbounded in $N$}\big\} 
&=  1-\Pr\Big\{\cup_{M\in\mathbb N}\{ \nu(J,N)< M \ \forall N>K\}\Big\}\nonumber\\
&=1-\lim_{M\rightarrow\infty}\Pr\big\{ \nu(J,N)< M \ \forall N>K \big\}\label{5.28}\\
&\ge 1-\lim_{M\rightarrow\infty}\inf_{N> K} \Pr\big\{ \nu(J,N)< M \big\}\nonumber%\label{5.29},
\end{align}
where we used the continuity of the probability in \eqref{5.28}. Given $M\in\mathbb N$ (and any $\delta$, $t_J$ and $K$ in their ranges) we have $\Pr\big\{ \nu(J,N)< M \big\}\le \Pr\big\{ \nu(J,N)< (1-\delta)t_J(N-K) \big\}$ provided $N$ is large enough  and so by \eqref{xxy}, $\inf_{N> K} \Pr\big\{ \nu(J,N)< M \big\}=0$. Consequently, $\Pr\big\{\nu(J,N) \text{\ is unbounded in $N$}\big\} =1$, which shows the first equality in \eqref{er5}. 

Next we show the second equality in \eqref{er5}. By \eqref{connection number}, $\ve_K(N)\ge \min_{\substack{J\subset \llbracket 1,K\rrbracket\\ J \ne \varnothing}}\nu(J,N)$. For fixed $K$ there are finitely many nonempty subsets $J\subseteq\llbracket1,K\rrbracket$ and for each one, there is a set $\Omega_J$ of full measure, on which $\nu(J,N)$ converges to $\infty$. Therefore $\bigcap_{\varnothing \ne J\in\llbracket 1,K\rrbracket}\Omega_J$ has full measure and $\ve_K(N)$ converges to $\infty$ on this set. 
\medskip

(3) We first prove \eqref{er:finite-bound}.
For each $N>K$, define the event
\[
\mathcal G_N
=
\bigcap_{\varnothing\neq J\subseteq\llbracket1,K\rrbracket}
\left\{
\nu(J,N)\ge\frac12t_{\rm min}(N-K)
\right\}.
\]
Since $t_J\ge t_{\rm min}$, the Chernoff bound
\eqref{xxy}, with $\delta=1/2$, gives
\begin{align}
\Pr(\mathcal G_N^c)
&\le
\sum_{\varnothing\neq J\subseteq\llbracket1,K\rrbracket}
\Pr\left\{
\nu(J,N)<\frac12t_J(N-K)
\right\}
\nonumber\\
&\le
\sum_{\varnothing\neq J\subseteq\llbracket1,K\rrbracket}
e^{-t_J(N-K)/8}
\nonumber\\
&\le
2^Ke^{-t_{\rm min}(N-K)/8}.
\label{er:good-event}
\end{align}

On $\mathcal G_N$, every nonempty boundary type
has positive incidence number, so that
$\L_N=\L_{\rm max}$.
In particular, $\L_N$ is locally resolved,
and \eqref{connection number} gives
\[
\ve_K(N)\ge\frac12t_{\rm min}(N-K).
\]
Theorem \ref{thm2} and Lemma \ref{lem:P}
therefore imply that, on $\mathcal G_N$,
\[
\|\rho_N^{(K)}-\tau^{\otimes K}\|_1
\le
\zeta^Kq^{\ve_K(N)}
\le
\zeta^Kq^{\frac12t_{\rm min}(N-K)}.
\]
Consequently, the event on the left-hand
side of \eqref{er:finite-bound} is contained
in $\mathcal G_N^c$, which proves
\eqref{er:finite-bound}.

Finally, the right-hand side of
\eqref{er:good-event} is summable in $N$.
By the Borel--Cantelli lemma,
$\mathcal G_N$ occurs for all sufficiently
large $N$, almost surely. Since $q<1$ and
$t_{\rm min}>0$, equation \eqref{3.13}
follows immediately.
\hfill\qed

\bigskip
\medskip

\section{Area law for entanglement entropy}
\label{sec:convinfo}
\medskip

For many ground states of quantum lattice and quantum
many-body systems, the entanglement entropy of a
subregion grows like the area of its boundary, rather
than its volume. This `area law' is observed in seemingly unrelated fields in quantum information science, quantum many-body physics and  black hole physics \cite{ECP,WVHC}. Here we show an area law
for quantum networks: the increase in the von Neumann
entropy of the system $\s$ due to its interaction
with the reservoir is bounded proportionally to the
size $|\partial\s|$ of its boundary.

The {\it boundary} $\partial\s$ of $\s$ is defined as the collection of vertices in $\s$ which are connected to any vertex in $\r$, \eqref{deltaS}.  As discussed in Section \ref{sec:stab}, for a fixed value of $K$ and sufficiently large $N$, it is independent of $N$. Our goal here is to estimate the von Neumann entropy of the state $\rho_N^{(K)}$ in terms of the size (cardinality) of the boundary, $|\partial\s|$. The von Neumann entropy of a density matrix $\rho$ (of any dimension) is defined by
\begin{equation}
\label{vne}
\S(\rho)=-\Tr(\rho\ln\rho)\ge 0.
\end{equation}

\begin{theorem}[Area law]
\label{thm:arealaw}
\ \ 
\begin{itemize}
\item[(1)]
For all single-vertex density matrices $\rho$ (see \eqref{m2}),
\begin{equation}
0\le\S\big(\rho_N^{(K)}\big) -\S(\rho^{\otimes K}) \le   |\partial \s|\big( \S(\q[\rho]) -\S(\rho)\big),
\label{2.27}
\end{equation}
where $\q$ is the CPTP map acting on single-vertex operators, given by $\q[\cdot] = \sum_{s=1}^\sss \Pi_s[\cdot]\Pi_s$.  In terms of $\sss$, the cardinality of the spectrum of $A$, \eqref{m4}, we have the bound
\begin{equation}
\label{res}
0\le \S\big(\rho_N^{(K)}\big) -\S\big(\rho^{\otimes K}\big)\le   |\partial \s|\ln(\sss).
\end{equation}

\item[(2)]
The bound \eqref{2.27} is optimal for $N\rightarrow\infty$: if $m$ of the $K$ vertices in $\s$ are connected, each to a number $\nu_i$, $i=1,\ldots,m$ of vertices $j\ge K+1$ with no vertex $j$ connecting to more than one $i\in\llbracket 1,K\rrbracket$, and if furthermore the system is well-connected and the genericity condition \eqref{gencond} holds, then we have for any $\rho$,
\begin{equation}
\lim_{N\rightarrow\infty}\S\big(\rho_N^{(K)}\big)  -\S(\rho^{\otimes K}) = |\partial \s|\big( \S(\q[\rho])-\S(\rho)\big) 
\end{equation}
\end{itemize}
\end{theorem}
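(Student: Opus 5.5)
We start from Theorem \ref{thm1}. Since conjugation by the unitary $e^{-iTH_\s}$ does not change the von Neumann entropy, $\S(\rho_N^{(K)})=\S(\Phi_N(\rho^{\otimes K}))$. By \eqref{m17}, $\Phi_N(X)=\E[V X V^*]$ is a mixture of conjugations by the unitaries $V=e^{-iT\sum_j\xi_jL(j)}$. Such a map is unital as well as CPTP. The entropy of a unital channel's output is never smaller than that of its input, so we get the lower bound $\S(\Phi_N(\rho^{\otimes K}))\ge\S(\rho^{\otimes K})$. This can also be seen from concavity of $\S$ together with unitary invariance.

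For the upper bound, observe that each $L(j)$ involves only the operators $A_i$ with $i\in\partial\s$. Hence $V$ is a function of the commuting operators $A_i$, $i\in\partial\s$, and so $V$ commutes with every local block projection $\Pi_{\bs}=\bigotimes_{i\in\partial\s}\Pi_{s_i}$. We introduce the local dephasing map $\q_{\partial}=\bigotimes_{i\in\partial\s}\q$, acting as the identity on the vertices outside $\partial\s$. Since $V$ is block diagonal with respect to the $\Pi_{\bs}$, we have $\q_\partial\circ\Phi_N=\q_\partial$. Moreover, $\Phi_N$ maps block-diagonal operators into themselves. On a block-diagonal state $\omega=\sum_{\bs}\Pi_{\bs}\omega\Pi_{\bs}$, $V$ acts inside each block as a unitary of the form $e^{-iT\sum_j\xi_j\sum_i a_{s_i}\cdots}$, which is a scalar phase. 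Therefore $\Phi_N$ fixes $\q_\partial[\rho^{\otimes K}]$.

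The key step is the monotonicity of relative entropy applied to the channel $\Phi_N$. We compare $\rho^{\otimes K}$ with its dephased version $\q_\partial[\rho^{\otimes K}]$, using the standard identity for pinchings:
\[
\S(\q_\partial[\omega])-\S(\omega)=D\big(\omega\,\|\,\q_\partial[\omega]\big).
\]
Set $\omega'=\Phi_N(\rho^{\otimes K})$. Then $\q_\partial[\omega']=\q_\partial[\rho^{\otimes K}]$, and this state is fixed by $\Phi_N$. The data processing inequality gives
\[
D(\omega'\|\q_\partial[\rho^{\otimes K}])\le D(\rho^{\otimes K}\|\q_\partial[\rho^{\otimes K}]).
\]
Rewriting both sides with the pinching identity, we obtain $\S(\q_\partial[\rho^{\otimes K}])-\S(\omega')\le \S(\q_\partial[\rho^{\otimes K}])-\S(\rho^{\otimes K})$. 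This direction only yields $\S(\omega')\ge\S(\rho^{\otimes K})$ again, so it gives nothing new. The bound we actually need comes from applying the pinching identity to $\omega'$ itself together with $D\ge0$:
\[
\S(\omega')\le \S(\q_\partial[\omega'])=\S(\q_\partial[\rho^{\otimes K}]).
\]
Finally, $\q_\partial[\rho^{\otimes K}]=\q[\rho]^{\otimes|\partial\s|}\otimes\rho^{\otimes(K-|\partial\s|)}$, so additivity of entropy gives \eqref{2.27}. The bound \eqref{res} then follows from the pinching-entropy estimate $\S(\q[\rho])-\S(\rho)\le\ln\sss$. One way to see this is to write $\q[\rho]=\frac1\sss\sum_k W^k\rho W^{-k}$ with $W=\sum_s e^{2\pi i s/\sss}\Pi_s$ and use the mixing bound $\S(\sum_k p_k\sigma_k)\le\sum_k p_k\S(\sigma_k)+H(p)$.

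For part (2), the connectivity assumption means that every reservoir vertex links to a single system vertex. Thus $\L=\{A_i:i\in\partial\s\}$ and $|\partial\s|=m$, and the minimal projections are the $\Pi_{\bs}\otimes\one$. In this setting $\P=\q_\partial$, even though $\L$ need not contain all $K$ operators $A_i$. Corollary \ref{cor:deco} then gives $\rho_N^{(K)}\to e^{-iTH_\s}\q_\partial[\rho^{\otimes K}]e^{iTH_\s}$. Continuity of the von Neumann entropy in finite dimensions (Fannes--Audenaert) together with unitary invariance yields the limit $\S(\q_\partial[\rho^{\otimes K}])$, which equals the right-hand side of \eqref{2.27} plus $\S(\rho^{\otimes K})$.

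The main obstacle is checking that the upper bound is sharp enough: the argument must use the dephasing only on $\partial\s$, not on all of $\s$. This requires the commutation of $V$ with the $\Pi_{\bs}$ for $i\in\partial\s$ only, and the identity $\q_\partial\circ\Phi_N=\q_\partial$.
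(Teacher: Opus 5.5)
Your proof is correct and follows the paper's argument. The lower bound comes from unitality of $\Phi_N$. The upper bound comes from dephasing on $\partial\s$, the identity $\q_{\partial\s}\circ\Phi_N=\q_{\partial\s}$, and additivity. This identity needs $V$ to be a scalar phase on each block $\Pi_{\bs}$, i.e.\ each $\Pi_{\bs}$ must be a joint eigenprojection of all $L(j)$; mere block-diagonality of $V$ would not suffice, but you do note the scalar-phase property one sentence later. Part (2) uses Corollary~\ref{cor:deco} with $\P=\q_{\partial\s}$ plus Fannes--Audenaert continuity, as in the paper.

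The one step you handle differently is $\S(\q[\rho])\le\S(\rho)+\ln\sss$. You write $\q$ as a uniform mixture of $\sss$ unitary conjugations and apply the mixing bound, whereas the paper uses a Stinespring dilation with $\dim\h_\e\le\sss$ and the Araki--Lieb inequality. Both are valid. Your data-processing detour gives nothing and can be deleted.
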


\begin{figure}[t]
    \centering
    \resizebox{\textwidth}{!}{%
        \begin{tikzpicture}[x=1.2cm,y=1.2cm,
         boundary/.style={circle,draw=Collective,line width=1.05pt,fill=Collective!18,minimum size=5.1mm,inner sep=0pt},
         reserv/.style={circle,draw=Coarse!92!black,line width=.68pt,fill=Coarse!8,minimum size=4.2mm,inner sep=0pt},
         biglabel/.style={font=\bfseries\small,fill=white,fill opacity=.88,text opacity=1,inner sep=1.6pt,rounded corners=1pt},
         intedge/.style={draw=Ink!23,line width=.52pt},
         crossedge/.style={draw=Collective,line width=.68pt,opacity=.68}]
        %\node[paneltitle] at (0,4.85) {Area law on quantum networks};
        %\node[panelsub,text=Ink!60] at (0,4.40) {the geometry of the ``area'' is induced by the graph};
        % explicit color key -- deliberately spaced so labels cannot collide
        \begin{scope}[shift={(0,3.92)}]
          \draw[fill=BlueGrey!14,draw=BlueGrey!55,line width=.55pt,rounded corners=1.5pt] (-4.15,-7.5-.12) rectangle (-3.70,-7.5+.12);
          \node[font=\normalsize,text=BlueGrey,anchor=west] at (-3.53,-7.5) {system $\s$};
        
          \shade[ball color=Collective!28,draw=Collective,line width=.8pt] (-.72,-7.5) circle (2.15mm);
          \node[font=\normalsize,text=Collective,anchor=west] at (-.43,-7.5) {boundary $\partial\s$};
        
          \shade[ball color=Coarse!12,draw=Coarse!92!black,line width=.65pt] (3.18,-7.5) circle (2.05mm);
          \node[font=\normalsize,text=Coarse!92!black,anchor=west] at (3.47,-7.5) {reservoir $\r$};
        \end{scope}
        
        \begin{scope}[shift={(-4.35,.15)}]
        \node[font=\normalsize,text=BlueGrey] at (0,3.22) {$2D$ network};
        \begin{scope}[on background layer]
        \path[fill=BlueGrey,fill opacity=.12,draw=BlueGrey,draw opacity=.34,line width=.6pt]
        (-2.35,-1.55)..controls(-2.75,-.65)and(-2.48,.78)..(-1.55,1.72)
        ..controls(-.82,2.38)and(.30,2.34)..(1.22,1.85)
        ..controls(2.20,1.34)and(2.62,.28)..(2.28,-.78)
        ..controls(1.96,-1.73)and(.92,-2.15)..(-.38,-1.88)
        ..controls(-1.28,-2.12)and(-2.08,-2.08)..(-2.35,-1.55)--cycle;
        \end{scope}
        \foreach \n/\x/\y in {a1/-.95/1.16,a2/-.18/1.35,a3/.68/1.13,a4/-1.36/.48,a5/-.52/.48,a6/.30/.55,a7/1.08/.48,a8/-1.22/-.33,a9/-.38/-.26,a10/.47/-.22,a11/1.27/-.30,a12/-.82/-1.05,a13/.03/-1.02,a14/.82/-.93}{\node[vertex] (\n) at (\x,\y) {};}
        \foreach \n/\x/\y in {b1/-1.92/1.08,b2/-.86/1.78,b3/.48/1.76,b4/1.73/1.04,b5/1.91/-.69,b6/.48/-1.56,b7/-1.48/-1.25}{\node[boundary] (\n) at (\x,\y) {};}
        \foreach \u/\v in {b1/a1,b1/a4,b2/a1,b2/a2,b3/a2,b3/a3,b4/a3,b4/a7,a1/a2,a1/a4,a1/a5,a2/a3,a2/a5,a2/a6,a3/a6,a3/a7,a4/a5,a4/a8,a5/a6,a5/a8,a5/a9,a6/a7,a6/a9,a6/a10,a7/a10,a7/a11,a8/a9,a8/a12,a9/a10,a9/a12,a9/a13,a10/a11,a10/a13,a10/a14,a11/a14,a12/a13,a13/a14,b5/a11,b5/a14,b6/a13,b6/a14,b7/a8,b7/a12}{\draw[intedge](\u)--(\v);}
        \foreach \n/\x/\y in {r1/-3.10/1.55,r2/-3.28/.86,r3/-2.98/-1.48,r4/-2.45/-2.10,r5/2.88/1.68,r6/3.22/.95,r7/3.12/-.55,r8/2.65/-1.67,r9/.72/-2.65,r10/-1.05/2.68,r11/.62/2.70}{\node[reserv] (\n) at (\x,\y) {};}
        \foreach \u/\v in {b1/r1,b1/r2,b2/r10,b3/r11,b7/r3,b7/r4,b4/r5,b4/r6,b5/r7,b5/r8,b6/r9}{\draw[crossedge](\u)--(\v);}
        
        \node[font=\bfseries\large,text=Collective,fill=white,fill opacity=.92,text opacity=1,inner sep=2pt,rounded corners=1pt,anchor=west] at (2.50,.28) {$\partial\mathsf S$};
        \draw[-{Latex[length=1.5mm,width=1.0mm]},Collective!72,line width=.58pt] (2.40,.10)--(1.99,-.49);
        
        \end{scope}
        
        \begin{scope}[shift={(4.45,.05)}]
        \node[font=\normalsize,text=BlueGrey] at (0,3.3) {$3D$ network};
        \begin{scope}[on background layer]
        \path[fill=BlueGrey,fill opacity=.065,draw=none]
        (-2.10,-1.18)..controls(-2.62,-.10)and(-2.05,1.55)..(-.93,2.02)
        ..controls(.10,2.45)and(1.50,1.85)..(2.02,.78)
        ..controls(2.50,-.20)and(1.77,-1.57)..(.61,-1.92)
        ..controls(-.72,-2.28)and(-1.83,-1.90)..(-2.10,-1.18)--cycle;
        \draw[BlueGrey!42,line width=.55pt,dashed] (-1.77,.95) .. controls (-.95,1.67) and (.84,1.58) .. (1.74,.79);
        \draw[BlueGrey!42,line width=.55pt] (-1.92,-.18) .. controls (-.92,.60) and (.98,.56) .. (1.88,-.25);
        \draw[BlueGrey!42,line width=.55pt,dashed] (-1.53,-1.25) .. controls (-.52,-.55) and (.72,-.50) .. (1.46,-1.18);
        \draw[BlueGrey!30,line width=.45pt,dashed] (-1.57,1.48)..controls(-1.72,.25)and(-1.45,-.75)..(-1.03,-1.55);
        \draw[BlueGrey!30,line width=.45pt] (.18,2.10)..controls(.04,.78)and(.10,-.73)..(.25,-1.86);
        \draw[BlueGrey!30,line width=.45pt,dashed] (1.47,1.45)..controls(1.69,.45)and(1.46,-.65)..(1.05,-1.58);
        \end{scope}
        \begin{scope}[opacity=.42]
        \foreach \xa/\ya/\xb/\yb in {-1.2/1.18/-.25/1.50,-.25/1.50/.72/1.24,-1.2/1.18/-1.38/.26,-.25/1.50/-.35/.47,.72/1.24/.90/.30,-1.38/.26/-.35/.47,-.35/.47/.90/.30,-1.38/.26/-1.08/-.70,-.35/.47/-.25/-.62,.90/.30/.87/-.58,-1.08/-.70/-.25/-.62,-.25/-.62/.87/-.58}{\draw[Ink!28,line width=.48pt](\xa,\ya)--(\xb,\yb);}
        \foreach \x/\y in {-1.20/1.18,-.25/1.50,.72/1.24,-1.38/.26,-.35/.47,.90/.30,-1.08/-.70,-.25/-.62,.87/-.58}{\shade[ball color=white,draw=Ink!42,line width=.45pt] (\x,\y) circle (2.0mm);}
        \end{scope}
        \foreach \xa/\ya/\xb/\yb in {-1.60/.72/-.70/.95,-.70/.95/.22/.90,.22/.90/1.20/.62,-1.60/.72/-1.55/-.15,-.70/.95/-.62/.05,.22/.90/.26/.00,1.20/.62/1.34/-.25,-1.55/-.15/-.62/.05,-.62/.05/.26/.00,.26/.00/1.34/-.25,-1.55/-.15/-1.18/-1.08,-.62/.05/-.45/-1.00,.26/.00/.40/-.90,1.34/-.25/1.12/-1.02,-1.18/-1.08/-.45/-1.00,-.45/-1.00/.40/-.90,.40/-.90/1.12/-1.02}{\draw[Ink!22,line width=.5pt](\xa,\ya)--(\xb,\yb);}
        \foreach \x/\y in {-1.60/.72,-.70/.95,.22/.90,1.20/.62,-1.55/-.15,-.62/.05,.26/.00,1.34/-.25,-1.18/-1.08,-.45/-1.00,.40/-.90,1.12/-1.02}{\shade[ball color=white,draw=Ink!58,line width=.50pt] (\x,\y) circle (2.35mm);}
        \foreach \x/\y in {-1.78/1.37,-.42/1.96,.96/1.61,1.84/.82,1.89/-.63,.72/-1.62,-.63/-1.74,-1.72/-.95}{\shade[ball color=Collective!28,draw=Collective,line width=.92pt] (\x,\y) circle (2.75mm);}
        \foreach \xa/\ya/\xb/\yb in {-1.78/1.37/-1.60/.72,-.42/1.96/-.70/.95,-.42/1.96/.22/.90,.96/1.61/.22/.90,.96/1.61/1.20/.62,1.84/.82/1.20/.62,1.89/-.63/1.34/-.25,1.89/-.63/1.12/-1.02,.72/-1.62/.40/-.90,-.63/-1.74/-.45/-1.00,-1.72/-.95/-1.18/-1.08,-1.72/-.95/-1.55/-.15}{\draw[Ink!24,line width=.48pt](\xa,\ya)--(\xb,\yb);}
        \foreach \x/\y/\s in {-2.85/1.75/2.05,-3.05/.95/2.15,-2.92/-1.25/2.2,2.92/1.45/2.15,3.16/.57/2.25,2.95/-.95/2.3,.95/-2.58/2.25,-.52/2.86/2.0,1.78/2.42/2.05,-.92/-2.66/2.18}{\shade[ball color=Coarse!10,draw=Coarse!92!black,line width=.55pt] (\x,\y) circle (\s mm);}
        \foreach \xa/\ya/\xb/\yb in {-1.78/1.37/-2.85/1.75,-1.78/1.37/-3.05/.95,-.42/1.96/-.52/2.86,.96/1.61/1.78/2.42,-1.72/-.95/-2.92/-1.25,1.84/.82/2.92/1.45,1.84/.82/3.16/.57,1.89/-.63/2.95/-.95,.72/-1.62/.95/-2.58,-.63/-1.74/-.92/-2.66}{\draw[crossedge](\xa,\ya)--(\xb,\yb);}
        
        \node[font=\bfseries\large,text=Collective,fill=white,fill opacity=.92,text opacity=1,inner sep=2pt,rounded corners=1pt,anchor=west] at (2.48,.08) {$\partial\mathsf S$};
        \draw[-{Latex[length=1.5mm,width=1.0mm]},Collective!72,line width=.58pt] (2.38,.0)--(1.91,-.43);
        
        \end{scope}
        
      %  \draw[Ink!13,line width=.55pt] (-7.75,-3.08)--(7.75,-3.08);
     %   \node[row] at (0,-3.46) {$\displaystyle
     %  \mathcal S\!\big(\rho_N^{(K)}\big)-\mathcal S\!\big(\rho^{\otimes K}\big)
     %   \le |\partial\mathsf S|\,\big(\mathcal S(\q[\rho])-\mathcal S(\rho)\big)
     %   \le |\partial\mathsf S|\ln(\sss)$};
        \end{tikzpicture}
    }
    \caption{{\it Area Law in quantum networks.} The boundary $\partial \s$ of $\s$, \eqref{deltaS} is induced by the graph: it is the collection of vertices in $\s$ which are connected to $\r$. Two representations in which the network is viewed as being $2$- and $3$-dimensional. The von Neumann entropy of the state of $\s$ is called the entanglement entropy when the total network state is pure. The entanglement entropy may grow with $\s$. We show that as $N\rightarrow\infty$, the system entanglement entropy is bounded above by $|\partial \s|$, the {\it area of the boundary} of $\s$ (not its volume).}
    \label{fig:area-law}
\end{figure}

\begin{discussion}
The increase in entropy due to the interactions between sites is  bounded above by a term $|\partial \s|\ln(\sss)$, uniformly in $\rho$ and $N\ge N_*$. This upper bound is of the size of the {\it boundary} of $\s$, not its volume. For pure states $\rho$ the {\it entanglement entropy} 
$$
{\rm EE}(K)=\S\left(\lim_{N\rightarrow\infty}\rho_N^{(K)}\right)
$$ 
is a measure for the entanglement between the subsystem (region) $\s$ and the reservoir $\r$. Its value ranges from zero (no entanglement) to $\ln(d^K)=K\ln(d)$ (maximally entangled). As $\S(\rho)=0$ for pure $\rho$ we have the area law
$$
{\rm EE}(K) \le |\partial \s| \ln(\sss). 
$$
This inequality confirms the intuition that  entanglement is created by the edges of the graph linking $\s$ to $\r$. 
\end{discussion}

\subsection{Proof of Theorem \ref{thm:arealaw}} 
(1)\  From \eqref{2.32}, \eqref{m17} and the invariance of the von Neumann entropy under conjugation by unitaries, we have 
\begin{equation}
\label{2.25}
\S\big(\rho_N^{(K)}\big) = \S\Big( \E\big[ e^{-iT \sum_{j=K+1}^N \xi_j L(j)} (\rho^{\otimes K}) e^{iT \sum_{j=K+1}^N \xi_j L(j)}\big]\Big).
\end{equation}
As the operation being applied to $\rho^{\otimes K}$ on the right-hand side (inside $\mathcal S$) is unital, and unital channels cannot decrease entropy, we obtain the lower bounds in \eqref{2.27} and \eqref{res}.

Consider the CPTP map acting on $\b(\h_\s) =\b((\C^d)^{\otimes K})$ given by
\begin{equation}
\label{prodstruc}
\q_{\partial\s} = \bigotimes_{i\in\partial \s} (\q)_i,\qquad \q[\cdot] = \sum_{s=1}^\sss \Pi_s[\cdot]\Pi_s
\end{equation}
and where $(\q)_i$ is the single-site super-operator $\q$ acting nontrivially on the $i$th tensor product factor. $\q_{\partial \s}$ is unital (leaves $\mathbf 1$ invariant) and so it does not decrease the entropy,
\begin{equation}
\label{2.24-1}
\S\big(\rho_N^{(K)}\big) \le  \S\Big( \q_{\partial\s} \Big[\, \E\big[ e^{-iT \sum_{j=K+1}^N \xi_j L(j)} (\rho^{\otimes K}) e^{iT \sum_{j=K+1}^N \xi_j L(j)}\big]\Big] \Big).
\end{equation}
Denoting $|\partial\s|=m$ we have,
$$
\q_{\partial\s}[\cdot ] = \sum_{s_1,\ldots,s_m=1}^\sss (\Pi_{s_m})_{i_m}\cdots (\Pi_{s_1})_{i_1} \ [\cdot] \ (\Pi_{s_1})_{i_1}\cdots (\Pi_{s_m})_{i_m}.
$$
For any $s_1,\ldots,s_m$ and $i_1,\ldots,i_m\in \partial\s$ the $(\Pi_{s_1})_{i_1}\cdots (\Pi_{s_m})_{i_m}$ are eigenprojections of every operator $L(j)$ (see \eqref{eq:bndryop}). It follows that 
$$
\q_{\partial\s} \Big[ \, \E\big[ e^{-iT \sum_{j=K+1}^N \xi_j L(j)}Xe^{iT \sum_{j=K+1}^N \xi_j L(j)}\big] \Big]= \q_{\partial\s}[X]
$$
for any $X\in\b(\h_\s)$, so that \eqref{2.24-1} yields \eqref{2.27}. 
In the last step we have used  the product structure \eqref{prodstruc} and the additivity of the von Neumann entropy.

It remains to establish the bound
\begin{equation}
\label{2.28}
\S(\q[\rho])\le \S(\rho)+\ln(\sss)
\end{equation}
which, together with \eqref{2.27}, implies \eqref{res}.
As $\mathcal \q$ is a CPTP map (acting on bounded operators on $\C^d$, the single-vertex Hilbert space) it is the reduction of an augmented system (dilation) of the form \cite{MerkliNeira2026, Holevo} 
\begin{equation}
\label{6.9}
\q[X] = \Tr_\e\Big(U\big(X\otimes|\Omega\rangle\langle\Omega|\big)U^*\Big),\qquad X\in\b\big(\C^d\big),
\end{equation}
where $U$ is a unitary acting on $\C^d\otimes\h_\e$ with $\h_\e$ an `environment' or `ancilla' Hilbert space and $\Omega\in \h_\e$ is a normalized vector (`pure environment state'). The map $\q$ has the Kraus representation $\q[\cdot]=\sum_{s=1}^\sss\Pi_s[\cdot]\Pi_s$. We may therefore choose a Stinespring dilation with 
\begin{equation}
\label{dimHe}
\dim \h_\e\le  \sss.
\end{equation}
The operator 
$$
\rho_{\s\e} = U\big( \rho\otimes|\Omega\rangle\langle\Omega|\big)U^*
$$
is a density matrix on $\C^d\otimes\h_\e$ and from \eqref{6.9}, $\q[\rho]=\Tr_\e(\rho_{\s\e})$. Denote the other marginal of $\rho_{\s\e}$ as $\rho_\e=\Tr_\s(\rho_{\s\e})$. The Araki-Lieb inequality \cite{AL} gives
$$
\big|\S\big(\q[\rho]\big) - \S(\rho_\e)\big| \le \S(\rho_{\s\e}).
$$
As $\S(\rho_{\s\e})=\S(\rho)$ we obtain
$$
\S\big(\q[\rho]\big) \le   \S(\rho)+ \S(\rho_\e).
$$
The maximum of $\S(\rho_\e)$ is attained for $\rho_\e\propto \mathbf 1$ and takes the value $\ln(\dim \h_\e)\le\ln(\sss)$. This gives the bound \eqref{2.28}. 
\smallskip

(2) Now we show the optimality. For a well-connected graph satisfying the genericity condition \eqref{gencond} we know from Corollary \ref{cor:deco} and the continuity of the von Neumann entropy relative to the trace norm \cite{Audenaert2007}, that 
\begin{equation}
\label{2.29}
\lim_{N\rightarrow\infty}\S\big(\rho_N^{(K)}\big) =\S\big(\P[\rho^{\otimes K}]\big). 
\end{equation}
Take a graph such that for some $1\le m\le K$, every vertex $i\in\llbracket 1,m\rrbracket$ is connected to a number $\nu_i$, $i=1,\ldots,m$, distinct vertices $j\ge K+1$ and no such $j$ connects to two distinct $i$. Then $\partial\s=\{1,\ldots,m\}$ and $\L=\{A_1,\ldots,A_m\}$. The projections $P_\lambda$ are $\Pi_{s_1}\otimes\cdots \otimes \Pi_{s_m} \otimes \mathbf{1}^{\otimes (K - m)}$ with $\lambda=(s_1,\ldots,s_m)\in\llbracket 1,\sss\rrbracket^m$. Then we have $\P=\q_{\partial \s}$ (see \eqref{locres} and \eqref{prodstruc}). The system is well-connected if $\nu_i\rightarrow\infty$  as $N\rightarrow\infty$ for every $i$ (see \eqref{wellconn}). It follows from \eqref{2.29} that  $\lim_{N\rightarrow\infty}\S\big(\rho_N^{(K)}\big) =\S(\q_{\partial\s}[\rho^{\otimes K}]) = |\partial \s|\S(\q[\rho])+(K-|\partial\s|)\S(\rho)$. This concludes the proof of Theorem \ref{thm:arealaw}.\hfill\qed

\bigskip
\medskip

\section{Convergence of information-theoretic quantities}
\label{sec:infotheo}

\medskip

In the generic and well-connected regimes, Theorem \ref{thm2} shows the convergence and speed of convergence of the reduced density matrix,
\begin{equation}
\label{rhoinf}
\lim_{N\rightarrow\infty} \big\| \rho_N^{(K)}- \rho^{(K)}_\infty\big\|_1=0, \qquad  \rho^{(K)}_\infty= e^{-iTH_\s}\P\big[\rho^{\otimes K}\big] e^{iTH_\s}.
\end{equation}
In this section we investigate the implications for information-theoretic quantities.

\subsection{(Conditional) von Neumann entropy, mutual information}

We consider the von Neumann entropy $\S(\rho_N^{(K)})$ introduced in \eqref{vne} as well as the conditional von Neumann entropy and the mutual information induced by a bipartition of the system,
\begin{equation}
\s=\s_1\sqcup \s_2,\qquad |\s_1|\ge 1,\quad  |\s_2|=K-|\s_1|\ge 1.
\end{equation}
The conditional entropy and the mutual information of the state $\rho_N^{(K)}$ associated to the bipartition are defined by
\begin{align}
\S\big(\s_1|\s_2\big)_{\rho_N^{(K)}} &= \S\big(\rho_N^{(K)}\big) - \S\big(\Tr_{\s_1}(\rho_N^{(K)})\big)\quad \text{and}\\
I\big(\s_1:\s_2\big)_{\rho_N^{(K)}} &= \S\big(\Tr_{\s_2}(\rho_N^{(K)})\big) - \S\big( \s_1|\s_2\big)_{\rho_N^{(K)}},
\label{mutinfo}
\end{align}
respectively \cite{Winter2016, WVHC}.

\begin{proposition}
\label{prop:info}

For every $0< b<1$,
\begin{align}
\big|\S\big(\rho^{(K)}_N\big) -\S\big(\rho^{(K)}_\infty\big) \big| &\le   \frac{1}{2^b}\left(\ln(d^K-1) + \frac{1}{e^b(1-b)}\right) |\Lambda|^b q^{b\,\ve_K(N)}, \label{5.7}\\ 
\big|\S\big(\s_1|\s_2\big)_{\rho^{(K)}_N} -\S\big(\s_1|\s_2\big)_{\rho^{(K)}_\infty} \big| &\le \,  2^{1-b}\left(|\s_1|\ln d + \frac{1}{(2e)^b (1-b)}\right) |\Lambda|^b q^{b\,\ve_K(N)} \label{5.8}\\
\Big| I\big(\s_1:\s_2\big)_{\rho_N^{(K)}} - I\big(\s_1:\s_2\big)_{\rho_\infty^{(K)}}\Big| &\le\, \frac{1}{2^b} \left(3|\s_1|\ln(d)+ \frac{1+2^{1-b}}{e^b(1-b)} \right) |\Lambda|^b q^{b\,\ve_K(N)}.
\label{5.8-1}
\end{align}
\end{proposition}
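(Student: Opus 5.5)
The plan is to combine the trace-norm estimate of Theorem \ref{thm2}, namely $T_N:=\frac12\|\rho_N^{(K)}-\rho^{(K)}_\infty\|_1\le \frac12|\Lambda| q^{\ve_K(N)}$, with a continuity estimate for the von Neumann entropy, namely the Fannes--Audenaert inequality \cite{Audenaert2007}. For density matrices $\sigma,\omega$ on a space of dimension $D$ with $T=\frac12\|\sigma-\omega\|_1$, this inequality reads
\[
|\S(\sigma)-\S(\omega)|\le T\ln(D-1)+h(T),
\]
where $h(T)=-T\ln T-(1-T)\ln(1-T)$. To produce the stated form with an exponent $b$, I will bound the linear term by $T\le T^b$, using $T\le1$. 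For the binary entropy I will use the elementary inequality $h(T)\le \frac{T^b}{e^b(1-b)}$. This holds up to checking constants: one has $-T\ln T = T^b\cdot(-T^{1-b}\ln T)$, and $\sup_{0<x\le1}(-x^{1-b}\ln x)=\frac{1}{e(1-b)}$. The term $-(1-T)\ln(1-T)\le T$ must be absorbed as well, and I expect this absorption to require some care; as a fallback, I would restrict to $T\le 1/2$ or note that the bound is trivial otherwise. Then $T^b\le 2^{-b}|\Lambda|^b q^{b\ve_K(N)}$, which gives \eqref{5.7} with $D=d^K$.

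For the conditional entropy \eqref{5.8}, I will use the Alicki--Fannes--Winter continuity bound \cite{Winter2016},
\[
|\S(\s_1|\s_2)_\sigma-\S(\s_1|\s_2)_\omega|\le 2T\ln d_{\s_1}+(1+T)h\Big(\frac{T}{1+T}\Big),
\]
with $d_{\s_1}=d^{|\s_1|}$. I will then apply the same two steps as before: $T\le T^b$, and the bound $(1+T)h(T/(1+T))\le$ const$\cdot T^b$. Writing $(1+T)h(T/(1+T)) = T\ln\frac{1+T}{T}+\ln(1+T)$ and using the same supremum estimate yields a constant of the form $\frac{2^{1-b}}{(2e)^b(1-b)}$ after inserting $T^b\le 2^{-b}(\cdots)$. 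Matching the precise prefactors $2^{1-b}$ and $(2e)^{-b}$ is the main bookkeeping obstacle. I will first try the substitution $x=T/(1+T)\le T$ and $(1+T)\le 2$.

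For the mutual information \eqref{5.8-1}, I will use the decomposition $I(\s_1:\s_2)=\S(\Tr_{\s_2}\cdot)-\S(\s_1|\s_2)$ from \eqref{mutinfo}. The first term is the entropy of a marginal. Since partial trace contracts the trace norm, that marginal satisfies the same distance bound $T$, so Fannes--Audenaert in dimension $d^{|\s_1|}$ gives a contribution of at most $2^{-b}(|\s_1|\ln d+\frac{1}{e^b(1-b)})|\Lambda|^bq^{b\ve_K(N)}$. Adding the result of \eqref{5.8}, written as $2^{-b}(2|\s_1|\ln d+\frac{2^{1-b}}{e^b(1-b)})(\cdots)$, produces exactly the constants $3|\s_1|\ln d$ and $\frac{1+2^{1-b}}{e^b(1-b)}$ by the triangle inequality. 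This confirms that the overall structure is a uniform $T^b$ conversion of the standard continuity bounds, with Theorem \ref{thm2} supplying $T$.
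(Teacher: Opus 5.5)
Your architecture is the same as the paper's. You use Fannes--Audenaert for \eqref{5.7}, Alicki--Fannes--Winter for \eqref{5.8}, and the decomposition \eqref{mutinfo} with contractivity of the partial trace for \eqref{5.8-1}. In each case $\mathcal T\le\mathcal T^b$ and Theorem~\ref{thm2} convert everything into $|\Lambda|^b q^{b\ve_K(N)}$, and your final bookkeeping for the mutual information is exactly right.

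\textbf{The gap.} All three stated constants rest on one analytic step, $h(x)\le \frac{x^b}{e^b(1-b)}$, and you do not establish it. Your plan bounds the two pieces of $h$ separately: $-x\ln x\le \frac{x^b}{e(1-b)}$ and $-(1-x)\ln(1-x)\le x\le x^b$. This only gives $h(x)\le\bigl(1+\frac{1}{e(1-b)}\bigr)x^b$. That constant is strictly larger than $\frac{1}{e^b(1-b)}$ for every $b\in(0,1)$. After multiplying by $1-b$, the difference is $\phi(b)=(1-b)+e^{-1}-e^{-b}$, which is strictly decreasing with $\phi(1)=0$.

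Your fallback does not repair this. On $x\le 1/2$ the split still only gives the constant $2^{b-1}+\frac{1}{e(1-b)}$. After multiplying by $1-b$, at $b=1/2$ this is about $0.72$, versus $e^{-1/2}\approx 0.61$ for the target. So restricting to small $\mathcal T$ does not produce the stated prefactors. The same defect propagates into your treatment of $(1+\mathcal T)h\bigl(\frac{\mathcal T}{1+\mathcal T}\bigr)$ and hence into \eqref{5.8} and \eqref{5.8-1}.

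\textbf{The fix.} Keep the two pieces together. Since $-\ln(1-x)\le \frac{x}{1-x}$, you have $-(1-x)\ln(1-x)\le x$. Hence
$h(x)\le x(1-\ln x)=x^b f_b(x)$ with $f_b(x)=x^{1-b}(1-\ln x)$.
A one-line derivative computation shows that $f_b$ is maximized on $(0,1]$ at $x=e^{-b/(1-b)}$, with value $\frac{e^{-b}}{1-b}$. Note $f_b(1)=1\le \frac{e^{-b}}{1-b}$ and $f_b(0^+)=0$. This joint optimization is precisely what beats the sum of the two separate suprema.

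\textbf{Rest of the argument.} With this bound, \eqref{5.7} follows immediately. Your substitution for the conditional entropy then works as intended:
$(1+\mathcal T)h\bigl(\frac{\mathcal T}{1+\mathcal T}\bigr)\le \frac{(1+\mathcal T)^{1-b}\mathcal T^b}{e^b(1-b)}\le \frac{2^{1-b}}{e^b(1-b)}\mathcal T^b .$
Pulling out a factor $2$ and inserting $\mathcal T^b\le 2^{-b}|\Lambda|^bq^{b\ve_K(N)}$ gives \eqref{5.8}. Your combination for \eqref{5.8-1} then coincides with the paper's.
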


In the generic regime where $q<1$ and for well-connected systems where $\ve_K(N)\rightarrow\infty$ as $N\rightarrow\infty$, the right-hand sides of \eqref{5.7}-\eqref{5.8-1} converge to zero exponentially in the effective connection number. 

\begin{example}
In the case of Theorem \ref{thm3} (locally resolved boundary operators) we have $\rho_\infty^{(K)}=\tau^{\otimes K}$, \eqref{DeFinettifinalstate}. Then $\S\big(\rho_\infty^{(K)}\big) = K \S(\tau)$. The entropy of $\tau$ vanishes if and only if $\tau$ is a pure state, which in turn is equivalent to $\rho=|\psi\rangle\langle\psi|$ with a normalized vector $\psi\in\C^d$ in the range of one of the eigenprojections $\Pi_s$. As discussed earlier, this is a trivial case in the sense that the interaction via the operators $A$ does not induce any correlation in the initial state $\rho^{\otimes N}$, \eqref{m2}. In the non-trivial case we have $\S(\tau)>0$ and the conditional entropy and mutual information in the final state are $\S\big(\s_1|\s_2\big)_{\rho^{(K)}_\infty}=|\s_1|\S(\tau)$ and $I\big(\s_1:\s_2\big)_{\rho_\infty^{(K)}}=0$. The correlations between disjoint parts in $\s$, measured by the mutual information, are thus washed out by the interaction with the large reservoir ($N\rightarrow\infty$).
\end{example}

{\bf Proof of Proposition \ref{prop:info}.} 
The Fannes-Audenaert inequality (\cite[Theorem~1]{Audenaert2007}; while they use $\log_2$ we use $\ln$) gives 
\begin{equation}
\label{5.4}
\big|\S\big(\rho^{(K)}_N\big) -\S\big(\rho^{(K)}_\infty\big) \big| \le \mathcal T \ln(d^K-1) + h(\mathcal T),
\end{equation}
where $d^K$ is the dimension of the Hilbert space the density matrices act on, where 
\begin{equation}
\label{T}
\mathcal T = \frac12 \|\rho^{(K)}_N-\rho^{(K)}_\infty\|_1 \in[0,1]
\end{equation}
and where $h$ is the binary entropy
\begin{equation}
\label{h}
h(x) = -x\ln(x)-(1-x)\ln(1-x), \qquad x\in[0,1].
\end{equation}
For $0\le x<1$ we have  $-\ln(1-x)=\int_0^x\frac{dt}{1-t}\le \frac{x}{1-x}$ and for $0< b<1$, $0<x<1$ we get $h(x) \le x(1-\ln(x)) = x^b f_b(x)$ with $f_b(x) = x^{1-b}(1-\ln(x))$. The maximum of $f_b$ is $e^{-b}/(1-b)$, attained at $x=e^{-b/(1-b)}$. Thus,
\begin{equation}
\label{5.5}
h(x) \le \frac{x^b}{e^b(1-b)},
\end{equation}
which extends to all of $x\in[0,1]$.  

As $\mathcal T\le 1$ and $0\le b<1$ we have $\mathcal T\le \mathcal T^b$, so that combining \eqref{5.4} and \eqref{5.5} yields,
\begin{equation}
\label{any}
\big|\S\big(\rho^{(K)}_N\big) -\S\big(\rho^{(K)}_\infty\big) \big| \le   2^{-b}\left(\ln(d^K-1) + \frac{1}{e^b(1-b)}\right) \left\|\rho^{(K)}_N-\rho^{(K)}_\infty\right\|_1^b.
\end{equation}
The bound \eqref{5.7} follows from Theorem \ref{thm2}.

To show \eqref{5.8} we use the Alicki-Fannes-Winter inequality (\cite[Lemma~2]{Winter2016}),
\begin{equation}
\left|\S\big(\s_1|\s_2\big)_{\rho^{(K)}_N} -\S\big(\s_1|\s_2\big)_{\rho^{(K)}_\infty} \right| \le  2\mathcal T \ln(\dim \h_{\s_1}) +(1+\mathcal T)h\left(\frac{\mathcal T}{1+\mathcal T}\right)
\end{equation}
with $\mathcal T$ and $h$ given in \eqref{T} and \eqref{h}. As $1+\mathcal T\le 2$ we obtain from \eqref{5.5},
$$
(1+\mathcal T) h\left(\frac{\mathcal T}{1+\mathcal T}\right)\le \frac{2^{1-b}}{e^b(1-b)} \mathcal T^b
$$
and as $\mathcal T\le \mathcal T^b$ and $\dim \h_{\s_1} = d^{|\s_1|}$,
\begin{equation}
\label{5.13}
\left|\S\big(\s_1|\s_2\big)_{\rho^{(K)}_N} -\S\big(\s_1|\s_2\big)_{\rho^{(K)}_\infty} \right| \le  2\left(|\s_1|\ln d + \frac{1}{(2e)^b (1-b)}\right) \mathcal T^b.
\end{equation}
Then \eqref{5.8} follows from Theorem \ref{thm2}.

From \eqref{mutinfo} and \eqref{5.13} we have 
\begin{align}
\Big| I\big(\s_1:\s_2\big)_{\rho_N^{(K)}} &- I\big(\s_1:\s_2\big)_{\rho_\infty^{(K)}}\Big|\nonumber\\
\le &\  \Big| \S\big(\Tr_{\s_2}(\rho_N^{(K)})\big)-\S\big(\Tr_{\s_2}(\rho_\infty^{(K)})\big)\Big|  + 2\left(|\s_1|\ln d + \frac{1}{(2e)^b (1-b)}\right) \mathcal T^b.  
\label{5.15}
\end{align}
Furthermore, as the bound \eqref{any} holds for any pair of density matrices, 
\begin{align}
 \Big| \S\big(\Tr_{\s_2}(\rho_N^{(K)})\big)&-\S\big(\Tr_{\s_2}(\rho_\infty^{(K)})\big)\Big| \nonumber\\
 &\le   2^{-b}\left(\ln(d^{|\s_1|}-1) + \frac{1}{e^b(1-b)}\right) \big\|\Tr_{\s_2}\big( \rho^{(K)}_N\big)-\Tr_{\s_2}\big(\rho^{(K)}_\infty\big)\big\|_1^b.
 \label{5.16}
\end{align}
As $\big\|\Tr_{\s_2}\big( \rho^{(K)}_N\big)-\Tr_{\s_2}\big(\rho^{(K)}_\infty\big)\big\|_1\le \big\| \rho^{(K)}_N-\rho^{(K)}_\infty\big\|_1$ we obtain the following bound from \eqref{5.15}, \eqref{5.16},
\begin{equation}
\Big| I\big(\s_1:\s_2\big)_{\rho_N^{(K)}} - I\big(\s_1:\s_2\big)_{\rho_\infty^{(K)}}\Big|\le \left(\ln(d^{|\s_1|} - 1) + 2|\s_1|\ln(d)+ \frac{1+2^{1-b}}{e^b(1-b)} \right) \mathcal T^b.  
\label{5.15'}
\end{equation}
Using $\ln(d^{|\s_1|} - 1) \le |\s_1|\ln(d)$, the bound \eqref{5.8-1} follows from \eqref{5.15'} and Theorem \ref{thm2}. This concludes the proof of Proposition \ref{prop:info}.\hfill \qed

\subsection{Entanglement between $\s$ and $\r$}

We consider the bipartition of the network into $\s\sqcup\r$ where $|\s|=K$ and $|\r|=N-K$ with $K$ fixed and  $N\rightarrow\infty$. If the state of the total system  $\rho_N$, \eqref{m2} is pure, then the 
(generalized) concurrence,
\begin{equation}
\label{genconc}
C\big(\rho^{(K)}_N\big) = \sqrt{2\left(1-\Tr\left\{\left(\rho^{(K)}_N\right)^2\right\}\right)},
\end{equation}
is a good measure of the entanglement between $\s$ and $\r$. The state $\rho_N$ is pure if and only if the factor state $\rho$ is pure,
\begin{equation}
\label{rhopure}
\rho=|\psi\rangle\langle\psi|.
\end{equation}

\begin{proposition}
\label{concurrence convergence}
For network states $\rho_N$ defined in \eqref{m2} with arbitrary initial single-site state $\rho$ (pure or mixed), we have 
\begin{equation}
\label{3.17}
\big| C\big(\rho^{(K)}_N\big)-C\big(\rho^{(K)}_\infty\big)\big|\le \sqrt{2|\Lambda|}\  q^{\ve_K(N)/2}.
\end{equation}
\end{proposition}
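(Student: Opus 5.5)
Write $\rho_N\equiv\rho^{(K)}_N$, $\rho_\infty\equiv\rho^{(K)}_\infty$. The plan is to compare the purities $\Tr(\rho_N^2)$ and $\Tr(\rho_\infty^2)$ directly through the block decomposition of Theorem \ref{thm1}, rather than passing through the trace-norm bound of Theorem \ref{thm2}. The trace-norm route would only give a bound linear in $q^{\ve_K(N)}$ times a constant. The direct route exploits the square root in the definition \eqref{genconc}.

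\textbf{Step 1.} Use unitary invariance of the purity and Theorem \ref{thm1} to remove $e^{-iTH_\s}$. With $X=\rho^{\otimes K}$ this gives
\[
\Tr(\rho_N^2)=\Tr\big(\Phi_N(X)^2\big),\qquad \Tr(\rho_\infty^2)=\Tr\big(\P[X]^2\big).
\]
Since the $P_\lambda$ are orthogonal projections summing to $\mathbf 1$, the cross terms factorize blockwise:
\[
\Tr\big(\Phi_N(X)^2\big)=\sum_{\lambda,\mu}|w_N(\lambda,\mu)|^2\,\Tr\big(P_\lambda X P_\mu X P_\lambda\big)=\sum_{\lambda,\mu}|w_N(\lambda,\mu)|^2\,\|P_\lambda XP_\mu\|_2^2 .
\]
Here we used $w_N(\mu,\lambda)=\overline{w_N(\lambda,\mu)}$, which holds because $\psi(-\Delta)=\overline{\psi(\Delta)}$. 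The diagonal terms have $w_N(\lambda,\lambda)=1$ and reproduce exactly $\Tr(\P[X]^2)$. Hence
\[
0\le \Tr(\rho_N^2)-\Tr(\rho_\infty^2)=\sum_{\lambda\neq\mu}|w_N(\lambda,\mu)|^2\|P_\lambda XP_\mu\|_2^2\le q^{2\ve_K(N)}\sum_{\lambda\neq\mu}\|P_\lambda XP_\mu\|_2^2,
\]
where the last inequality is \eqref{m24}.

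\textbf{Step 2.} Bound the remaining sum. For $X\ge0$ we have $\|P_\lambda X P_\mu\|_2\le\|P_\lambda XP_\mu\|_1\le\sqrt{\Tr(P_\lambda X)\Tr(P_\mu X)}$, exactly as in the estimate before \eqref{m17'}. Summing over all pairs and using $\Tr X=1$ gives
\[
\sum_{\lambda\neq\mu}\|P_\lambda XP_\mu\|_2^2\le\Big(\sum_\lambda\Tr(P_\lambda X)\Big)^2=1\le|\Lambda|.
\]
Keeping the factor $|\Lambda|$, as in \eqref{m17'}, is harmless. So $|\Tr(\rho_N^2)-\Tr(\rho_\infty^2)|\le |\Lambda|\,q^{2\ve_K(N)}$, and even the cruder bound $|\Lambda|q^{\ve_K(N)}$ suffices since $q\le 1$.

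\textbf{Step 3.} Convert purities to concurrences. Set $a=2(1-\Tr\rho_N^2)\ge0$ and $b=2(1-\Tr\rho_\infty^2)\ge0$. The elementary inequality $|\sqrt a-\sqrt b|\le\sqrt{|a-b|}$ then yields
\[
\big|C(\rho_N)-C(\rho_\infty)\big|\le\sqrt{2|\Lambda|\,q^{\ve_K(N)}}=\sqrt{2|\Lambda|}\;q^{\ve_K(N)/2},
\]
which is \eqref{3.17}.

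The only delicate point is Step 1: one must check that the off-diagonal blocks contribute to the purity only through $|w_N|^2$, with no mixed terms between different pairs. This requires the hermiticity relation for $w_N$ and the orthogonality \eqref{resid}. Nothing in the argument uses purity of $\rho$, so the statement holds for mixed $\rho$ as claimed.
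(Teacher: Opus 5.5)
Your argument is correct, but it is not the paper's route. The paper never uses the block structure. It applies $|\Tr\sigma^2-\Tr\tau^2|=|\Tr((\sigma-\tau)(\sigma+\tau-\mathbf 1))|\le\|\sigma-\tau\|_1$, valid for any two density matrices, to get $|C(\rho_N^{(K)})^2-C(\rho_\infty^{(K)})^2|\le 2\|\rho_N^{(K)}-\rho_\infty^{(K)}\|_1$. It then uses $|x-y|\le\sqrt{|x^2-y^2|}$ and Theorem \ref{thm2}, and this gives exactly \eqref{3.17}. So your opening remark that the trace-norm route would not suffice is mistaken: it exploits the square root just as your Step 3 does.

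What your route buys is more information. Working with the explicit form of $\Phi_N$ from Theorem \ref{thm1}, you compute the purity defect exactly. With $X=\rho^{\otimes K}$,
$\Tr\big[(\rho_N^{(K)})^2\big]-\Tr\big[(\rho_\infty^{(K)})^2\big]=\sum_{\lambda\neq\mu}|w_N(\lambda,\mu)|^2\|P_\lambda XP_\mu\|_2^2\ge 0$.
This has two consequences:
\begin{itemize}
\item[(a)] It shows $C(\rho_N^{(K)})\le C(\rho_\infty^{(K)})$ for every $N$.
\item[(b)] The weights enter squared, and the off-diagonal Hilbert--Schmidt mass is at most $\Tr X^2\le 1$. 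So you actually obtain $|C(\rho_N^{(K)})-C(\rho_\infty^{(K)})|\le\sqrt2\,q^{\ve_K(N)}$. This improves \eqref{3.17} in both the exponent and the $|\Lambda|$ prefactor; you discard it in Step 2 only to match the stated form.
\end{itemize}
The paper's route, in exchange, is a two-line black-box reduction. It transfers any trace-norm estimate, for arbitrary pairs of states, to the concurrence without using the multiplier structure $P_\lambda XP_\mu\mapsto w_N(\lambda,\mu)P_\lambda XP_\mu$ of the channel.
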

\bigskip

{\bf Proof of Proposition \ref{concurrence convergence}.} We have 
\begin{align}
\label{5.21}
\Big| C\big(\rho^{(K)}_N\big)^2-C\big(\rho^{(K)}_\infty\big)^2\Big| = 2\Big| \|\rho_N^{(K)}\|_2^2 - \|\rho_\infty^{(K)}\|_2^2\Big|,
\end{align}
where $\|X\|_2 = \sqrt{\Tr (X^*X)}$ is the Hilbert-Schmidt norm. Now for density matrices  $\sigma$, $\tau$ we have
\[
\big|\Tr(\sigma^2) - \Tr(\tau^2)\big| = \big|\Tr((\sigma - \tau)(\sigma + \tau))\big| = \big|\Tr((\sigma - \tau)(\sigma + \tau - 1))\big|
\] 
as $\Tr(\sigma - \tau) = 0$. Then from H\"older's inequality,
\[
\big|\Tr(\sigma^2) - \Tr(\tau^2)\big| \le \|\sigma - \tau\|_1\|\sigma + \tau - 1\|_\infty \le \|\sigma - \tau\|_1
\] as $-1 \le \sigma + \tau - 1 \le 1$. Then \eqref{5.21} gives, $\big| C\big(\rho^{(K)}_N\big)^2-C\big(\rho^{(K)}_\infty\big)^2\big| \le 2\big\|\rho_N^{(K)}- \rho_\infty^{(K)}\big\|_1$. 
Next for $x,y\ge 0$ we have $|x^2-y^2|\ge (x-y)^2$, so $\big| C\big(\rho^{(K)}_N\big)-C\big(\rho^{(K)}_\infty\big)\big| \le \sqrt 2 \big\|\rho_N^{(K)}- \rho_\infty^{(K)}\big\|_1^{1/2}$. The bound \eqref{3.17} follows from Theorem \ref{thm2}. 
\hfill \qed

\medskip

%\color{red}
%Moreover, I'm pretty sure 2 is optimal. Consider \[
%    \sigma = \begin{pmatrix}
%        1 & 0\\
%        0 & 0
 %   \end{pmatrix}, \quad \tau_\ve = \begin{pmatrix}
%        1 - \ve & 0\\
%        0 & \ve
%    \end{pmatrix}.
%\]
%Then \[
%    \|\sigma - \tau_\ve\|_1 = 2\eps,
%\] while \[
%    |\Tr \sigma^2 - \Tr \tau_\ve^2| = 2\ve(1 - \ve).
%\] Thus \[
%    \frac{|C(\sigma)^2 - C(\tau_\ve)^2|}{\|\sigma - \tau_\ve\|_1} = 2(1 - \ve) \to 2
%\] as $\ve \to 0$.
%}

\begin{discussion}
In the generic and well-connected regimes \eqref{gencond} and  \eqref{wellconn}, the right-hand side of \eqref{3.17} converges to zero as $N\rightarrow\infty$. From \eqref{defP} and \eqref{rhoinf} we obtain for pure states \eqref{rhopure},
\begin{equation}
\label{3.19}
\Tr\, (\rho^{(K)}_\infty)^2 = \Tr\, (\P[\rho^{\otimes K}])^2 = \sum_{\lambda\in\Lambda} \big\| P_\lambda (\psi \otimes\cdots\otimes \psi)\big\|^4 \in [0,1].
\end{equation}
This quantity is equal to one and only if $\psi\otimes\cdots\otimes\psi$ is an eigenvector of exactly one of the $P_\lambda$, that is, if and only if for some (unique) $\lambda\in\Lambda$,
\begin{equation}
\label{3.20}
P_\lambda (\psi\otimes\cdots\otimes\psi)=\psi\otimes\cdots\otimes\psi.
\end{equation}
This in turn is equivalent to $\P[\rho^{\otimes K}]=\rho^{\otimes K}$ or, $\rho_\infty^{(K)}= e^{-iTH_\s} (\rho^{\otimes K}) e^{iTH_\s}$. In this case, and only in this case, the asymptotic state of the system ($N\rightarrow\infty$) is the same as the one obtained by deleting in $\rho_N$ all interactions of $\s$ with $\r$ (eliminating the $\s\r$ connections in the graph), meaning that the $\s\r$ coupling has no effect. 
Then the concurrence \eqref{genconc} vanishes. For all other states $\psi$ (the ones not satisfying \eqref{3.20}) we have $C(\rho_\infty^{(K)})>0$. We conclude that generically, unless the $\s\r$ interaction leaves the initial product state unaffected in the sense of \eqref{3.20}, the system is entangled with the reservoir.
\end{discussion}

\begin{example}  In the case of locally resolved boundary operators we have $\rho^{(K)}_\infty=\tau^{\otimes K}$ (see Theorem \ref{thm3}) and 
$$
C(\rho^{(K)}_\infty) = \sqrt{2\big(1-(\Tr\, \tau^2)^K\big)}, \qquad \Tr\,\tau^2 = \sum_{s=1}^\sss \|\Pi_s\psi\|^4\in [0,1],
$$
where the last equality holds for pure $\rho$ \eqref{rhopure}. $\Tr\, \tau^2=1$ if and only if $\Pi_s\psi=\psi$ for an $s$, that is, if and only if $\psi$ is an eigenvector of the operator $A$. In this case the coupling between $\s$ and $\r$ has no effect, $\rho_N=\rho^{\otimes N}$ (see \eqref{m2}) and $C(\rho^{(K)}_\infty) =0$. Whenever $\psi$ is not an eigenvector of $A$ the concurrence $C(\rho^{(K)}_\infty)$ is strictly positive and increases with the system size $K$ and approaches $\sqrt 2$ as $K \rightarrow \infty$. 
\end{example}

\subsection{Intra-system entanglement: Qubits}

We now consider $\s$ to consist of two qubits, $K=d=2$, and we examine the intra-system entanglement between the qubits. The entanglement for pure or mixed two-qubit states is conveniently measured by the Wootters concurrence which we define now.

Let $\rho$ be a two-qubit density matrix acting on the Hilbert space $\C^2\otimes\C^2$ and define the positive semidefinite operator 
\begin{equation}
\label{R}
R(\rho) = \sqrt{\sqrt\rho \tilde\rho \sqrt\rho},\qquad \tilde\rho = (\sigma^y\otimes\sigma^y)\bar\rho (\sigma^y\otimes\sigma^y) 
\end{equation}
where $\sigma^y=\begin{pmatrix}0 & -i\\ i & 0\end{pmatrix}$ is the Pauli $y$ matrix and $\bar\rho$ is the entry-wise complex conjugate of the density matrix $\rho$ when written in the standard product eigenbasis of the two local $\sigma^z$ operators, where $\sigma^z=\begin{pmatrix}1 & 0\\ 0 & -1\end{pmatrix}$ is the Pauli $z$ matrix.  The {\it Wootters concurrence} of the state $\rho$, introduced in \cite{Wootters_1998}, is given by 
\begin{equation}
\cw(\rho) = \max\big\{0, \lambda_1-\lambda_2-\lambda_3-\lambda_4\},
\end{equation}
where $\lambda_1\ge\lambda_2\ge\lambda_3\ge\lambda_4\ge 0$ are the ordered eigenvalues of $R(\rho)$. The concurrence takes values in $[0,1]$. It vanishes exactly for separable states $\rho$ and equals $1$ for so-called maximally entangled states,  such as Bell states. The Wootters concurrence is a monotonic function of the entanglement of formation \cite{Bennet} and it is thus a good measure of entanglement between two qubits. 

\begin{theorem}[Wootters concurrence between two qubits forming $\s$]
\label{thm:qubit conc}
We have 
\begin{equation}
\left|\cw\big(\rho_N^{(2)}\big) -\cw\big(\rho_\infty^{(2)}\big)\right| \le 2\sqrt{|\Lambda|} \, q^{\ve_2(N)/2}.
\end{equation}
\end{theorem}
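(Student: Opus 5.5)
The bound will come from a continuity estimate for the Wootters concurrence in trace norm, combined with Theorem \ref{thm2}. The target is
\[
\big|\cw(\sigma)-\cw(\tau)\big|\le 2\,\|\sigma-\tau\|_1^{1/2}
\]
for any two-qubit density matrices $\sigma,\tau$. Applying this with $\sigma=\rho_N^{(2)}$, $\tau=\rho_\infty^{(2)}$ and using $\|\rho_N^{(2)}-\rho_\infty^{(2)}\|_1\le|\Lambda|q^{\ve_2(N)}$ from \eqref{m23} gives $2\sqrt{|\Lambda|}\,q^{\ve_2(N)/2}$, which is the claim.

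To prove the continuity estimate, I would use the fidelity form of $R$. The eigenvalues $\lambda_i$ of $R(\rho)$ are the singular values of $\sqrt\rho\,\sqrt{\tilde\rho}$, since $R(\rho)^2=\sqrt\rho\,\tilde\rho\,\sqrt\rho=(\sqrt\rho\sqrt{\tilde\rho})(\sqrt\rho\sqrt{\tilde\rho})^*$. The map $\rho\mapsto(\lambda_1-\lambda_2-\lambda_3-\lambda_4)$ is, by a standard argument, $\min_{U}\Re\Tr$-type, or one can simply note that $\cw(\rho)=\max\{0,f(\rho)\}$ where $f$ is $1$-Lipschitz in the singular values. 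By Mirsky's inequality, the vector of ordered singular values is $1$-Lipschitz from the trace norm (indeed any unitarily invariant norm) to the $\ell^1$ norm. Hence
\[
|f(\sigma)-f(\tau)|\le\sum_i|\lambda_i(\sigma)-\lambda_i(\tau)|\le\big\|\sqrt\sigma\sqrt{\tilde\sigma}-\sqrt\tau\sqrt{\tilde\tau}\big\|_1 .
\]
Since $x\mapsto\max\{0,x\}$ is $1$-Lipschitz, the same bound holds for $\cw$.

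Next I would split
\[
\sqrt\sigma\sqrt{\tilde\sigma}-\sqrt\tau\sqrt{\tilde\tau}=(\sqrt\sigma-\sqrt\tau)\sqrt{\tilde\sigma}+\sqrt\tau(\sqrt{\tilde\sigma}-\sqrt{\tilde\tau}).
\]
H\"older's inequality with Hilbert--Schmidt norms bounds each term. For the first, $\|\sqrt{\tilde\sigma}\|_2=(\Tr\tilde\sigma)^{1/2}=1$. For the second, $\|\sqrt\tau\|_2=1$, so the total is at most $\|\sqrt\sigma-\sqrt\tau\|_2+\|\sqrt{\tilde\sigma}-\sqrt{\tilde\tau}\|_2$.

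The Powers--St{\o}rmer inequality gives $\|\sqrt\sigma-\sqrt\tau\|_2^2\le\|\sigma-\tau\|_1$. The map $\rho\mapsto\tilde\rho$ is an antiunitary conjugation, namely complex conjugation followed by the unitary $\sigma^y\otimes\sigma^y$. It therefore commutes with square roots ($\sqrt{\tilde\rho}=\widetilde{\sqrt\rho}$) and preserves all Schatten norms. This yields $\|\sqrt{\tilde\sigma}-\sqrt{\tilde\tau}\|_2=\|\sqrt\sigma-\sqrt\tau\|_2\le\|\sigma-\tau\|_1^{1/2}$, and hence the factor $2$.

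The main points to verify carefully are the singular-value Lipschitz step, using Mirsky's inequality in the $\ell^1$/trace norm, and the commutation of the antiunitary tilde operation with square roots. Both are standard, so I expect no serious obstacle; the rest is routine.
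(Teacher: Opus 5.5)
Your proposal is correct and follows the paper's route: the $1$-Lipschitz property of $x\mapsto\max\{0,x\}$, then Mirsky's inequality for singular values, then a two-term splitting estimated with H\"older's inequality and the Powers--St\o rmer inequality, and finally Theorem~\ref{thm2}. The only cosmetic difference is that you work with $\sqrt\rho\,\sqrt{\tilde\rho}$, while the paper uses $B(\rho)=\sqrt\rho\,Y\sqrt{\bar\rho}$ with $Y=\sigma^y\otimes\sigma^y$; since $\sqrt{\tilde\rho}=Y\sqrt{\bar\rho}\,Y$, the two matrices differ by right multiplication by the unitary $Y$, so they have the same singular values and the estimates coincide.
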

Theorem \ref{thm:qubit conc} follows directly from Theorem \ref{thm2} and Lemma \ref{lem:concurrence} below, which is of independent interest.

\begin{example}
For the qubit model \eqref{A} the cardinalities of $\Lambda$ and the effective connection numbers $\ve_2(N)$ can be taken from Tables \ref{table:guiding example 1} and \ref{table:guiding example 2}. The final states $\rho_\infty^{(2)}$ are given by the right-hand sides of \eqref{rh1}-\eqref{lh3} and we have $\cw\big(\rho_\infty^{(2)}\big)=0$ as those states are separable. 
\end{example}

\begin{lemma}[$1/2$-H\"older continuity of Wootters concurrence]
\label{lem:concurrence}
We have for any density matrices $\rho,\sigma$ of two qubits, 
\begin{equation}
\label{ineq}
\big|\cw(\rho)-\cw(\sigma)\big| \le 2\|\rho-\sigma\|^{1/2}_1.
\end{equation}   
\end{lemma}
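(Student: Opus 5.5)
The plan is to write the eigenvalues of $R(\rho)$ as singular values of an explicit matrix built from $\sqrt\rho$, and then chain three standard estimates. The first is a Lipschitz bound for the concurrence formula in terms of singular values. The second is a singular-value perturbation inequality in trace norm. The third is the Powers--St\o rmer inequality, which produces the exponent $1/2$.

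\emph{Step 1: reformulation.} Set $Y=\sigma^y\otimes\sigma^y$, which is unitary and self-adjoint. Since $\bar\rho\ge0$ and $\overline{\sqrt\rho}=\sqrt{\bar\rho}$ (entrywise conjugation commutes with the functional calculus), we have $\sqrt{\tilde\rho}=Y\sqrt{\bar\rho}\,Y$. Define
\[
M(\rho)=\sqrt\rho\,\sqrt{\tilde\rho}.
\]
Then $R(\rho)^2=\sqrt\rho\,\tilde\rho\,\sqrt\rho=M(\rho)M(\rho)^*$. Hence the eigenvalues $\lambda_1\ge\cdots\ge\lambda_4$ of $R(\rho)$ are exactly the singular values $s_1(M(\rho))\ge\cdots\ge s_4(M(\rho))$.

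\emph{Step 2: Lipschitz reduction.} Consider the map $f(s)=\max\{0,s_1-s_2-s_3-s_4\}$ on decreasingly ordered vectors. It is $1$-Lipschitz with respect to the $\ell^1$ norm, because $\max\{0,\cdot\}$ is $1$-Lipschitz and each coordinate enters with coefficient $\pm1$. The Mirsky/Lidskii inequality for singular values gives $\sum_i|s_i(M)-s_i(M')|\le\|M-M'\|_1$. Combining these,
\[
|\cw(\rho)-\cw(\sigma)|\le\|M(\rho)-M(\sigma)\|_1 .
\]
I expect this to be the main point to check, namely that the trace-norm version of singular-value majorization applies. It follows from Ky Fan's theorem that $(s_i(M)-s_i(M'))_i$ is weakly majorized by $(s_i(M-M'))_i$.

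\emph{Step 3: Hölder and Powers--St\o rmer.} Write
\[
M(\rho)-M(\sigma)=(\sqrt\rho-\sqrt\sigma)\sqrt{\tilde\rho}+\sqrt\sigma\,(\sqrt{\tilde\rho}-\sqrt{\tilde\sigma}).
\]
By Hölder's inequality $\|XZ\|_1\le\|X\|_2\|Z\|_2$, together with $\|\sqrt{\tilde\rho}\|_2=\|\sqrt\sigma\|_2=1$ (each is the square root of a trace-one state), we get
\[
\|M(\rho)-M(\sigma)\|_1\le\|\sqrt\rho-\sqrt\sigma\|_2+\|\sqrt{\tilde\rho}-\sqrt{\tilde\sigma}\|_2 .
\]
Unitary invariance under conjugation by $Y$, and invariance of the Hilbert--Schmidt norm under entrywise conjugation, give $\|\sqrt{\tilde\rho}-\sqrt{\tilde\sigma}\|_2=\|\sqrt\rho-\sqrt\sigma\|_2$. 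Therefore the right-hand side equals $2\|\sqrt\rho-\sqrt\sigma\|_2$. Finally, the Powers--St\o rmer inequality states $\|\sqrt\rho-\sqrt\sigma\|_2^2\le\|\rho-\sigma\|_1$ for positive operators. This yields \eqref{ineq}.
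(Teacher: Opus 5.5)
Your proof is correct and is the paper's argument: your $M(\rho)=\sqrt\rho\sqrt{\tilde\rho}=\sqrt\rho\,Y\sqrt{\bar\rho}\,Y$ is the paper's $B(\rho)$ multiplied on the right by the unitary $Y$. Hence the singular values and $\|M(\rho)-M(\sigma)\|_1=\|B(\rho)-B(\sigma)\|_1$ are unchanged, and your chain (Lipschitz bound, Mirsky, H\"older on the same splitting, Powers--St\o rmer) is the paper's. One small correction to your justification in Step 2: you need $|s(M)-s(M')|\prec_w s(M-M')$ with absolute values, which is Mirsky's theorem (obtained from Lidskii's theorem via Hermitian dilations); the signed-difference majorization you wrote, or the Ky Fan triangle inequality alone, would not give $\sum_i|s_i(M)-s_i(M')|\le\|M-M'\|_1$.
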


{\bf Proof of Lemma \ref{lem:concurrence}.} Denote by $\lambda_j(\rho)$, $\lambda_j(\sigma)$ the eigenvalues of $R(\rho)$, $R(\sigma)$. As $|\max\{0,a\}-\max\{0,b\}|\le |a-b|$ we have,
\begin{align}
\label{3.18}
\big|\cw(\rho) - \cw(\sigma)\big|  \le   \sum_{j = 1}^4|\lambda_j(\rho) - \lambda_j(\sigma)|.
\end{align}
Now $R(\rho)^2= B(\rho)B(\rho)^*$ for
$$
B(\rho) = \sqrt\rho \,Y\sqrt{\bar\rho},\qquad Y=\sigma^y\otimes\sigma^y.
$$
This means that the $\lambda_j(\rho)$ are the singular values of the matrix $B(\rho)$. We can then use Mirsky's inequality \cite[Theorem~3.4.5]{HoJo} and \cite{mirsky1960symmetric}, to conclude that  
\begin{equation}
\label{eq:mirskys'}
\sum_{j = 1}^4|\lambda_j(\rho) - \lambda_j(\sigma)|\le \sum_{j = 1}^4 \lambda_j(\rho,\sigma) = \|B(\rho) - B(\sigma)\|_1,
\end{equation}
where $\lambda_j(\rho,\sigma)$ are the singular values of $B(\rho)-B(\sigma)$. Then as 
\begin{align}
\|B(\rho) - B(\sigma)\|_1 = & \left\|\sqrt \rho\, Y \sqrt{\bar \rho} - \sqrt \sigma Y \sqrt{\bar \sigma}\right\|_1\nonumber\\
= & \left\|(\sqrt \rho - \sqrt \sigma)Y\sqrt{\bar \rho} + \sqrt \sigma Y\left(\sqrt{\bar \rho} - \sqrt{\bar \sigma}\right)\right\|_1\nonumber\\
\le & \left\|(\sqrt \rho - \sqrt \sigma)Y\sqrt{\bar \rho}\right\|_1 + \left\|\sqrt \sigma Y\left(\sqrt{\bar \rho} - \sqrt{\bar \sigma}\right)\right\|_1,
\label{3.22}
\end{align}
applying H\"older's inequality, 
\begin{align}
\|B(\rho) - B(\sigma)\|_1 \le & \left\|\sqrt \rho -  \sqrt \sigma\right\|_2\left\|Y\sqrt{\bar \rho}\right\|_2 + \left\|\sqrt \sigma Y\right\|_2\left\|\sqrt{\bar \rho} - \sqrt{\bar \sigma}\right\|_2\nonumber\\
= & \left\|\sqrt \rho -  \sqrt \sigma\right\|_2\left\|\sqrt{\bar \rho}\right\|_2 + \left\|\sqrt \sigma \right\|_2\left\|\sqrt{\bar \rho} - \sqrt{\bar \sigma}\right\|_2\nonumber\\
= & \left\|\sqrt \rho -  \sqrt \sigma\right\|_2\sqrt{\Tr(\rho)} + \sqrt{\Tr(\sigma)}\left\|\sqrt{\bar \rho} - \sqrt{\bar \sigma}\right\|_2\nonumber\\
= & \left\|\sqrt \rho -  \sqrt \sigma\right\|_2 + \left\|\sqrt{\bar \rho} - \sqrt{\bar \sigma}\right\|_2\nonumber\\
= & 2\left\|\sqrt \rho -  \sqrt \sigma\right\|_2.
\label{3.27}
\end{align}
Finally, by the Powers-St\o rmer inequality \cite[Lemma~4.1]{powers1970free} we have
\begin{equation}
\label{3.22-1}
\left\|\sqrt \rho -  \sqrt \sigma\right\|_2^2\le \|\rho-\sigma\|_1.
\end{equation}
Combining \eqref{3.18}-\eqref{3.22-1}
yields \eqref{ineq}.\hfill \qed

\bigskip
\smallskip

\printbibliography

\end{document}